\documentclass[11pt]{article}

\usepackage[letterpaper,margin=1.0in]{geometry}
\usepackage{setspace}

\usepackage{microtype}

\usepackage{booktabs} 
\usepackage[ruled]{algorithm2e} 

\SetAlFnt{\small}
\SetAlCapFnt{\small}
\SetAlCapNameFnt{\small}
\SetAlCapHSkip{0pt}
\IncMargin{-\parindent}

\usepackage{amsfonts,amsmath,amsthm}
\usepackage{xcolor}
\usepackage[suppress]{color-edits}
\usepackage{url}
\usepackage{subcaption}
\usepackage{natbib}
\usepackage{graphicx}

\newtheorem{theorem}{Theorem}[section]
\newtheorem{lemma}[theorem]{Lemma}

\newtheorem{corollary}[theorem]{Corollary}
\newtheorem{definition}[theorem]{Definition}
\newtheorem{proposition}[theorem]{Proposition}

\newtheorem{remark}[theorem]{Remark}

\newcommand{\fitprob}{\pi}
\newcommand{\fitprobest}{\hat{\pi}}

\usepackage{xspace}
\newcommand{\term}[1]{\ensuremath{\mathtt{#1}}\xspace}
\newcommand{\xhdr}[1]{\vspace{2mm} \noindent{\bf #1}}
\newcommand{\mD}{\mathcal{D}}

\newcommand{\mM}{\mathcal{M}}
\newcommand{\mP}{\mathcal{P}}
\newcommand{\mX}{\mathcal{X}}
\newcommand{\mY}{\mathcal{Y}}
\newcommand{\oV}{\overline{V}}
\newcommand{\tmM}{\widetilde{\mathcal{M}}}
\newcommand{\E}{\operatornamewithlimits{\ensuremath{\mathbb{E}}}} 
\newcommand{\R}{\mathbb{R}} 
\newcommand{\N}{\mathbb{N}} 
\newcommand{\LDOTS}{\, ,\ \ldots\ ,}
\newcommand{\rbr}[1]{\left(\,#1\,\right)}
\newcommand{\sbr}[1]{\left[\,#1\,\right]}

\newcommand{\transU}{U^{\mathtt{tr}}}  
\newcommand{\priorF}{\mP^{\mathtt{F}}} 
\newcommand{\priorQ}{\mP^{\mathtt{Q}}} 
\newcommand{\distV}{\mD^{\mathtt{V}}}  
\newcommand{\learnedS}{S^{\mathtt{lnd}}} 
\newcommand{\hist}[1][\mM]{H^{#1}} 
\newcommand{\FOSD}{\term{FOSD}} 
\newcommand{\limP}{\mP^{\mathtt{lim}}} 

\newcommand{\conW}{U}               
\newcommand{\totW}{\texttt{Wel}}    
\newcommand{\Rev}{\texttt{Rev}}     
\newcommand{\CanEqm}{\Phi_{\mathtt{can}}} 

\addauthor{bl}{blue}
\addauthor{nsi}{red}
\addauthor{jmh}{purple}
\addauthor{mmm}{green}
\addauthor{spj}{orange}
\addauthor{as}{brown}
\addauthor{dmr}{pink}
\addauthor{dgg}{teal}

\addauthor{bl}{blue}
\addauthor{nsi}{red}
\addauthor{jmh}{purple}
\addauthor{mmm}{green}
\addauthor{spj}{orange}
\addauthor{as}{brown}
\addauthor{dmr}{pink}
\addauthor{dgg}{teal}

\setcitestyle{authoryear,round}

\title{Agentic Markets: Equilibrium Effects of Improving Consumer Search\thanks{We thank Salvador Candelas, Patrick Jordan, and Matthew Vogel for helpful discussions. We especially thank Salvador Candelas for noting connections with~\citet{choi2018consumer}.}}

\author{Brendan Lucier$^1$ \and Nicole Immorlica$^1$$^2$ \and Markus Mobius$^1$ \and Aleksandrs Slivkins$^1$ \and Daniel G.\ Goldstein$^1$ \and Jake M.\ Hofman$^1$ \and Sonia Jaffe$^1$ \and David M. Rothschild$^1$}
\date{%
    $^1$Microsoft Research\\%
    $^2$Yale University%
}

\begin{document}


\maketitle

\begin{abstract}
Motivated by
\emph{agentic markets} -- two-sided markets in which consumers and businesses are assisted by AI tools that facilitate consumers' search -- we study the impact of improved search technology on learning and welfare in
markets.
We put forth a model where consumers engage in costly search to acquire signals of product fit prior to purchase. The market tracks indications of fit for searched products and indications of quality for chosen products,
thereby guiding searches. We characterize the long-run steady-state of the resulting dynamics
as well as
the impact of improving search technology.  We find cheaper search improves learning and consumer surplus, whereas more informative search can degrade both {\em unless} the market learns as much as consumers about the products by, for example, ``reading the transcripts'' of agentic conversations.
Finally, we consider the impact of search improvements on how businesses set prices.
At equilibrium prices in symmetric markets, consumer surplus
is improved by cheaper search but may be decreased by more informative search, due to weakened inter-business competition.
\end{abstract}

\vspace{-3mm}

\thispagestyle{empty}

\section{Introduction}
\label{sec:introduction}

Recent advances in artificial intelligence are enabling a new class of \emph{agentic markets}: markets in which consumers and businesses delegate search, interaction, and even transactions to AI agents~\citep{rothschild2025agentic,hadfield2025economy,shahidi2025coasean}. 
Consumer-facing systems such as OpenAI's Operator (now ChatGPT agent mode) and Amazon Rufus already allow users to offload complex search and comparison tasks, while corresponding tools on the business side increasingly automate customer interaction. 
In these settings, AI agents do more than reduce the cost of clicking and browsing. 
They change how information is acquired, structured, and shared throughout the lifecycle of a transaction.

In particular, agentic interaction enables forms of inquiry that are difficult to support with today's rigid interfaces. 
Rather than relying on preset attributes, fixed forms and filters, or coarse categories, consumers can engage in unscripted, adaptive questioning to assess whether a product or service meets their needs. 
At the same time, agents can lower the cost of extracting, summarizing, and organizing information generated after transactions, such as consumer experiences and feedback. 
Together, these capabilities stand to make search both \emph{cheaper}---by enabling consumers to investigate more businesses with the same amount of effort---and \emph{more informative}---by allowing consumers to probe more detailed aspects of fit with any given business prior to transacting.

At the level of an individual transaction, these changes should be beneficial. 
Cheaper and more informative search helps consumers identify promising options and avoid poor matches. 
However, markets are not composed of isolated interactions. 
Outcomes emerge from the accumulation of many consumer-business interactions over time, as information generated by early searches and transactions shapes the beliefs and choices of those who arrive later. 
When the technology governing search and interaction changes, it may therefore alter not only individual decisions, but also how information aggregates at the market level, with consequences for learning, competition, and welfare that are not immediately obvious.

\xhdr{Running example.}
To see why, consider a consumer hiring a caterer for an event. 
The value of a transaction depends on two distinct components, which are shaped by what information can (and cannot) be uncovered by a given search technology.
The first consists of features that can be verified before a transaction takes place; we refer these as \emph{fit} features. With traditional interfaces, this might include atttributes such as cusine, dietary accommodations, or calendar availability. 
In principle, such features can be assessed before transacting, but in practice today's interfaces limit how many features can be efficiently checked in advance, often pushing more detailed assessment into costly back-and-forth conversations or even into the post-transaction phase. 
The second component consists of features that are harder to verify in advance and tend to be revealed or evaluated only after a transaction takes place; we refer to these as \emph{quality} features. In using traditional search to find a caterer, this might includes attributes  such as taste, execution, and reliability, which are typically learned through experience or not available as formal, searchable attributes \textit{ex-ante}.

Consumers therefore rely not only on their own investigation, but also on information generated by others. 
Early interactions produce signals through 
feedback
that guide later consumers' choices. 
Over time, some caterers become effectively well-understood by the market, while others may be effectively \emph{lost} if pessimistic beliefs discourage further exploration. 
Technologies that change what information is revealed before a transaction, and what information is recorded and shared afterward, can therefore influence which businesses are explored, how long exploration continues, how competition unfolds, and how products are priced. 

{Our contribution to this nascent literature is two-fold: the introduction of a tractable model that we hope can be the basis for future research on this topic, and an analysis that uses this model to study how agentic search reshapes these dynamics.}

\xhdr{Model.}
We develop a tractable model of sequential consumer search in which consumers investigate businesses to determine fit, generate feedback about quality through transactions, and the market aggregates this information over time. {We assume a satisficing model of utility, where each consumer has her own private value for any product that satisfies her needs~\citep{simon1956rational}.}
Our consumer search component is a special case of {that introduced by \citet{weitzman1978optimal}}
where costly investigation reveals payoff-relevant information and optimal policies induce an endogenous search order.  {Despite the resulting complex dynamics, we show that this model converges to a distribution over steady-states, even when businesses try to influence the dynamics through price-setting behavior.  This allows us to investigate the long-run learning outcomes and resulting consumer surplus in our model.}

\xhdr{Exogenous Prices.}
We begin by characterizing long-run learning outcomes under fixed, exogenous prices. 
In this setting, we show that lowering the cost of search robustly improves market-level learning and consumer welfare by encouraging exploration, increasing the set of businesses that become effectively learned rather than effectively lost. 
By contrast, we show that making search more informative---specifically by shifting more information about fit into the pre-purchase stage---can, counterintuitively, reduce learning and long-run consumer surplus.
The issue is that as search becomes more informative, failures are increasingly driven by idiosyncratic, consumer-specific fit mismatches rather than broadly relevant deficiencies. 
If the market cannot distinguish \emph{why} a business fails to pass inspection---specifically whether the failure reflects a niche, consumer-specific mismatch (e.g., a requirement for a vegan menu) or a broadly relevant deficiency (e.g., lacking event insurance)---and instead treats all failures identically, businesses with broad appeal are more likely to be prematurely abandoned, even though they would perform well for many consumers.

However, we also show that this negative effect of more informative agentic interactions disappears when another core capability of agentic technology is leveraged: the ability to cheaply extract, aggregate, and share rich information generated during search at scale. 
Concretely, we study a notion of \emph{transcripts}---the platform observes not only whether a business ``passed'' inspection, but (for example) which requirement caused it to fail. 
When such transcripts are observable to the market, increased informativeness becomes unambiguously beneficial for long-run learning and consumer welfare.

\xhdr{Endogenous Prices.}
We then analyze agentic markets when businesses are allowed to adjust prices endogenously in response to consumer search behavior. We show that the basic learning results continue to hold, and we characterize equilibrium pricing in symmetric markets, leveraging ideas from~\citet{choi2018consumer}. 
Cheaper search continues to benefit consumers by increasing effective competition and improving learning.
However, more informative search can reduce consumer surplus even when learning improves, by weakening competitive pressure among businesses. When richer pre-purchase screening means that each consumer considers only a small set of businesses as viable options, those businesses face less competition and can charge higher prices.
In the catering example, if detailed screening means that only a few caterers ``pass'' for a given event, those caterers face less competition for that customer, allowing prices to rise.

\xhdr{Implications and roadmap.}
Together, these results show that agentic interaction creates both new risks and new opportunities for markets. 
A detailed theoretical analysis of the capabilities enabled by agentic search reveals that long-run learning, competition, and welfare depend critically on how information generated during search and transactions is surfaced and aggregated { and whether and how businesses react through price-setting behavior}. 
These findings highlight the importance of market and platform design choices in determining whether agentic tools ultimately lead to more efficient and effective markets.

We present the model in Section~\ref{sec:model}, characterize long-run steady states and learning outcomes in Section~\ref{sec:convergence}, analyze the effects of search costs, informativeness, and transcripts in Section~\ref{sec:compare}, and study endogenous pricing and equilibrium competition in Section~\ref{sec:strategic.firms}. Omitted proofs appear in the Appendix.
\section{Related Work}
There is a rich recent literature on altering platform design to incentivize some desired behavior of platform participants, such as exploration
\citep[e.g.,][]{Che-13,Kremer-JPE14,Frazier-ec14,ICexploration-ec15} or effort
\citep[e.g.,][]{Ghosh-itcs13,horner2021motivational}. The platform may leverage direct recommendations \citep[e.g.,][]{Che-13,Kremer-JPE14}, consumer ratings and reviews \citep[e.g.,][]{horner2021motivational,Jieming-unbiased18}, or coupons/rebates \citep[e.g.,][]{Frazier-ec14,Kempe-colt18}. Whereas this literature typically assumes per-round agent decisions without costly information acquisition, we consider a more complex search process.%
\footnote{As notable exceptions, \citet{Bobby-Glen-ec16} and \citet{azevedo2020channel} design a market that coordinates costly information acquisition via an auction (while our platform just collects and disseminates consumer feedback).} In contrast with the design perspective typical of this literature, we take the system as given.

Our model of consumer search is a special case of \emph{Pandora's box problem} \citep{weitzman1978optimal}.%
\footnote{Weitzman's model is itself a (non-obvious) special case of Markovian multi-armed bandits~\citep{gittins1979bandit,whittle80}.}
Related models have been used to reason about search in the context of position auctions~\citep{athey2011position,chen2011paid}.  Our satisficing model of consumer utility is similar to theirs, although our search order is endogenized as in the original model of \citet{weitzman1978optimal}.
Many researchers have investigated variants of Weitzman's highly tractable model;
most relevantly, sequential inspections that further reveal the contents of a single box \citep{aouad2025pandora,gibbard2022model}.
Our results on informativeness of search are conceptually related, though with the notable difference that sequential inspections in our model do not impose additional cost but rather quanitfy the amount of information revealed by a single search action.

An important component of our setting is that consumers learn from the past consumers. Our setting therefore connects to the literature on social learning with myopic short-lived agents, most notably to the models in which both actions and outcomes of the past consumers are observed \citep[e.g.,][]{kannan2018smoothed,AcemogluMMO19,BSL-myopic23}. 
{In essence, the agents collectively face exploration-exploitation tradeoff, while individually favoring myopic decisions. This literature tends to focus on specific (and simple) models such as multi-armed bandits and studies asymptotic ``success" vs ``failure" of the learning dynamics.}%
\footnote{{The meaning of "success" is that the agents always converge to an optimal per-agent strategy. Much of this work is framed as the study of the ``greedy" bandit algorithm (an algorithm which does not explicitly explore).}}
{While both ``success" and ``failure" are possible, depending on the modeling choices, this literature does not address a more complex model such as ours.} 
{
A notable exception is~\citet{immorlica2019diversity} who, like the present work, study a market where consumers perform optimal search given posterior beliefs shaped by prior consumers' inspections. Their utility and information structures are different from ours, motivated by a focus on diversity as measured by variance of a consumer-idiosyncratic component of match value. Our model, by contrast, does not fully reveal quality upon inspection, and we study the impact of search efficacy as measured via cost and/or informativeness.
}
 
Other \emph{sequential social learning} models feature agents who learn from privately observed signals and the actions taken by past agents, from which they can partially infer other signals.
(See \citet{Golub-survey16} for a survey). 
The seminal papers of \citet{banerjee1992simple} and \citet{bikhchandani1992theory} introduce the issue of information cascades, showing that 
herds result 
when signals and/or action spaces are coarse.  
Subsequent work described a variety of conditions necessary for such results.
Most relevantly, these conditions concern
the richness of the action sets \citep{ali2018role,lee1993convergence}  and the  signal spaces \citep{smith2000pathological}, and costly signal acquisition  \citep{ali2018herding,burguet2000social,hendricks2012observational,
mueller2016social}.
Relative to these literatures, our action and signal spaces are different -- being motivated by sequential consumer search -- and we focus on quantifying and comparing what is learned in the long-run limit, rather than determining necessary/sufficient conditions for asymptotic success.

Topically, our results on endogenous pricing speak to the literature on consumer surplus in oligopolies with price competition, going back to \citet{bertrand1883review}.  Related papers have noted that costly search can drive consumer surplus to zero \citep{diamond1971model}.
Similar to our work in which firms have heterogeneous fit parameters for consumers, papers have considered the impact on pricing strategies when consumers have loyalty to (or are otherwise constrained to) subsets of firms and products \citep{babaioff2013bertrand,cremer1991mixed,shilony1977mixed}. As with these papers, we note that as these constraints grow, firms face less competition, and consumer surplus goes down.
Particularly related is the work of~\citet{choi2018consumer}, who analyze pure equilibria of seller prices under a Weitzman model of consumer search and study the relationship between search costs and market prices. Our model of consumer utility differs but, like their work, we focus primarily on symmetric markets and leverage similar techniques to characterize market equilibrium conditions via comparison with an adjusted market without search.

\section{Model}
\label{sec:model}

We consider a model of a two-sided market with sequential consumer arrivals. We incorporate individual consumer search as a variant of 
Weitzman's
Pandora's box problem and follow a multi-armed bandit-like feedback loop across consumers.

\subsection{Consumers and Businesses}

We have an infinite stream of \emph{consumers} $i\in\N$
and a fixed pool of $m$  \emph{businesses} $j\in[m]$. Each consumer $i$  can transact with at most one business $j$, at a fixed and known price $P_j \geq 0$.\footnote{A transaction corresponds to buying one unit of an indivisible product specific to this business (of which the business has an infinite supply). In Section~\ref{sec:strategic.firms} we extend our model to endogenous prices that can vary over time.}
The consumer has a need, and receives value $V_i \geq 0$ upon a transaction if this need is met. \emph{Transaction utility} (utility from the transaction, if any) is $\transU_i = V_i-P_j$ if the transaction happens and the need is met, and $\transU_i=0$ otherwise.

Whether a transaction with business $j$ meets the need of consumer $i$ is determined by two quantities: \emph{fit} $F_{ij}$ and \emph{quality realization} $q_{ij}$.  Both quantities are binary and depend on both the consumer and the business. The need is met if and only if $F_{ij}=q_{ij}=1$.

\begin{remark}
We think of $F_{ij}$ as being determined by hard facts about business $j$ that can be inspected in advance, such as the cuisine of a restaurant. Each $F_{ij}$ can be inspected at a fixed search cost $c\geq 0$, as detailed in Section~\ref{sec:prelims-search}. In contrast, $q_{ij}$ is only revealed upon transaction. 
\end{remark}

Each $F_{ij}$ (resp., each $q_{ij}$) is drawn at random, independently from everything else, with mean $\fitprob_j$ (resp., $Q_j$). We refer to $\fitprob_j$ and $Q_j$ as, resp., \emph{fit probability} and \emph{quality}. 
While $\fitprob_j,Q_j\in[0,1]$ by definition, we posit that $\fitprob_j\in(0,1)$, so that both realizations of the fit are feasible.

Once the fit $F_{ij}$ is inspected for consumer $i$ and business $j$, the expected transaction utility is
\begin{equation}\label{eq:consumer.utility}
\E\sbr{\transU_i\mid \text{transaction with $j$}; F_{ij}}
= \underbrace{V_i}_{\mbox{Value}} \times \underbrace{Q_{j}}_{\mbox{Quality}} \times \underbrace{F_{ij}}_{\mbox{Fit}} - \underbrace{P_j}_{\mbox{Price}}.
\end{equation}

Consumers are ex ante homogeneous: each consumer's value $V_i$ is drawn independently from some fixed distribution $\distV$. We assume $\distV$ is regular, admits a density function, and has bounded support contained in $[0,\oV]$.\footnote{A distribution with CDF $F$ and density function $f$ is regular if its virtual value function $\phi(v) = v - \tfrac{1-F(v)}{f(v)}$ is non-decreasing. The regularity assumption is used only when we endogenize prices in Section~\ref{sec:strategic.firms}.}

The consumers do not know the business parameters $\fitprob_j, Q_j$, $j\in[m]$, but know the (potentially heterogeneous) prior distributions $\priorF_j$ and $\priorQ_j$ from which they are independently drawn. 
The common prior over the business parameters is given by the tuple
$\mP := \rbr{\priorF_j, \priorQ_j:\; j\in[m]}$.

\begin{definition}
A \emph{market} $\mM$ is defined by 
business parameters $\rbr{\fitprob_j, Q_j:\; j\in[m]}$, 
prices $P_j$,
value distribution $\mathcal{D}^V$, 
common prior $\mP$, 
and search cost $c\geq 0$.
\end{definition}

\begin{remark}
All of our results will hold \textbf{pointwise for each market}, as defined above. That is, for each potential realization of business parameters $\fitprob_j, Q_j$. 
\end{remark}

\begin{remark}[Independence]\label{rem:model-independence}
Independence across businesses (both for the prior and the per-agent properties $F_{ij}, q_{ij}$) is crucial to our analysis, 
but independence of $q_{ij}$ and $F_{ij}$ can be relaxed.  If $q_{ij}$ and $F_{ij}$ are correlated then we can redefine $Q_j$ as $\E[q_{ij} \mid F_{ij}=1]$ and all results would continue to hold.
\end{remark}

\subsection{Market Timing and Consumer Search}
\label{sec:prelims-search}

The market evolves according to the following protocol:\footnote{Technical connections to Weitzman's Pandora's box problem for each consumer and multi-armed bandit-like dynamics across consumers are detailed in Appendix~\ref{app:connection}.}

\begin{itemize}
\item Time proceeds in rounds $i\in\N$. In each round $i$, consumer $i$ arrives, and leaves when this round is over. All businesses are present in all rounds.

\item Within a given round $i$, the consumer follows a sequential search process. At each step, she chooses some business $j$ to \emph{inspect}  (possibly depending on the outcome of the previous steps), paying a cost $c \geq 0$ to observe fit $F_{ij}$. Alternatively, she can stop the search  and choose (at most) one business to transact with, among those she has previously inspected.

\item After each round $i$, the following feedback is collected and added to the current history: (i) which businesses $j$ were inspected in this round and what was the fit $F_{ij}$; (ii) which business $j$ was transacted with, if any, and what was the quality realization $q_{ij}$.\footnote{In Section~\ref{sec:search.informativeness}, we consider an extension where some additional feedback is collected and shared.}

\item The current history (over all rounds so far) is then revealed to the next consumer.
\end{itemize}

The (total) utility of consumer $i$, denoted $U_i$, is defined as her transaction utility $\transU_i$ minus the total search cost.
Consumers are Bayesian-rational and seek to maximize expected utility $\E[U_i]$ given their beliefs. We assume consumers break ties in favor of not inspecting a business.

Fixing market $\mM$, a \emph{history} is a sequence $\hist_n=(h_1, h_2 \LDOTS h_n)$ for some $n\leq \infty$, where $h_i$ is the feedback collected in round $i$. Note that $\hist_n$ is a random variable determined by the $V_i$'s, $F_{ij}$'s, and $q_{ij}$'s.
Given a history realization $\hist_{i-1}=H$, the posterior beliefs of customer $i$ 
are denoted
    $\priorF_j(H) := P(\pi_j\mid \hist_{i-1}=H)$
and
    $\priorQ_j(H) := P(Q_j\mid \hist_{i-1}=H)$.
The respective \emph{posterior mean fit} and  \emph{posterior mean quality} are
    $\fitprobest_j(H) = \E\sbr{\fitprob_j\mid \hist_{i-1}=H}$
and
    $\hat{Q}_j(H) = \E\sbr{Q_j\mid \hist_{i-1}=H}$.
As a shorthand, we define
    $\priorF_{ij} := \priorF_j(\hist_{i-1})$ and
    $\priorQ_{ij} := \priorQ_j(\hist_{i-1})$
for the beliefs, and
    $\fitprobest_{ij} := \fitprobest_j(\hist_{i-1})$ and
    $\hat{Q}_{ij} := \hat{Q}_j(\hist_{i-1})$
for the posterior means, as random variables over all randomness in $\hist$.\footnote{$\priorF_j(H),\priorQ_j(H),\fitprobest_j(H)$, and $\hat{Q}_j(H)$ are deterministic objects, whereas $\priorF_{ij}$, $\priorQ_{ij}$, $\fitprobest_{ij}$, and $\hat{Q}_{ij}$ are random variables.}

\xhdr{Model Interpretation.} In practice, a market platform collects consumer feedback signals (such as a conversion from inspection to transaction as well as post-transaction feedback) and shares it with the future consumers. A consumer may inspect a business in various ways, from an old-fashioned phone call to browsing a webpage to having an AI agent browse the webpage to having an AI agent interact with business' AI agent. In our model the outcome of such investigation is binary, as well as the quality realization, and both are fully revealed to the platform. This choice focuses our analysis on the interaction between consumer search/transaction decisions and the information available in the marketplace, leaving implicit the process of interpreting signals of consumer feedback.

Inspecting a business can also be interpreted as a purchase funnel where a consumer passes through increasingly selective stages of screening before choosing to make a purchase. We think of $F_{ij}$ as indicating that the consumer has passed through all stages verifiable through the marketplace itself (e.g., by browsing a business webpage of interacting with an AI sales agent).%
\footnote{Directly observing the consumer's findings from this activity is not strictly necessary, as explained at the end of Section~\ref{sec:search}. 
} We will revisit this interpretation when defining the informativeness of search in Section~\ref{sec:search.informativeness}.

\subsection{Long-Term Outcomes}
\label{sec:outcomes}

We are interested in long-term outcomes of a given market, in terms of learning and consumer utility. Specifically, we focus on \emph{limit beliefs}
    $\lim_{i\to\infty}(\;\priorF_{ij},\priorQ_{ij}\;)$
and \emph{limit consumer utility}
    $\lim_{i\to\infty} \E[U_i]$.\footnote{The expectation $\E[U_i]$ is taken over the consumer beliefs, which depend on the history up to round $i$. We will establish in Section~\ref{sec:convergence} that these limits exist.}
We ask whether a business's parameters are learned in the following asymptotic sense:

\begin{definition}
Business $j$ is \emph{learned}, for a given history realization $\hist_\infty=H$, if
\[ (\;\priorF_{ij}(H_{i-1}),\;\priorQ_{ij}(H_{i-1})\;) \to (\pi_j,Q_j) \quad\text{in probability},\]
where $H_{i-1}$ is the prefix of $H$ visible to customer $i$. 
\end{definition}

{The set of all learned businesses} will be a key object in our results, with direct utility implications.

\section{Preliminaries: Optimal Consumer Search Policy}
\label{sec:search}

In this section we describe each consumer's optimal search policy given their beliefs. The consumer's problem is a special case of the Pandora's Box problem (\citep{weitzman1978optimal}, see Appendix~\ref{app:connection}) so the optimal search policy is an instance of Weitzman's algorithm. 
In our model, this algorithm simplifies to the simple search policy given in Algorithm~\ref{alg:index}. It inspects businesses greedily in the order given by the index $\sigma_{ij}$, which depends on the consumer beliefs only through the posterior means. The customer transacts with the first business with positive index that fits, if any. 

\begin{algorithm}[t]
	\SetAlgoNoLine\SetAlgoNoEnd
    Let $\sigma_{ij} := V_i \cdot \hat{Q}_{ij} - P_j - c/\fitprobest_{ij}$ be the \emph{index} for each business $j$\;
	\For{each business $j$ with $\sigma_{ij}>0$, in order of decreasing index $\sigma_{ij}$}
        {
        Pay $c$ to inspect fit $F_{ij}$\;
        If $F_{ij}=1$, transact with business $j$ and leave the market.
		}
	\caption{Optimal search policy for customer $i$}
	\label{alg:index}
\end{algorithm}

\begin{proposition}
\label{prop:optimal.search}
Algorithm~\ref{alg:index} maximizes $\E[U_i]$, expected consumer utility, over all search policies.
\end{proposition}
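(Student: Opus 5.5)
We plan to reduce consumer $i$'s problem, conditional on the history $\hist_{i-1}$ and her value $V_i$, to an instance of Weitzman's Pandora's box problem. We then show that Weitzman's reservation-value rule coincides with Algorithm~\ref{alg:index}.

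\textbf{Step 1: the boxes.} Fix $\hist_{i-1}=H$ and $V_i$. Within round $i$ no new public information arrives, so the consumer's beliefs about the $\fitprob_j$'s and $Q_j$'s are fixed during her search. Inspecting $j$ reveals $F_{ij}$. Her marginal belief is $\Pr[F_{ij}=1\mid H]=\fitprobest_{ij}$, since $F_{ij}$ is Bernoulli with mean $\fitprob_j$. By independence across businesses, both in the prior and in the draws, the fits of different businesses are independent conditional on $H$. The same independence means observing $F_{ij}$ tells her nothing about other businesses or about $Q_j$, because $q_{ij}$ and $F_{ij}$ are independent given the parameters. Consequently the expected transaction utility from transacting with an inspected $j$ is $V_i\hat Q_{ij}F_{ij}-P_j$ by \eqref{eq:consumer.utility}, with $Q_j$ replaced by its posterior mean by linearity.

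Box $j$ therefore has cost $c$ and a random prize $X_j$ with
\[
X_j=\begin{cases} V_i\hat Q_{ij}-P_j & \text{w.p. } \fitprobest_{ij},\\ -P_j & \text{otherwise.}\end{cases}
\]
Her outside option of not transacting is worth $0$, and she may transact with at most one inspected business. We can truncate each prize at $0$: an opened box with a negative prize is never chosen over the outside option. The effective prize is then $\max(X_j,0)$, a two-point distribution on $\{0,\,(V_i\hat Q_{ij}-P_j)^+\}$. This is exactly Weitzman's setting, so by \citet{weitzman1978optimal} the optimal policy is his index policy.

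\textbf{Step 2: the reservation value.} Weitzman's reservation value $r_j$ solves $c=\E[(\max(X_j,0)-r_j)^+]$. If $a:=V_i\hat Q_{ij}-P_j>0$ and $r_j\in[0,a]$, this reads $c=\fitprobest_{ij}(a-r_j)$, so $r_j=a-c/\fitprobest_{ij}=\sigma_{ij}$. If $\sigma_{ij}\le 0$, then $\E[\max(X_j,0)]-c\le 0$ at the outside-option level $0$, and the box is never worth opening.

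Weitzman's rule opens boxes in decreasing order of $r_j$. It stops once the best realized prize is at least the highest remaining reservation value, or when no remaining $r_j$ exceeds the outside option $0$. In our two-point case the rule simplifies. If an opened box has $F_{ij}=1$, its prize is $a_j\ge\sigma_{j}\ge\sigma_{k}$ for every unopened $k$, so she stops and transacts. If $F_{ij}=0$, its prize is $0$ and she continues. This is Algorithm~\ref{alg:index}. The tie-breaking convention, that she does not inspect when indifferent, matches the strict inequality $\sigma_{ij}>0$.

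\textbf{Main obstacle.} The main obstacle is Step 1: justifying that the dynamic within-round problem really is a Pandora instance with independent boxes and fixed beliefs. Two points need care. First, inspecting $j$ updates beliefs only about $\fitprob_j$, which is irrelevant for the current consumer once $F_{ij}$ is realized. Second, by Remark~\ref{rem:model-independence}, the relevant quality is the one conditional on fit. We also handle boundary cases, such as $a\le 0$ or ties in the index, by noting that they do not affect expected utility.
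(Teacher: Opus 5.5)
Your proposal is correct and follows the paper's route. You cast consumer $i$'s within-round problem as a Weitzman Pandora's box instance and solve $c=\E[(\cdot-\sigma_{ij})^+]$ for the two-point fit distribution, which gives $\sigma_{ij}=V_i\hat{Q}_{ij}-P_j-c/\fitprobest_{ij}$. Your version is more complete than the paper's one-line derivation: you justify conditional independence of the boxes given the history, and you truncate prizes at the outside option so that boxes with $\sigma_{ij}\le 0$ are correctly excluded (the paper's simplification tacitly assumes a nonnegative index). You also check that Weitzman's stopping rule reduces to transacting at the first fit.
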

\begin{proof}
Weitzman's algorithm \citep{weitzman1978optimal}, a known optimal solution to the Pandora's Box problem, inspects businesses according to the index $\sigma_{ij}$, in the same sense as in Algorithm~\ref{alg:index}, but for a more generic notion of index. Namely {(fixing the history $\hist_{i-1}$),} the index is given by
    \[ c = \E\sbr{\rbr{U_{ij}-\sigma_{ij}}^+ {\mid \hist_{i-1}}}, \]
where, $U_{ij}$ is the quantity in \eqref{eq:consumer.utility} and the expectation on the right-hand side is over the fit $F_{ij}$. 
    Since $U_{ij} \leq 0$ whenever $F_{ij}=0$, this expression simplifies to
    \[ c = \fitprobest_{ij}(\; V_i\, \hat{Q}_{ij}  - P_j - \sigma_{ij}\;), \]
    since the quantity in the parentheses is the expectation of $U_{ij}-\sigma_{ij}$ conditional on $F_{ij}=1$, and $\fitprobest_{ij}$ is the conditional probability that $F_{ij}=1$.  Rearranging yields index $\sigma_{ij}$ as in Algorithm~\ref{alg:index}.
\end{proof}

An important feature of Algorithm~\ref{alg:index} is that, once a given consumer inspects some business $j$, she transacts with this business if and only if $F_{ij}=1$. It follows that the conditional probability of a transaction given inspection is $\fitprob_j$, independent of the consumer identity and the market history. It also follows that even if a market platform cannot observe the inspection outcomes $F_{ij}$ directly, these can be inferred from a consumer's inspection and transaction choices.

\section{Feedback and Learning}
\label{sec:convergence}

In this section we study the evolution of the market history and consumer beliefs across rounds.
We will show that the market converges to a steady-state where some businesses are learned and
the rest are never again searched. We then characterize the distribution over steady-states including the probability that each business is learned. Finally, we show that expected consumer utility converges over time to the utility of a consumer with steady-state beliefs.

To understand how posterior beliefs evolve over time, consider their impact on the index rule used in Algorithm~\ref{alg:index}.
For arbitrary posterior mean estimates $\fitprobest_j$ and $\hat{Q}_j$, let $\sigma_{j}(\fitprobest_j, \hat{Q}_j, V) := V\hat{Q}_j - P_j - c/\fitprobest_j$ be the index assigned to business $j$ by a consumer with value $V$ and posterior mean estimates $\fitprobest_j$ and $\hat{Q}_j$.
One thing that might happen in history $\hist$ is that, after some round $i$, $\fitprobest_j(\hist_i)$ and $\hat{Q}_j(\hist_i)$ fall low enough that $\sigma_{j}(\fitprobest_j(\hist_i), \hat{Q}_j(\hist_i), \overline{V}) \leq 0$. If this happens, even the highest-valued consumers have non-positive search index for business $j$, so the business would never be inspected after round $i$ and its associated posterior distribution would never update. It turns out that this is the only barrier to learning: we show that as long as $\sigma_{j}(\fitprobest_j(\hist_i), \hat{Q}_j(\hist_i), \overline{V}) > 0$ for all $i$,
the posterior distributions for business $j$ will converge in expectation to  ground truth.

\begin{proposition}\label{prop:market.learning}
Fix a market $\mM$ and business $j$. With probability $1$ over randomness in history $\hist$, the following happens. The posterior beliefs $(\priorF_{ij},\,\priorQ_{ij})$ converge in probability {to some $\limP\in \Delta(\R)\times \Delta(\R)$} as $i \to \infty$. Moreover, either business $j$ is learned, in which case $(\priorF_{ij},\,\priorQ_{ij})$ converge to $(\fitprob_j, Q_j)$, or else
$\sigma_{j}(\fitprobest_{ij},\,\hat{Q}_{ij},\, \overline{V}) \leq 0$
for some consumer $i$ (and every consumer afterwards), in which case we say that business $j$ is \emph{lost}.
\end{proposition}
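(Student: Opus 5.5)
Our plan is to combine martingale convergence of posteriors with a counting argument on how often business $j$ is inspected and transacted with. First, convergence of beliefs. For any bounded measurable test function $g$, the sequence $\E[g(\fitprob_j)\mid \hist_{i-1}]$ is a bounded martingale in $i$ with respect to the filtration generated by the history. By Lévy's upward theorem it converges almost surely to $\E[g(\fitprob_j)\mid \hist_\infty]$; the same holds for $Q_j$. Taking $g$ over a countable convergence-determining class (e.g.\ $g(x)=x^k$, $k\in\N$, on $[0,1]$), we obtain almost-sure weak convergence of $(\priorF_{ij},\priorQ_{ij})$ to a limit pair $\limP$, namely the posterior given the infinite history. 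This settles the first claim.

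Second, the dichotomy. We split on the event $A_j$ that business $j$ is inspected infinitely often. On $A_j$, Algorithm~\ref{alg:index} transacts with $j$ exactly when $F_{ij}=1$. Conditional on inspection, these fit outcomes are i.i.d.\ Bernoulli$(\fitprob_j)$. The reason is that the decision to inspect is measurable with respect to $\hist_{i-1}$, $V_i$ and fit outcomes of other businesses, all independent of $F_{ij}$; an optional-skipping argument then applies. By the strong law, infinitely many fit observations identify $\fitprob_j>0$, so infinitely many transactions occur. By the same optional-skipping argument, the quality realizations $q_{ij}$ at those transactions are i.i.d.\ Bernoulli$(Q_j)$. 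Doob's consistency theorem (or, more directly, the fact that $\fitprob_j$ and $Q_j$ are $\hist_\infty$-measurable as almost-sure limits of empirical frequencies) then shows $\limP=(\delta_{\fitprob_j},\delta_{Q_j})$, so $j$ is learned. Because the claim must hold pointwise per market, we phrase this via empirical frequencies: the posterior given $n$ i.i.d.\ Bernoulli observations concentrates at the truth whenever the true parameter lies in the support of the prior. We would need to note or assume this support condition.

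Off $A_j$, there is a last round $i_0$ at which $j$ is inspected; afterwards $j$'s posterior is frozen at some $(\fitprobest,\hat Q)$. We must show $\sigma_j(\fitprobest,\hat Q,\oV)\le 0$. Suppose instead $\sigma_j(\fitprobest,\hat Q,\oV)>0$. By continuity in $V$ and the density assumption, there is a set of values of positive probability $\delta$ with $\sigma_{ij}>0$. A consumer with such a value and with every other business failing fit inspects $j$. Other businesses' posteriors also converge, so for the other businesses the probability of all failing fit is bounded below by $\prod_k(1-\fitprob_k)>0$ (any non-inspected business trivially "fails"). Hence in each round after $i_0$, $j$ is inspected with conditional probability at least some $\epsilon>0$. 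By Lévy's conditional Borel–Cantelli lemma, $j$ is then inspected infinitely often, a contradiction. Hence $\sigma_j\le 0$ from $i_0$ on, and since beliefs are frozen it stays so; ties resolve toward not inspecting, which is consistent with this conclusion.

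We expect the main obstacle to be the lower bound $\epsilon$ in the last step. Making it uniform is delicate when $\sigma_j(\cdot,\oV)$ is only barely positive: we need values $V$ near $\oV$ to carry positive mass, which holds because $\oV$ is taken as the top of the support. We also need $j$ to rank above any business that would fit, which we handle by conditioning on those businesses not fitting, an event of probability bounded below. A secondary subtlety is the pointwise-in-market consistency in the learned case, which relies on the truth lying in the prior support.
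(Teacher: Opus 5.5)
Your proposal is correct and follows the paper's proof. You split on whether $j$ is inspected infinitely often, use the i.i.d.\ Bernoulli fit and quality signals for consistency in the infinite case, and show that a frozen posterior with $\sigma_j(\fitprobest,\hat{Q},\oV)>0$ would be re-inspected with conditional probability bounded away from zero in every round (your conditional Borel--Cantelli step formalizes the paper's ``inspected at or after round $i$ with probability $1$''); you also make explicit two assumptions the paper uses only implicitly, namely that the true parameters lie in the prior support and that $\oV$ is the top of the support of $\distV$.

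One minor caveat: your opening L\'evy-martingale step is a martingale only under the prior-mixture measure, so it gives convergence only for prior-almost-every market. This is harmless, because your case split (frozen beliefs, or concentration at the truth) already yields convergence pointwise in each branch.
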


Thus, posterior beliefs converge to {limit beliefs $\limP := (\limP_j:\,j\in[m])$,} a steady-state of a particular form. In every execution of the market dynamics, each business is either learned or lost. Learned businesses are inspected and transacted with infinitely often, whereas lost businesses are never inspected after some finite time.

The first step in the proof of Proposition~\ref{prop:market.learning} is Lemma~\ref{lem:indep.signals} below. It states that the posterior beliefs $(\priorF_{ij},\,\priorQ_{ij})$
depends only on the fit and quality realizations from business $j$ in $\hist_{i-1}$, and not any other information in the history.

\begin{definition}
    For a history realization $H$, the \emph{projection of $H$ onto business $j$}, written $H|_j$, contains all the fit and quality realizations for business $j$ present in $H$. The projection $H|_j$ preserves the order in which these data appear in $H$, but not the specific rounds from $H$ in which they were observed.
\end{definition}

We can think of $H|_j$ as two sequences of binary variables: one for fit observations (from inspections) and one for quality realizations (from feedback).  Element $r$ of the former (resp., latter) corresponds to the $r$th time business $j$ was inspected (resp., transacted with).

\begin{lemma}
\label{lem:indep.signals}
For any finite history realization $\hist_{i-1} = H$ of market $\mM$ and any business $j$, 
posterior beliefs $\priorF_{ij}(H)$ and $\priorQ_{ij}(H)$ are determined by $H|_j$.
\end{lemma}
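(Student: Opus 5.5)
The plan is to write the likelihood of a history realization $H$ as a function of all business parameters, and show that it factors into a term depending on $(\fitprob_j,Q_j)$ only through $H|_j$, times a term free of $(\fitprob_j,Q_j)$. Since the prior $\mP$ is a product across businesses and across the two components, Bayes' rule then gives posteriors for business $j$ that depend only on $H|_j$. Concretely, the posterior is proportional to $\priorF_j(\fitprob_j)\priorQ_j(Q_j)\,\fitprob_j^{a}(1-\fitprob_j)^{b}\,Q_j^{s}(1-Q_j)^{t}$, where $a,b$ count fit successes and failures and $s,t$ count quality successes and failures for $j$. This is even a function of counts, and it separates into independent $\fitprob_j$ and $Q_j$ parts.

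\textbf{Step 1: the likelihood of a single round.} Fix the history $\hist_{k-1}=H_{k-1}$ before round $k$. The feedback $h_k$ is produced in three stages.
\begin{itemize}
\item The consumer draws $V_k\sim\distV$, which is independent of all parameters.
\item She applies Algorithm~\ref{alg:index} (Proposition~\ref{prop:optimal.search}). The inspection order and stopping rule are deterministic functions of $V_k$ and the posterior means given $H_{k-1}$. These means are fixed numbers given $H_{k-1}$, not functions of the true parameters, although the order is read off from $h_k$ itself.
\item Each inspected fit $F_{kj'}$ is Bernoulli$(\fitprob_{j'})$, and the quality of the transacted business is Bernoulli$(Q_{j'})$, all independent.
\end{itemize}
So $P(h_k\mid H_{k-1},\text{params})$ equals $\int \mathbf{1}[\text{policy}(V,H_{k-1})\text{ consistent with }h_k]\,d\distV(V)$ times $\prod$ of Bernoulli factors, one for each observation recorded in $h_k$. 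The indicator integral depends only on $H_{k-1}$ and $h_k$, never on the true parameters. This is the crucial point: selection of which business is inspected is driven by beliefs, not truth, which is the standard ``ignorability'' argument from bandits.

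\textbf{Step 2: chain rule over rounds.} Multiply the single-round factors over $k\le i-1$. The indicator integrals form a factor $C(H)$ that is independent of all parameters. Tie-breaking is deterministic, so no issue arises there. The remaining Bernoulli factors group by business. The factor for business $j$ is $\fitprob_j^{a}(1-\fitprob_j)^{b}Q_j^{s}(1-Q_j)^{t}$, which is determined by $H|_j$. The factors for other businesses do not involve $(\fitprob_j,Q_j)$.

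\textbf{Step 3: conclude.} Combine this with the product prior, marginalize over the other businesses' parameters, and normalize. Both $C(H)$ and the other businesses' factors cancel.

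The main obstacle is making Step 1 rigorous. One must argue that $C(H)$ is parameter-free even though the policy depends on posterior means, which are themselves computed from past data. Induction on $k$ handles this cleanly: conditional on $H_{k-1}$, these means are constants. One must also handle measurability when $C(H)=0$. Such histories have probability zero, and the lemma is about realizable $H$, so posteriors are defined on histories of positive probability.
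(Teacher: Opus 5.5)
Your proof is correct and rests on the same key observation as the paper's. The inspection order and stopping rule depend only on $V_i$, which is independent of the parameters, and on the public posterior means, so the selection carries no information about $(\fitprob_j,Q_j)$; by independence across businesses, only business $j$'s own fit and quality outcomes then enter its posterior. The paper argues this round by round, whereas your global factorization $C(H)\prod_{j'}(\cdot)$ makes explicit that the joint posterior remains a product across businesses---a fact the paper's one-step update uses implicitly.
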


Given Lemma~\ref{lem:indep.signals}, Proposition~\ref{prop:market.learning} follows by noting that the marginal distribution of projection $H|_j$ is a sequence of independent Bernoulli random variables with mean $\fitprob_j$, and another sequence with mean $Q_j$. If $j$ is not lost then, due our assumption that all fit probabilities are strictly less than $1$, business $j$ will be inspected infinitely often. In this case the sequences in $H|_j$ are infinite and hence the associated posterior beliefs will converge in probability to their respective true means.

Proposition~\ref{prop:market.learning} implies that whether business $j$ is learned or lost
depends only on the progression of its posterior mean estimates, which
is driven by the marginal distribution over $H|_j$.

\begin{proposition}
\label{prop:lost}
Fix any market $\mM$. Over all randomness in $\hist$, each business $j$ is lost independently with a probability $\omega_j \in [0,1]$ that depends only on $\fitprob_j$, $Q_j$, $c$, $P_j$ and $\overline{V}$.  Probability $\omega_j$ is weakly increasing in $c$ and $P_j$ and weakly decreasing in $\fitprob_j$, $Q_j$, and $\overline{V}$.  Fixing all other parameters, $\omega_j \to 0$ as $c \to 0$.
\end{proposition}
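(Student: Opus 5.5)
The plan is to reduce everything to a single-business random process by coupling. By Lemma~\ref{lem:indep.signals}, the posterior $(\priorF_{ij},\priorQ_{ij})$ depends only on the projection $H|_j$. By Proposition~\ref{prop:market.learning}, $j$ is lost exactly when the lost condition $\sigma_j(\fitprobest_j,\hat{Q}_j,\oV)\le 0$ first holds along the sequence of posteriors generated by $H|_j$. I would construct, for each business $j$, two independent i.i.d.\ Bernoulli streams: a fit stream with mean $\fitprob_j$ and a quality stream with mean $Q_j$. The $r$th inspection of $j$ reads the $r$th fit bit, and the $r$th transaction reads the $r$th quality bit. This reproduces the law of the market, and the streams are independent across $j$.

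Next I would make the stopping rule internal to the streams. Consider the deterministic sequence of posterior states obtained by revealing the streams in the order in which the market would reveal them. I would argue that, as long as the lost condition fails, $j$ keeps being inspected. Here I need some care: each consumer has positive probability of inspecting $j$, since values near $\oV$ occur with positive density and $\fitprob_j<1$ is handled by Proposition~\ref{prop:market.learning}. The subtle point is that the interleaving of fit and quality updates (a quality bit arrives right after a fit bit equal to $1$) is determined by the streams themselves. Hence the event ``$j$ is lost'' equals the event that the posterior path driven by the streams ever hits $\{\sigma_j(\cdot,\cdot,\oV)\le 0\}$. That event is a function only of $j$'s streams, the prior $\mP_j$, $P_j$, $c$ and $\oV$, so the events are independent across $j$ with probability $\omega_j$. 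I note that the prior is implicit in the dependence.

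For monotonicity in $c$, $P_j$ and $\oV$, I would fix the stream realization. Because $\sigma_j(\fitprobest,\hat Q,\oV)=\oV\hat Q-P_j-c/\fitprobest$ is monotone in each of these parameters and the posterior path does not depend on them, the hitting event is monotone pointwise, and $\omega_j$ inherits the monotonicity. For $\fitprob_j$ and $Q_j$, I would couple the streams by a threshold construction: set the $r$th fit bit to $\mathbf{1}[u_r<\fitprob_j]$ with uniform $u_r$, and define the quality bits similarly. Raising $\fitprob_j$ then raises bits pointwise. Bayesian posterior means under independent priors for Bernoulli data are monotone in observed successes, i.e.\ more $1$s give a weakly higher mean (I would justify this via the MLR property of the Bernoulli likelihood), so the path stays pointwise higher. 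This is the main obstacle. Extra fit successes also change \emph{when} quality bits are read, because more transactions occur, so the quality posterior is indexed differently in time. I would handle this by comparing the two paths at the level of counts: after $n$ inspections, the higher-$\fitprob$ path has weakly more fit successes and has read weakly more quality bits. I would also need to check that $\hat Q$ after reading more bits of the same stream is not lower at the moments that matter. If this fails pointwise, I would instead compare via the event ``ever hit'', evaluating both paths at corresponding inspection counts.

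Finally, for $c\to 0$, I would observe that at $c=0$ the index is $\oV\hat Q-P_j$. Along any path the lost event requires hitting a region where $\fitprobest\le c/(\oV\hat Q-P_j)$ or $\hat Q$ is low. As $c\to0$, the lost events decrease to the event that the path ever reaches $\hat Q\le P_j/\oV$ or $\fitprobest\approx0$. Since $\fitprobest$ stays bounded away from $0$ almost surely (posterior means along an infinite stream converge to $\fitprob_j>0$ and have positive infimum), the fit part vanishes. Here I would additionally need $\oV Q_j>P_j$ with nondegenerate priors, or else interpret the statement under the convention that the lost condition becomes vacuous at $c=0$. Continuity from above of probability then gives $\omega_j\to 0$.
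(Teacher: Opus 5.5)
Your skeleton matches the paper's route: per-business i.i.d.\ streams, ``lost'' identified with the event that the stream-driven posterior path ever enters $\{\sigma_j(\cdot,\cdot,\oV)\le 0\}$, independence across $j$, pointwise monotonicity in $c$, $P_j$, $\oV$, and the threshold coupling for $Q_j$ (where the fit stream, hence the transaction count, is unchanged). It is in fact more careful than the paper, which handles $\fitprob_j,Q_j$ only by asserting that posterior means are ``positively correlated'' with the truth.

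The genuine gap is the $\fitprob_j$ step, which you leave open. Your fallback of comparing the paths at corresponding \emph{inspection} counts cannot work. At the same inspection count, the higher-$\fitprob$ path may have read extra quality bits that are failures, so its $\hat{Q}$ can be strictly lower. It can then hit the lost region at a time when the other path's index is positive. Nor is it true that reading more bits of the same quality stream keeps $\hat{Q}$ from falling.

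The missing idea is to align the paths by \emph{transaction} count. With independent priors, the posterior on $\fitprob_j$ depends only on (inspections, fits), and the posterior on $Q_j$ only on (transactions, quality successes). Suppose the path with $\fitprob'>\fitprob$ hits at inspection $t$ after $k$ transactions. The coupled $\fitprob$-path has at most $k$ transactions at $t$. Since $\fitprob>0$, it a.s.\ gains transactions one at a time without bound. So either it is lost earlier, or there is a first $s\ge t$ at which it has exactly $k$ transactions. At that point both paths have read the same first $k$ quality bits, so $\hat{Q}$ coincides. The lower path's fit posterior, however, rests on $k$ successes in $s\ge t$ trials rather than $t$ trials. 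By your MLR argument applied to extra failures, its $\fitprobest$ is weakly smaller. Hence its index at $s$ is at most the other path's index at $t$, which is $\le 0$. This gives lost$(\fitprob')\subseteq$ lost$(\fitprob)$ pointwise.

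On $c\to 0$, you are right that the clause fails in general when $P_j>0$. The limiting event is $\{\exists t:\ \oV\hat{Q}_t\le P_j\}$; for example, if $\oV Q_j<P_j$ then $\omega_j=1$ for every $c$. The paper's proof obtains the clause only because it writes the index as $\oV\hat{Q}-c/\fitprobest$, silently dropping $P_j$.

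Your proposed repair, ``$\oV Q_j>P_j$ with nondegenerate priors,'' is still not enough. Take a uniform prior on $Q_j$, $P_j=0.4\,\oV$ and $Q_j=0.9$. The business is searched initially for small $c$, and the first transaction fails with probability $0.1$, moving $\hat{Q}$ to $1/3<0.4$. Hence $\liminf_{c\to 0}\omega_j\ge 0.1$.

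The correct hypothesis is $P_j=0$ with $Q_j>0$, or more generally a prior on $Q_j$ whose reachable posterior means stay above $P_j/\oV$. Under it, the lost event is $\{\exists t:\ \oV\hat{Q}_t\fitprobest_t\le c\}$ on the unstopped path. Its terms are positive and converge to $\oV Q_j\fitprob_j>0$, so the infimum is a.s.\ positive and your continuity-from-above argument goes through. The ``vacuous at $c=0$'' convention is not a fix, since at $c=0$ the condition $\oV\hat{Q}\le P_j$ remains.
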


Write $\learnedS(\mM)$ for the set of businesses learned in market $\mM$; call it the \emph{learned set}. Note that it is a random variable, with randomness coming from the history. Since the precise posterior beliefs for lost businesses are not payoff-relevant to any consumer or business (as they are never searched nor transacted with), $\learnedS(\mM)$
effectively determines the steady-state to which the market converges.

\xhdr{Utility implications.}
Given a set of businesses $S \subseteq [m]$, consider a
hypothetical consumer who (a) knows the features of all businesses in $S$ (i.e., their posterior beliefs place probability $1$ on the correct values $(\pi_j, Q_j)$) and (b) is not allowed to transact with any business outside $S$.  We write $U^*(S)$ to denote such a consumer's optimal expected utility; that is, the total utility obtained by following the optimal search policy, Algorithm~\ref{alg:index}, on set of businesses $S$.
This is a static comparator: it is defined for a fixed subset of businesses, independently of the learning dynamics. 

We show that expected consumer utility converges to this comparator, evaluated at $S = \learnedS(\mM)$.
{More precisely, we consider $\E\sbr{U_i \mid \hist_{i-1}}$, the expected total utility for consumer $i$ is conditional on the history up to round $i$.}

\begin{lemma}\label{lem:util.converge}
$\E\sbr{U_i \mid \hist_{i-1}} \to U^*(\learnedS(\mM))$ {as $i\to\infty$},
almost surely over random history $\hist$.
\end{lemma}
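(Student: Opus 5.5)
The approach is to fix a history realization in the probability-one event of Proposition~\ref{prop:market.learning}, and along it compare consumer $i$'s conditional expected utility with that of the idealized consumer defining $U^*(\learnedS(\mM))$. Write $S=\learnedS(\mM)$. On this event every business is either learned or lost. Since there are finitely many businesses, there is a finite round $i_0$ after which every lost business $j\notin S$ satisfies $\sigma_j(\fitprobest_{ij},\hat Q_{ij},\oV)\le 0$. Hence for $i>i_0$ and every value $V_i\le \oV$ (with ties broken against inspection), Algorithm~\ref{alg:index} never inspects any $j\notin S$. Consumer $i$ then runs Algorithm~\ref{alg:index} restricted to $S$, using indices built from the posterior means $(\fitprobest_{ij},\hat Q_{ij})_{j\in S}$.

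The next step is to pass from convergence of beliefs to convergence of the means that enter the index. For $j\in S$, the posteriors $\priorF_{ij}$ and $\priorQ_{ij}$ concentrate on $\fitprob_j$ and $Q_j$, all supported in $[0,1]$. Bounded support makes weak convergence to a point mass imply $\fitprobest_{ij}\to\fitprob_j$ and $\hat Q_{ij}\to Q_j$. Because $\fitprob_j>0$, the map $c/\fitprobest_{ij}$ also converges, so the indices $\sigma_{ij}\to\sigma_j(\fitprob_j,Q_j,V_i)$ uniformly in $V_i\in[0,\oV]$.

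The key identity is that, conditional on $\hist_{i-1}$, the actual expected utility of consumer $i$ is the true expected utility of the (index-based) policy she runs, evaluated under the true parameters. This holds because her future fits and qualities are drawn from the true $\fitprob_j,Q_j$. The true expected utility of running Algorithm~\ref{alg:index} on $S$ with a given ordering and participation set is
\[ G(\text{order},\text{set};V)=\sum_{k}\Big(\prod_{l<k}(1-\fitprob_{j_l})\Big)\big(\fitprob_{j_k}(VQ_{j_k}-P_{j_k})-c\big), \]
which is a fixed finite expression. The ideal consumer attains $U^*(S)=\E_V[\max G]$ by Proposition~\ref{prop:optimal.search}, using the true indices.

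The main obstacle is that the induced policy is a discontinuous function of the estimates: small errors can swap the order of two businesses or flip the sign of an index. I would handle this by an $\epsilon$-argument. Fix $\delta>0$ and let $B_\delta$ be the set of $V$ for which some pair of true indices on $S$ lies within $\delta$ of each other, or some true index lies within $\delta$ of $0$. Each true index is affine in $V$, and no two of them coincide identically unless they are equal for all $V$. Equal-index ties leave $G$ unchanged under either order, since the Weitzman order among equal indices is payoff-irrelevant; I would argue this directly via an interchange argument on adjacent boxes. So, apart from such identical ties, $B_\delta$ is a finite union of intervals of total length $O(\delta)$, and its probability under $\distV$ tends to $0$ as $\delta\to 0$ because $\distV$ has a density.

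For $i$ large, the estimated indices are within $\delta/2$ of the true ones uniformly in $V$. Outside $B_\delta$ the induced order and participation set therefore coincide with the optimal ones, and the utility equals the optimal $G$. On $B_\delta$ the utility difference is bounded by $\oV+mc$. Thus
\[ \limsup_i \big|\E[U_i\mid\hist_{i-1}]-U^*(S)\big|\le(\oV+mc)\Pr[V\in B_\delta], \]
and letting $\delta\to0$ gives the claim almost surely.
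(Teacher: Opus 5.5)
Your proof is correct, but it takes a different route from the paper's. The paper never shows that the policy consumer $i$ actually runs eventually coincides with the optimal one. Instead it uses a value-continuity sandwich. For each fixed search order $\rho$ over subsets of $S$, the value $g_\rho$ of running $\rho$ is a continuous function of the posterior means. The believed value $\tilde U=\max_\rho g_\rho(\hat\theta_i)$ is a maximum of finitely many continuous functions, so it is continuous and $\tilde U\to\max_\rho g_\rho(\theta)=U^*(S)$. The true value of the order she picks differs from its believed value by at most $\max_\rho|g_\rho(\hat\theta_i)-g_\rho(\theta)|\to 0$.

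This sandwich holds pointwise (indeed uniformly) in $V$. It therefore sidesteps the discontinuity of the induced policy entirely: it needs no density for $\distV$, no interchange argument for tied indices, and no exceptional set $B_\delta$. Your approach proves something more concrete: the plug-in policy is eventually exactly optimal outside a set of values of vanishing probability. The price is relying on the density assumption and on a careful tie analysis.

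On that tie analysis, there is one degenerate case you do not cover. A learned business can have true index identically $0$. This happens when $c=0$, $P_j=0$, $Q_j=0$ and the prior on $Q_j$ is not a point mass at $0$: then $\hat Q_{ij}>0$ forever, so $j$ is never lost. In that case every $V$ lies in your $B_\delta$, and your bound is vacuous. The fix is to exempt such businesses just as you exempt identical ties. Outside the rest of $B_\delta$, the estimated index of such a $j$ lies within $\delta/2$ of $0$, below every truly positive index. So $j$ is inspected, if at all, after all participating businesses, and it contributes $\pi_j(V\cdot 0-0)-0=0$ to $G$.

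Also, your crude bound on $B_\delta$ should account for prices, e.g.\ $\oV+mc+\max_j P_j$ rather than $\oV+mc$. This is harmless to the argument.
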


The notion of convergence in Lemma~\ref{lem:util.converge} is as a sequence of real numbers: fixing history $\hist$, {the comparator} $U^*(\learnedS(\mM))$ is a fixed real value, as is each $\E\sbr{U_i {\mid \hist_{i-1}}}$.
Thus, Lemma~\ref{lem:util.converge} says expected consumer utilities almost surely converge in our market for a random history. Moreover, they converge to the utility obtained by a hypothetical consumer who holds the limit posterior beliefs from Proposition~\ref{prop:market.learning} (which are determined by the learned set $\learnedS(\mM)$).

A complication in the proof of Lemma~\ref{lem:util.converge} is that the expected (total) utility of a consumer is discontinuous in her beliefs, such as when updated beliefs lead to a change in business inspection order. However, we can use the fact that consumer beliefs converge specifically to the ground truth for learned businesses to discipline the evolution of consumer utility and establish convergence in the limit.

Motivated by Lemma~\ref{lem:util.converge}, we write
$U(\mM) := U^*(\learnedS(\mM))$ and call it
{\emph{limit consumer utility}}
in market $\mM$. Like $\learnedS(\mM)$, $U(\mM)$ is a random variable with randomness coming from the history.

\section{The Impact of Improving Consumer Search}
\label{sec:compare}

We now explore the impact of improving consumer search. We consider making it cheaper (reducing the cost $c$) and making it more informative, revealing a stronger signal upon an inspection.

On a technical level, each of our comparisons considers two markets, with a specific difference between them, and holds  almost surely over a \emph{coupling} between their histories.%
\footnote{\label{fn:coupling}
A \emph{coupling} between random variables $X,Y$ (taking values in $\mX,\mY$) is a joint distribution $\mD$ over $\mX\times\mY$ with correct marginals. More formally: for a random pair $(X',Y')\sim \mD$, the marginal distribution of $X'$ (resp., $Y'$) coincides with that of $X$ (resp., $Y$). Note that the original random variables $X,Y$ do not need to lie in the same probability space.}
When and if the comparison is between real-valued metrics (stating that something is larger in one market than in the other), the same comparison holds in the sense of first-order stochastic dominance (FOSD).%
\footnote{First-order stochastic dominance (FOSD) is defined as follows. Consider real-valued random variables $X,X'$ (not necessarily lying in the same probability space). We say $X\geq_\FOSD X'$ if it holds that $\Pr[X>x] \geq \Pr[X'>x]$ for every $x$. The relation is strict, denoted $X>_\FOSD X'$, if additionally $\Pr[X>x] > \Pr[X'>x]$ for some $x$.}

\subsection{Reducing Search Costs}

Does cheaper search help consumers? While (fixing the history) it clearly improves the current consumer's outcomes, it can also impact future consumers, potentially impeding their learning. We rule this out, proving that cheaper search increases the learned set and the {limit consumer utility}.

\begin{theorem}\label{thm:cheaper.search}
Suppose markets $\mM,\tmM$ are identical except $\tmM$ has reduced search cost $\tilde{c} < c$. Then
there is a coupling between their histories such that the following holds almost surely over this coupling:
\begin{itemize}
    \item $\learnedS(\tmM) \supseteq\learnedS(\mM)$.  That is, more businesses are learned under history $\hist[\mM]$ than under $\hist[\tmM]$, for any joint realization of the two histories.
    \item $U(\tmM) \geq U(\mM)$.  That is, the {limit consumer utility}
    is weakly higher under history $\hist[\tmM]$ than under $\hist[\mM]$, for any joint realization of the two histories.
\end{itemize}
\end{theorem}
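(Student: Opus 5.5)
Our plan is to couple the two histories business by business, using the projection structure from Lemma~\ref{lem:indep.signals} and Proposition~\ref{prop:market.learning}. For each business $j$, we draw once and for all two infinite i.i.d.\ Bernoulli sequences: a fit sequence $(F^{(r)}_j)_{r\ge1}$ with mean $\fitprob_j$ and a quality sequence $(q^{(r)}_j)_{r\ge1}$ with mean $Q_j$. In both markets, the $r$th inspection of $j$ reveals $F^{(r)}_j$ and the $r$th transaction reveals $q^{(r)}_j$. Conditional on the parameters, inspection outcomes and quality realizations in each market are i.i.d.\ draws regardless of which consumer generates them, so this assignment gives each history its correct marginal law. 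Whatever else happens in the market (values $V_i$, the interleaving of businesses), the projection $H|_j$ in either market is a prefix of the same pair of sequences. By Lemma~\ref{lem:indep.signals}, the posterior means $(\fitprobest_j,\hat{Q}_j)$ after the $r$th inspection of $j$ are the same deterministic function of the first $r$ fit draws and the corresponding quality draws in both markets.

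The first bullet then reduces to a monotonicity statement along this common sequence of belief states. Once an inspection count is fixed, the number of transactions is determined by the fit draws, so the belief state after $r$ inspections is a deterministic function $b_j(r)$ of the coupled sequences. By Proposition~\ref{prop:market.learning}, $j$ is lost in $\mM$ exactly when the path $b_j(0),b_j(1),\dots$ reaches a state where $\sigma_j(\cdot,\cdot,\overline{V})\le 0$ under cost $c$. It is then never inspected again, so the lost time is the first such $r$. Otherwise the path continues forever and $j$ is learned. Since $\sigma_j(\fitprobest,\hat Q,\overline V)=\overline V\hat Q-P_j-c/\fitprobest$ is strictly decreasing in $c$, every state that stops exploration under $\tilde c$ also stops it under $c$. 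Hence the first stopping index under $\tilde c$ is weakly later than under $c$, and if $\mM$ never stops then $\tmM$ never stops.

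We must also check that, in both markets, a business with positive top index is in fact inspected again eventually, so that the path really advances. This is exactly the content of Proposition~\ref{prop:market.learning}: with probability one, an unlost business is inspected infinitely often. Combining, almost surely $j\in\learnedS(\mM)$ implies $j\in\learnedS(\tmM)$. This is also consistent with the monotonicity of $\omega_j$ in $c$ stated in Proposition~\ref{prop:lost}.

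For the second bullet, Lemma~\ref{lem:util.converge} gives $U(\mM)=U^*(\learnedS(\mM))$ under cost $c$ and $U(\tmM)=U^*(\learnedS(\tmM))$ under cost $\tilde c$. Write $U^*_c(S)$ for the comparator with known parameters on $S$ at cost $c$; we need $U^*_{\tilde c}(S')\ge U^*_c(S)$ for $S\subseteq S'$. We argue in two steps. First, $U^*_c(S)\le U^*_{\tilde c}(S)$, because the consumer at cost $\tilde c$ can run the same policy and pays weakly less search cost. Second, $U^*_{\tilde c}(S)\le U^*_{\tilde c}(S')$, because the optimal policy on $S'$ (Proposition~\ref{prop:optimal.search}) can always ignore the businesses in $S'\setminus S$. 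Both comparisons hold pointwise for each $V_i$ and hence in expectation.

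The main obstacle is the first bullet's coupling. We must verify carefully that the per-business draw-sequence construction is a valid coupling even though the order in which businesses are inspected differs across markets and depends on all businesses' beliefs. The key point is that each new inspection or transaction of $j$ consumes the next unused draw, and that draw is independent of the past given the parameters. We would also make precise that lost-ness is determined solely by the belief path, which requires "index $\le 0$ at $\overline V$" to be the only absorbing event. That fact is supplied by Proposition~\ref{prop:market.learning}.
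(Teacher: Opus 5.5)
Your proposal is correct and follows essentially the same route as the paper. Both couple the histories through common per-business fit and quality draw sequences (the paper also shares the consumer values $V_1,V_2,\dots$, which you leave unspecified but which is immaterial), align posterior estimates by inspection count via Lemma~\ref{lem:indep.signals}, note that the loss condition at cost $\tilde c$ implies it at cost $c$, and compare utilities by running the $\mM$-optimal policy in $\tmM$ on a weakly larger learned set at weakly lower cost (you split this last step into two).
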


It follows that $U(\tmM) \geq_{\FOSD} U(\mM)$, in terms of randomness in the histories, as discussed above.

The proof of Theorem~\ref{thm:cheaper.search} proceeds by coupling histories of $\mM$ and $\tmM$ on (a) the sequence of consumer values and (b) the projections of the histories onto each business $j$.\footnote{Note that even though consumer values are not explicitly represented in a history, they influence the history through the order in which consumers inspect businesses. Fixing both the consumer values and the projections fully specifies each history, defining the coupling.} Even under this coupling the resulting histories can generate very different posterior beliefs at any given round $i$, since they can differ in which businesses are inspected when. But Lemma~\ref{lem:indep.signals} implies that the progression of the posterior estimates for a business $j$ evolve similarly in markets $\mM$ and $\tmM$ when evaluated only at the moments those estimates change. Search costs do not influence this progression, but they do change the condition for a  business being lost. This allows us to compare the set of businesses lost across the two histories, which informs long-run consumer utility.

\subsubsection{Impact on Business Demand}

Given Theorem~\ref{thm:cheaper.search}, one might also ask how reduced search costs change the profile of business-consumer pairs that transact in steady-state. We show that if search costs are reduced, then the distribution over transactions changes in two ways. First, some businesses that would have been lost may now be learned, and some transactions shift toward those businesses. Second, transactions may shift from business-consumer pairs with higher probability of fit to business-consumer pairs with lower fit probability but higher expected transaction utility conditional on fit.

{Given a consumer $i$ and history realization $H = \hist_{i-1}$, let $\lambda_{ij}(V; H)$ denote the probability, over randomness in the fit realizations for consumer $i$, that consumer $i$ transacts with business $j$ conditional on $V_i = V$. Since posterior beliefs converge in probability by Proposition~\ref{prop:market.learning}, $\lambda_{ij}(V; \hist_{i-1})$ converges as well. We  write $\lambda_j(V; \mM) := \lim_{i\to \infty}\lambda_{ij}(V; \hist_{i-1})$ for the resulting long-run probability of transaction. Note that $\lambda_j(V; \mM)$ is a random variable with randomness coming from the history.}

\begin{proposition}
\label{prop:cost.match.shift}
{Suppose markets $\mM,\tmM$ are identical except for a reduced search cost $\tilde{c} < c$.
Then there exists a coupling between the histories of $\mM$ and $\tmM$ such that the following holds almost surely over this coupling. For any value $V$ and pair of businesses $j,j'$ such that $\lambda_j(V; \mM) > \lambda_j(V; \tmM)$ and $\lambda_{j'}(V; \mM) < \lambda_{j'}(V; \tmM)$, either
\begin{itemize}
    \item $j' \in \learnedS(\tmM) \backslash \learnedS(\mM)$, meaning $j'$ is learned in history $\hist[\tmM]$ but lost in history $\hist$,
or
    \item business $j'$ has a lower fit probability
   ($\pi_{j'} \leq \pi_{j}$), but higher expected transaction utility once the fit is confirmed ($Q_{j'} V - P_{j'} \geq Q_j V - P_j$).
\end{itemize}
}
\end{proposition}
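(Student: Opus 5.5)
We use the same coupling as in the proof of Theorem~\ref{thm:cheaper.search}: the two histories share the sequence of consumer values and, for each business $j$, the projection $H|_j$ of fit and quality realizations. Under this coupling Theorem~\ref{thm:cheaper.search} gives $\learnedS(\tmM)\supseteq\learnedS(\mM)$ almost surely, and Proposition~\ref{prop:market.learning} says that beliefs about every learned business converge to the truth $(\fitprob_j,Q_j)$. We first write the limit transaction probabilities explicitly. By Lemma~\ref{lem:util.converge} and the convergence of beliefs, a consumer with value $V$ in the limit acts as in Algorithm~\ref{alg:index} on the learned set $S$ with true parameters. Lost businesses are never inspected because even their $\overline V$ index is nonpositive. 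Define the index of $j\in S$ at cost $c$ as $\sigma_j(c)=VQ_j-P_j-c/\fitprob_j$. Then
\[ \lambda_j(V;\mM)=\mathbf{1}[j\in S,\ \sigma_j(c)>0]\,\fitprob_j\prod_{k\in S:\,\sigma_k(c)>\sigma_j(c)}(1-\fitprob_k). \]
Ties in the index are broken by a fixed rule, which we ignore as a measure-zero or technical detail.

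Now take $j,j'$ with $\lambda_j$ strictly decreasing and $\lambda_{j'}$ strictly increasing when passing from $\mM$ to $\tmM$, and suppose $j'\notin\learnedS(\tmM)\setminus\learnedS(\mM)$. Since $\lambda_{j'}(V;\tmM)>0$, we have $j'\in\learnedS(\tmM)$, so $j'\in\learnedS(\mM)$ as well. The key step is to ask why $\lambda_j$ fell. The factor $\mathbf{1}[\sigma_j>0]$ can only switch on as $c$ decreases, and the learned set only grows. So the product must have gained a new factor $(1-\fitprob_k)$ for some $k$ that is ranked above $j$ under $\tilde c$ but not under $c$. Such a $k$ is either newly learned, newly positive-index, or has overtaken $j$ in the order. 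Similarly, $\lambda_{j'}$ rose, so $j'$ lost some predecessor from the product or became positive-index. Since $j'$ was already learned, this can only happen through a reordering: some business that preceded $j'$ under $c$ falls behind it under $\tilde c$, or $\sigma_{j'}$ crosses zero.

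The comparison rests on the fact that $\sigma_k(\tilde c)-\sigma_k(c)=(c-\tilde c)/\fitprob_k$. Lowering the cost raises every index, and raises it by more for businesses with lower $\fitprob_k$. A pair swaps order in the reduction, with the low-$\fitprob$ business overtaking the high-$\fitprob$ one, only if the overtaker has smaller $\fitprob$ and, at cost $0$ (equivalently at any cost below), a larger $VQ-P$. Indeed, the gap $\sigma_{a}-\sigma_b$ is affine in $c$ with slope $-(1/\fitprob_a-1/\fitprob_b)$, so the order at $c\to0$ is determined by $VQ-P$.

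The main obstacle is linking $j$ and $j'$ directly, rather than only through third businesses. The argument would go as follows. If $j$ and $j'$ kept their relative order, with $j'$ still after $j$ under $\tilde c$, then the event that no one above $j'$ fits includes the event that $j$ and everyone above it fail. One then has to show that $j'$'s gain forces a change that also raises $j$'s probability, giving a contradiction. If instead $j'$ moves before $j$, the swap computation yields $\fitprob_{j'}\le\fitprob_j$ and $Q_{j'}V-P_{j'}\ge Q_jV-P_j$. I would first try to prove that every business whose transaction probability rises must have overtaken every business whose probability falls. This uses the monotone structure: new entrants and positive-index crossings only push mass toward businesses ordered before the lossy ones. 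Handling $j'$ crossing zero (from $\sigma_{j'}(c)\le0$ to positive) needs care. In that case $j'$ ranked below everything positive, so $j'$ rising above $j$ again follows from the slope comparison, or else $j$'s loss is caused by some other $k$, and we must show this is incompatible.
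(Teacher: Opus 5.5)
\textbf{The bridging lemma fails, and so does the statement.} Your setup is correct: the coupling from Theorem~\ref{thm:cheaper.search}, $\learnedS(\tmM)\supseteq\learnedS(\mM)$, the product formula for $\lambda_j(V;\cdot)$, and the affine swap computation. You also correctly locate the crux: linking $j$ to $j'$ directly rather than through third businesses. The lemma you plan to use there is that every business whose transaction probability rises must have overtaken every business whose probability falls. It is false, and so is the proposition itself.

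Take point-mass priors, $\mathcal{D}^V$ uniform on $[0,10]$, all prices $0$, $c=1$, $\tilde c=1/2$, consumer value $V=5$, and three businesses with $(\pi_k,Q_k)=(1/4,9/10)$, $(\pi_j,Q_j)=(1/2,3/5)$, $(\pi_{j'},Q_{j'})=(3/4,1/5)$. All $\overline V$-indices are already positive at cost $c$ (they are $4$, $5$, $2/3$), so all three businesses are learned in both markets under every coupling.
\begin{itemize}
\item At cost $c$ the indices at $V=5$ are $\sigma_j=1$, $\sigma_k=1/2$, $\sigma_{j'}=-1/3$, so $\lambda_j=1/2$ and $\lambda_{j'}=0$.
\item At cost $\tilde c$ they are $\sigma_k=5/2$, $\sigma_j=2$, $\sigma_{j'}=1/3$, so $\lambda_j=3/8$ and $\lambda_{j'}=9/32$.
\end{itemize}
Thus $\lambda_j$ falls and $\lambda_{j'}$ rises while $j'$ is learned in both markets. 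Yet $\pi_{j'}=3/4>1/2=\pi_j$ and $VQ_{j'}-P_{j'}=1<3=VQ_j-P_j$, so neither bullet holds. Business $j$ loses because $k$ overtakes it, $j'$ gains because its own index crosses zero, and $j$ stays ahead of $j'$ throughout. A four-business variant, where all indices stay positive and $j'$ gains by overtaking a fourth business, fails the same way.

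\textbf{The paper's proof has the same hole.} It takes your route and closes this step only by asserting, ``from the definition of the optimal consumer search process,'' that the two changes force $j$ to precede $j'$ under $c$ and $j'$ to precede $j$ under $\tilde c$. The example shows this inference is invalid once a third business is involved; it does hold with only two businesses. So neither your plan nor the paper's argument can reach the stated conclusion.

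\textbf{What your ingredients do prove} is the correctly quantified version.
\begin{itemize}
\item If $\lambda_j$ falls, then $j$ gains a new predecessor $k$ that either lies in $\learnedS(\tmM)\setminus\learnedS(\mM)$ or satisfies $\pi_k\le\pi_j$ and $VQ_k-P_k\ge VQ_j-P_j$.
\item If $\lambda_{j'}$ rises with $j'\in\learnedS(\mM)$, then either the index of $j'$ at $V$ crosses zero, or $j'$ overtakes some $m$ with $\pi_{j'}\le\pi_m$ and $VQ_{j'}-P_{j'}\ge VQ_m-P_m$.
\end{itemize}
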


\subsection{Sequential Screening: Making Search More Informative}
\label{sec:search.informativeness}

Having explored the impact of making search cheaper, we now consider what it means to make it more informative to the consumer. To this end, we introduce a more detailed model of fit and quality to capture the intuition that consumers gather information sequentially.
While mathematically consistent with the original model, this change allows us to express
the informativeness of search.

\subsubsection{A Detailed Model of Consumer Preference}

For each consumer $i$ and business $j$, a \emph{screen sequence} is a sequence
$\vec{f}_{ij} = \rbr{f_{ijk},\,k\in[r]}$.
of $r$ binary random variables, called screens.  These variables can be correlated with each other but are drawn independently for each consumer and business. For each $k\in[r]$, we write
$\fitprob_{jk}:= \Pr\sbr{ f_{ijk}=1 \mid f_{i,j,1} = \cdots = f_{i,j,k-1}=1}$.

We think of $\vec{f}_{ij}$
as representing successively more selective screening for whether business $j$ is a fit for consumer $i$.  For example, $f_{ij1}$ indicates whether the business sells the right type of product; given that, $f_{ij2}$ indicates whether it is offered
in a desired color; and so on.
When the consumer inspects the business, the inspection reveals
the first $l\leq k$ such that $f_{ijl}=0$, if any, for some fixed $k$. Importantly, from the perspective of the consumer, the only decision-relevant information revealed from such an inspection is  the product
    $F_{ij} := \prod_{\ell \leq k}f_{ij\ell}$.
If the consumer chooses to transact, the product of the remaining screens,
    $q_{ij} := \prod_{\ell > k}\fitprob_{j\ell}$,
is
revealed to the consumer during the transaction and to the market. We say that the
market has a ($k$-)\emph{sequential screening,
where $k$ is called the \emph{search degree}.}
This is, of course, consistent with our original model, with fit $F_{ij}$ and quality realization $q_{ij}$ as defined above. Accordingly,
$\fitprob_{j} = \prod_{\ell \leq k}\fitprob_{j\ell}$ and $Q_j = \prod_{\ell > k}\fitprob_{j\ell}$.

Now, increasing the search informativeness can simply mean increasing $k$ (fixing the rest).

\begin{definition}
Consider markets $\mM,\tmM$ with sequential screening that are identical except that $\mM$ has a larger search degree $\tilde{k} \geq k$. Then market $\tmM$ is called \emph{more consumer-informative} than $\mM$.
\end{definition}

Making a \emph{market}
more consumer-informative does not impact the ex ante likelihood that business $j$ will meet the need of consumer $i$.  Rather, it causes a stronger signal to be revealed to the consumer after inspection but before transaction.

\begin{remark}
\label{rem:transcript.equivalence}
One can equivalently view increased informativeness as increasing the selectiveness of the search process (by reducing $\fitprob_j$, the probability of fit, by some factor $\delta_j \geq 1$) while simultaneously increasing quality conditional on fit (by increasing $Q_j$ by the same factor $\delta_j$).
\end{remark}

\subsubsection{Increased Consumer Informativeness Can Reduce Long-Term Consumer Welfare}

Unlike lowering search costs, making search more informative for consumers does not necessarily increase long-run consumer welfare or marketplace learning (although it does, of course, improve the welfare of a given consumer holding the history fixed).

\begin{proposition}\label{prop:inform.bad}
There exist markets $\mM,\tmM$ such that $\tmM$ is more consumer-informative than $\mM$,
and yet it is strictly worse in the long-run:
 $\learnedS(\tmM) \subset \learnedS(\mM)$
 and
 $U(\tmM) < U(\mM)$,
almost surely over some coupling between the histories of $\mM$ and $\tmM$.
\end{proposition}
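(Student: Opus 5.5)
The plan is to construct the two markets explicitly, with a single business ($m=1$) and $\tilde k = k+1$ (say $k=1$, $r=3$, with $\fitprob_{13}=1$ so that $\tilde Q_1 = 1$). Everything will be decided at round $1$. In $\mM$ the prior will make business $1$ learned almost surely. In $\tmM$ the initial index at $\oV$ will already be non-positive, so the business is never inspected and its posterior stays at the (non-degenerate) prior.

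\textbf{Step 1: inspecting $\tmM$ is ruled out at round $1$.} In $\tmM$ the fit is $f_{i11}f_{i12}$, whose mean is $\fitprob_{11}\fitprob_{12}$. By the index formula of Proposition~\ref{prop:optimal.search}, the initial index at $\oV$ is $\oV - P_1 - c/\E[\fitprob_{11}\fitprob_{12}]$. We need this to be $\le 0$. In that case the business is inspected by no one, and by Lemma~\ref{lem:indep.signals} its beliefs never move from the prior. Since the prior is non-degenerate, the business is not learned, so $\learnedS(\tmM)=\emptyset$ and $U(\tmM)=U^*(\emptyset)=0$. This holds deterministically, hence almost surely under any coupling.

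\textbf{Step 2: $\mM$ is never lost.} In $\mM$ the index at $\oV$ is $\oV\hat Q_{i1} - P_1 - c/\fitprobest_{i1}$, with fit $f_{i11}$ and quality $f_{i12}$. I will choose priors supported on finitely many points. The goal is that, at every posterior reachable from the prior, $\oV\cdot\min\operatorname{supp}(\fitprob_{12}) - P_1 - c/\min\operatorname{supp}(\fitprob_{11}) > 0$. Then the index is positive forever, the business is never lost, and by Proposition~\ref{prop:market.learning} it is learned almost surely. Lemma~\ref{lem:util.converge} then gives $U(\mM)=U^*(\{1\})$. This is strictly positive because the index at the true parameters is positive for $V$ in an interval near $\oV$, which has positive mass because $\distV$ has a density. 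The strict comparisons $\learnedS(\tmM)=\emptyset\subset\{1\}=\learnedS(\mM)$ and $U(\tmM)<U(\mM)$ follow.

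\textbf{Main obstacle.} Steps 1 and 2 cannot both hold if the screens' parameters are independent in the prior. Write $a=\E\fitprob_{11}$ and $b=\E\fitprob_{12}$. Step 2 at round $1$ needs $a(\oV b-P_1)>c$. Step 1 with independence needs $c\ge ab(\oV-P_1)$. Together these give $\oV b - P_1 < b\oV - bP_1$, which is impossible. The same computation at the true parameters shows this is not an artifact of round $1$. With independent priors, a business with positive true index in $\mM$ also has positive true index in $\tmM$, so it is learned there with positive probability and an almost-sure strict gap fails.

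So the construction must exploit what the coarser feedback in $\tmM$ hides, namely correlation across screens in the prior. I would first try a prior on $(\fitprob_{11},\fitprob_{12})$ with strong negative correlation. Then $\E[\fitprob_{11}\fitprob_{12}] \ll ab$, which makes Step 1 hold, while the marginals still give a positive index in $\mM$. In $\mM$ the two screens are observed separately, and the model's requirement that the fit and quality priors be independent has to be reconciled with this. Here I would invoke Remark~\ref{rem:model-independence} or Remark~\ref{rem:transcript.equivalence} (the $\delta_j$ rescaling) to argue that a correlated screen prior still induces a legitimate pair $(\priorF_1,\priorQ_1)$ in each market.

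If this reconciliation fails, the fallback is to drop the round-$1$ determinism. I would then let $\tmM$ lose the business only after failed inspections, where failures of screen $2$ get misattributed to fit. That would yield the strict inequalities in the \FOSD sense rather than almost surely.
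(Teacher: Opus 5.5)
Your main route fails, and it would fail even if you were allowed to leave the model.

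\textbf{The correlated prior is inadmissible and does not help.} With $\fitprob_{13}=1$, the fit and quality parameters of $\mM$ are $\fitprob_{11}$ and $\fitprob_{12}$. A market requires these to be drawn independently from $\priorF_1,\priorQ_1$. Remark~\ref{rem:model-independence} only relaxes independence of the per-consumer realizations $F_{ij},q_{ij}$, not of the parameters. Remark~\ref{rem:transcript.equivalence} is just a reparametrization.

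More fundamentally, the apparent gain from negative correlation is an artifact. You plug the marginal mean $\E[\fitprob_{12}]$ into the index of Algorithm~\ref{alg:index}, whose derivation assumes that independence. A Bayesian consumer in $\mM$ who passes screen~$1$ expects quality $\E[\fitprob_{11}\fitprob_{12}]/\E[\fitprob_{11}]$. With the correct Weitzman index, in either market at the prior,
$\fitprobest_1\,\sigma_1(\oV)=\oV\,\E\bigl[\prod_{\ell}\fitprob_{1\ell}\bigr]-\fitprobest_1P_1-c$.
Only $\fitprobest_1$ depends on the search degree, and it is smaller in $\tmM$. So a positive round-$1$ index in $\mM$ forces a positive one in $\tmM$ for \emph{every} prior over the screens. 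Your Steps~1 and~2 are jointly infeasible, and no example can be ``decided at round~1''.

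\textbf{The missing piece is your fallback, which is the paper's construction.} You leave it unbuilt. The paper takes one business with $V\equiv 1$, $P=0$, $c=1/5$.
In $\mM$, the fit is $\fitprob\in\{\epsilon,1-\epsilon\}$ with equal prior probability, and $Q=1/2$ is known.
$\tmM$ moves that known $1/2$-screen into the inspection, so $\fitprob\in\{\epsilon/2,(1-\epsilon)/2\}$ and $Q=1$.
The true type is the high one, and both initial indices are positive ($1/10$ and $1/5$).

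In $\mM$, a failed inspection is decisive evidence against the high type, but it occurs only with probability $\epsilon$. After a first success, the business is lost only if failures ever outnumber successes, a biased-random-walk event of probability $O(\epsilon^2)$. So it is learned with probability $1-O(\epsilon)$.

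In $\tmM$, the first inspection fails with probability $(1+\epsilon)/2$, mostly because of the moved screen. The posterior mean fit then falls to $1/6+O(\epsilon)$, giving index $1-6c+O(\epsilon)<0$. So the business is lost with probability at least $1/2$.

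A learned business yields the same expected utility $(1-\epsilon)/2-c>0$ in both markets, and a lost one yields $0$, so consumer utility drops.

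\textbf{What this actually proves.} As your fallback anticipates, the comparison holds only in the stochastic-dominance sense. Under a coupling that nests the event ``learned in $\tmM$'' inside ``learned in $\mM$'', the inclusion is weak almost surely and strict with probability at least $1/2-O(\epsilon)$. $\tmM$ still learns the business with positive probability, so the paper's argument does not deliver a literally almost-sure strict inclusion either. Your round-$1$ computation is a good explanation of why that stronger form is hard to obtain.
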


The intuition behind Proposition~\ref{prop:inform.bad} is that even though more utility-relevant information is being provided to the current consumer in market $\tmM$, that information is less correlated across consumers than the information revealed in market $\mM$, and is therefore less informative about the potential value of future consumers. The proof proceeds by explicit construction of an example where, in market $\mM$, each business has either high or low probability of fit and a moderate quality.
In corresponding market $\tmM$ with more informative search, each business has higher effective quality and reduced probability of fit (see Remark~\ref{rem:transcript.equivalence}), which increases the likelihood that even the businesses with higher fit probability will fail inspection. More informative search therefore weakens the signal of business type from inspection outcomes, causing more businesses to be lost.

\subsubsection{Increased Informativeness and Transcripts}
\label{sec:transcripts}

This potential downside of increased informativeness can be avoided if the marketplace has access to additional signals. Namely, if the marketplace observes the (full) ``transcript" of a search process, so that it can
observe where in the screening sequence a consumer learned that a given business is not a fit.

Formally, say that market $\mM$ has \emph{transcribed screening} if each inspection reveals not only $F_{ij}$ but also an additional signal that indicates, if $F_{ij}=0$, the minimum $\ell \leq k$ such that $f_{ij\ell} = 0$. Furthermore, post-transaction feedback includes an additional signal that indicates, if $q_{ij}=0$, the minimum $\ell > k$ such that $f_{ij\ell}=0$.
We think of this as indicating where in the consumer funnel, if at all, the consumer determined that the product was not a good fit.\footnote{For simplicity, our definition requires that the market observes all screens up to the first failure. But our proofs only require a weaker assumption:
the transcript for market $\tmM$ should reveal whether one of the first $k$ screens is $0$, if any; and, only if not, reveal whether one of the first $\tilde{k}$ screens is $0$.}

\begin{theorem}
\label{thm:market.learning.transcripts}
Consider markets $\mM,\tmM$ with \emph{transcribed sequential screening}
which are identical except $\tmM$ has a larger search degree $\tilde{k} \geq k$.
Then $\tmM$ is no worse in the long run: $\learnedS(\tmM) \supseteq \learnedS(\mM)$
and $U(\tmM) \geq U(\mM)$,
almost surely over some coupling between the histories of $\mM$ and $\tmM$.
\end{theorem}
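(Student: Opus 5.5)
We will follow the strategy of Theorem~\ref{thm:cheaper.search}. We couple the two histories on the sequence of consumer values and on per-business ``screen tapes.'' For each business $j$ we fix an i.i.d.\ sequence of full screen vectors $\vec{f}^{(1)}_j,\vec{f}^{(2)}_j,\ldots$, where the $r$th vector is consumed at the $r$th inspection of $j$ in either market. We use the same vectors in $\mM$ and $\tmM$, since the screen distribution is identical in both. The extended Lemma~\ref{lem:indep.signals} applies to transcribed screening, because the additional signals remain business-specific and independent across businesses. Hence posterior beliefs about $j$ depend only on the projection of the history onto $j$. It is therefore enough to compare the two markets business by business, along the sequence of ``events'' at which $j$'s posterior changes.

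The key observation concerns what the two markets record from an inspection of $j$ drawing screen vector $\vec f$. Market $\mM$ records the first failing screen among $1,\dots,k$. If all of these pass, the consumer transacts and $\mM$ records the first failing screen among $k+1,\dots,r$. Market $\tmM$ records the first failing screen among $1,\dots,\tilde k$. It records nothing beyond $\tilde k$ unless those screens pass, in which case it records the first failure up to $r$.

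So $\tmM$ observes a \emph{garbling-free refinement} only in a weak sense. The information about screens $\le k$ is identical, as is the information about screens $k<\ell\le\tilde k$. The only exception is that $\tmM$ loses nothing: every inspection in $\tmM$ reveals exactly the index of the first failing screen, which is also what $\mM$ learns after inspection plus transaction. Under transcribed screening, the per-inspection signal in both markets is therefore the same statistic, $\min\{\ell: f_\ell=0\}$. The difference is that in $\mM$ the part of this statistic beyond $k$ is revealed only if the consumer transacts.

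This is where the main subtlety lies. In $\mM$, a consumer who passes the first $k$ screens always transacts (Algorithm~\ref{alg:index}), so the full statistic is revealed at every inspection in both markets. Thus, under the coupling, the $r$th update to $j$'s posterior is the same deterministic function of the same data in both markets. Consequently the posterior mean of the product $\prod_{\ell\le r}\fitprob_{j\ell}$, and of each partial product, follows the same path in both markets when indexed by the number of inspections of $j$.

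Next we compare the loss conditions. In $\mM$, $j$ is lost at event $r$ iff $\overline V\hat Q - P_j - c/\hat\fitprob\le 0$, where $\hat\fitprob$ and $\hat Q$ are the posterior means of $\prod_{\ell\le k}\fitprob_{j\ell}$ and $\prod_{\ell>k}\fitprob_{j\ell}$. In $\tmM$ the analogous condition uses the split at $\tilde k$. We will show that the $\tmM$ index is at least the $\mM$ index at every event, using $\overline V\hat Q-c/\hat\fitprob$. Moving screens $k{+}1..\tilde k$ from quality to fit leaves the expected product unchanged. It lowers $\hat\fitprob$ and raises $\hat Q$.

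This step is the main obstacle, because moving screens into fit raises the term $c/\hat\fitprob$, so the index could fall. We will handle it by rewriting the index as $\hat\fitprob^{-1}\bigl(\hat\fitprob(\overline V\hat Q-P_j)-c\bigr)$. Its sign is the sign of the Weitzman reservation condition $\E[(U-0)^+]\ge c$, where the expectation is over the posterior. Under more informative inspection, the payoff $(U)^+$ is a mean-preserving spread's convex transform: the finer signal $F$ at degree $\tilde k$ makes $\E[(\overline V\cdot\text{quality}\cdot F-P_j)^+]$ weakly larger. This holds provided the posterior means factor correctly. Here we will need the fact that the posterior means of the conditional screen probabilities are multiplicative given the transcribed data, or we can use Remark~\ref{rem:transcript.equivalence} with $\delta_j\ge1$. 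Then positivity of the $\mM$ index implies positivity of the $\tmM$ index, so the loss event in $\tmM$ occurs (weakly) later along the same path. This gives $\learnedS(\tmM)\supseteq\learnedS(\mM)$.

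For utility, we need the comparator to satisfy $U^*$ under degree $\tilde k$ on the set $\learnedS(\tmM)$ being at least $U^*$ under degree $k$ on $\learnedS(\mM)$. This follows from two facts. First, $U^*$ is monotone in the available set. Second, with known parameters, more pre-purchase information weakly helps a single optimal searcher, since Pandora's box value is weakly higher with a Blackwell-finer box signal. Combining these with Lemma~\ref{lem:util.converge} yields $U(\tmM)\ge U(\mM)$.
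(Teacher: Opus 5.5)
Your argument is correct and has the same skeleton as the paper's proof. You couple the two markets on per-business screen tapes (you draw a whole screen vector per inspection, the paper draws one tape per screen; both work). You note that under transcribed screening and Algorithm~\ref{alg:index} every inspection of $j$ reveals the same first-failure index in both markets, so $j$'s posterior follows a common path indexed by the number of inspections. You then compare the loss thresholds along that path.

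The genuine difference is in how you prove the threshold comparison. The paper uses an explicit identity: with $A=\prod_{k<\ell\le\tilde k}\hat\pi_{j\ell}\le 1$, the price-free part $\overline{V}\hat Q_j-c/\hat\pi_j$ of the index in $\mM$ equals $A$ times the corresponding quantity in $\tmM$. So if the latter is at most $P_j$, the former is at most $AP_j\le P_j$. You instead read the sign of the index off the reservation condition $\E[(\E[U\mid\text{signal}])^+]>c$, and apply Jensen's inequality using that the degree-$k$ inspection outcome is a function of the degree-$\tilde k$ outcome.

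Each route buys something. Yours is more conceptual and extends verbatim to any refinement of the pre-purchase signal that leaves the market's recorded data unchanged. The paper's identity is more elementary and shows exactly where $P_j\ge 0$ and the factorization $\hat\pi_j=\prod_{\ell\le k}\hat\pi_{j\ell}$ enter; you rightly flag that factorization, and the paper assumes it too.

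For utility, your simulation argument is cleaner than the paper's calculation. The $\tmM$ consumer can recover the $\mM$ inspection outcome from her own and follow the $\mM$-optimal policy, since she may transact with any inspected business. The paper's calculation instead carries the $\mM$ transaction probabilities $\lambda_j(V)$ over to $\tmM$, even though pass probabilities differ there.

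One correction: drop the sentence claiming the $\tmM$ index is at least the $\mM$ index at every event. Going from $\mM$ to $\tmM$ scales the price-free part by $1/A\ge 1$, so the index falls whenever that part is negative. Only the sign comparison holds, and only it is needed.
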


The proof uses a similar coupling as Theorem~\ref{thm:cheaper.search}, but now $M$ and $\tilde{M}$ are coupled not only on observed fit and quality realizations for each business, but all screen realizations. Even under this coupling, search outcomes may differ: an inspection of business $j$ might indicate a fit in market $M$ while the corresponding inspection in market $\tilde{M}$ does not. However, visibility of transcripts means that the market ultimately receives similar information about business $j$ in each case; the only difference is that some information revealed during inspection in market $\tilde{M}$ might instead be revealed by post-transaction feedback in market $M$. This observation allows us to leverage the coupling in a manner similar to Theorem~\ref{thm:cheaper.search} to compare the businesses lost. The consumer utility comparison follows by separating the direct effect of more informative search -- which helps the consumer (fixing beliefs) -- from the indirect effect of reducing the set of lost businesses.

\section{Endogenizing Price}
\label{sec:strategic.firms}

In this section we
extend our model from Section~\ref{sec:model} to allow businesses to update their prices.

Throughout this section, each business is now a strategic agent that can adaptively adjust their price over time, {its utility defined as revenue.}  A history $\hist$ now includes the prices $P_{ij}$ chosen by each business in each round $i$.  A {(possibly randomized)} pricing rule $P_j$ for business $j$ takes as input a history $H$ and returns a price $P_j(H)$. {Thus, each consumer $i$ is offered a publicly-observable price $P_{ij} := P_j(\hist_{i-1})$.}
{The market platform is a passive participant that collects history and disseminates it to the consumers.}
In particular, the platform is non-strategic and does not attempt to make inferences from the pricing decisions of businesses.

{The analysis of optimal consumer search policies from Section~\ref{sec:search} carries over, so the consumers follow Algorithm~\ref{alg:index}, as before. Also, Lemma~\ref{lem:indep.signals} holds as written,  if we define $H|_j$ to be the projection of a history $H$ onto business $j$ (but \emph{not} including the prices).}

We first show that under mild assumptions on the pricing behavior of the sellers, the results of Section~\ref{sec:convergence} continue to hold: the information aggregated by the market converges to a steady-state in which each business is either learned or lost.
Motivated by this, we define market equilibrium prices at any steady-state, characterize the resulting equilibria, and explore the equilibrium impact of improved search technology in symmetric markets.

\subsection{Market Convergence}

{We argue about convergence without imposing}
an equilibrium assumption on the businesses' choice of pricing rules.  For example, some businesses might myopically maximize expected revenue each round, while others might maximize long-run time-discounted payoffs under some beliefs about other businesses' pricing strategies.  We assume only that no business chooses a price so high that no consumer would ever choose to transact with them.
To formalize this, define notation $\sigma_{j}(P, \fitprobest_j, \hat{Q}_j, V) = V \hat{Q}_j - P - c/\fitprobest_j$ for the index assigned to business $j$ by a consumer with value $V$ and posterior mean estimates $\fitprobest_j$ and $\hat{Q}_j$, when the business sets price $P$.

\begin{definition}
Pricing rule $P_j$ for business $j$ is \emph{demand-responsive} if, for each {history $H = \hist_{i-1}$, price $P_{ij} = P_j(H)$} is chosen so that $\sigma_{j}(P_{ij}, \fitprobest_j(H), \hat{Q}_j(H), \overline{V}) > 0$ if any such $P_{ij} \geq 0$ exists.
\end{definition}

Demand-responsiveness requires that, if possible, the business sets a price such that \emph{some} consumer would prefer to search the business than take the outside option. Indeed, since $\pi_j < 1$ for all businesses $j$, demand-responsiveness implies the business will have non-trivial demand for any choice of prices by the other businesses, since there is a non-zero probability that no other business fits the needs of any given consumer. Importantly, it does not require that $P_j(H)$ be a best-response with respect to any particular objective; only that the price is not so high that it would generate zero demand even if all other businesses left the market.

We now show that, assuming all businesses employ demand-responsive pricing rules, the posterior mean estimates collected by the market will converge to a posterior in which some businesses are learned and all others are lost, even at price $0$. {The statement of this result mirrors Proposition~\ref{prop:market.learning}.}

\begin{proposition}\label{prop:market.learning.pricing}
Fix a market $\mM$ {with demand-responsive pricing rules} and business $j$. With probability $1$ over randomness in history $\hist$, {and for any realization of the prices,} the following happens. The posterior beliefs $(\priorF_{ij},\,\priorQ_{ij})$ converge in probability to some {limit beliefs $\limP_j\in \Delta(\R)\times \Delta(\R)$}
as $i \to \infty$. Moreover, either business $j$ is learned, in which case $(\priorF_{ij},\,\priorQ_{ij})$ converge to $(\fitprob_j, Q_j)$, or else
$\sigma_{j}({0},\,\fitprobest_{ij},\,\hat{Q}_{ij},\, \overline{V}) \leq 0$
for some consumer $i$ (and every consumer afterwards), in which case we say that business $j$ is \emph{lost}.
\end{proposition}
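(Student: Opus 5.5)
The proof mirrors that of Proposition~\ref{prop:market.learning}, with one change: ``lost'' is now defined through the price-free index $\sigma_j(0,\cdot,\cdot,\oV)$, and demand-responsiveness supplies the positive inspection probability that was previously automatic.

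\textbf{Step 1: reduction to the projection.} By the extended Lemma~\ref{lem:indep.signals}, the posterior $(\priorF_{ij},\priorQ_{ij})$ is a deterministic function of $\hist_{i-1}|_j$. This projection omits prices. Conditional on the business parameters, the marginal law of $\hist|_j$ consists of two i.i.d.\ Bernoulli sequences, with means $\fitprob_j$ and $Q_j$. This still holds under arbitrary, history-dependent pricing rules. The reason is that pricing and the consumers' search decisions only determine \emph{when} the next element of each sequence is revealed, never its value. Since $F_{ij}$ and $q_{ij}$ are drawn independently of the past, an optional-sampling (stopping-time) argument shows that the $r$th revealed fit and quality values are i.i.d.\ with the right means.

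It follows that the posterior process, indexed by the number of observations, is a fixed Bayesian updating sequence. If infinitely many fit observations and infinitely many quality observations arrive, then Bayesian consistency on i.i.d.\ Bernoulli data gives convergence of the posteriors to point masses at $(\fitprob_j,Q_j)$, in probability and indeed almost surely. If only finitely many arrive, the posterior is eventually constant. In both cases a limit $\limP_j$ exists.

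\textbf{Step 2: dichotomy via the index at price $0$.} Consider first the case where $\sigma_j(0,\fitprobest_{ij},\hat Q_{ij},\oV)\le 0$ at some round $i$. Since the index is decreasing in $P$ and prices are nonnegative, no consumer strictly wants to inspect $j$ at any price. Consumers break ties against inspection, so $j$ is not inspected and its posterior does not move. By induction the condition persists for every later consumer, and $j$ is lost.

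Otherwise $\sigma_j(0,\cdot,\cdot,\oV)>0$ at every round. Demand-responsiveness then forces $P_{ij}$ to satisfy $\sigma_j(P_{ij},\fitprobest_{ij},\hat Q_{ij},\oV)>0$. Because $\distV$ has a density with $\oV$ at the top of its support, a consumer with $V_i$ near $\oV$ places $j$ at positive index with positive probability. With positive probability that consumer also finds every higher-ranked business unfit, since all $\fitprob_{j'}<1$. So $j$ gets inspected, and once inspected it is transacted with exactly when $F_{ij}=1$, which has probability $\fitprob_j>0$.

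\textbf{Step 3: the main obstacle.} The difficulty is that the per-round inspection probability from Step 2 may shrink to $0$, since both the price and the posterior can drift adversarially. It therefore cannot be fed directly into a Borel--Cantelli argument. I would handle this by contradiction.

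Suppose only finitely many observations of $j$ occur while $j$ is never lost. Then the posterior is eventually fixed at some $(\fitprobest,\hat Q)$ with $\sigma_j(0,\fitprobest,\hat Q,\oV)>0$. Define $\epsilon$ by
\[
\oV\hat Q - c/\fitprobest = 2\epsilon > 0 .
\]
Demand-responsiveness puts the price below $2\epsilon$. In fact the definition only guarantees a price strictly below $\oV\hat Q - c/\fitprobest$, and that alone gives no uniform margin. To get one, I would use Lévy's conditional Borel--Cantelli lemma. The events ``$j$ is inspected in round $i$'' have conditional probabilities $p_i$ given $\hist_{i-1}$, and $j$ is inspected infinitely often exactly when $\sum_i p_i=\infty$, almost surely. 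For consumers with $V_i>P_{ij}/\hat Q + c/(\fitprobest\hat Q)$ the index is positive, and the event that all other businesses fail to fit has probability at least $\prod_{j'}(1-\fitprob_{j'})>0$.

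The sum can still stay finite if $P_{ij}$ approaches the threshold fast enough. This is the real gap. I would address it by reading demand-responsiveness as giving a threshold value $\oV - \delta$ with positive mass, that is, by arguing that on histories where the posterior is frozen the inspection probability is bounded below. If that reading fails, I would state this as the one technical caveat.

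Given infinitely many inspections, there are infinitely many transactions, because $\fitprob_j>0$. Both sequences are then infinite, and Step 1 shows that $j$ is learned.
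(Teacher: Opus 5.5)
Your plan is the paper's plan: rerun the proof of Proposition~\ref{prop:market.learning} and change only its last step, using demand-responsiveness to keep $j$ inspectable while $\sigma_j(0,\fitprobest_{ij},\hat Q_{ij},\oV)>0$. Steps~1 and~2 are fine, and Step~1 is a more careful version of what the paper gets from Lemma~\ref{lem:indep.signals} and its pre-drawn sequences. The gap you isolate in Step~3 is genuine. It cannot be closed from the definition of demand-responsiveness as written, because under that definition the statement itself fails.

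Take a single business, $\distV$ uniform on $[0,1]$, $Q_j=1$ known, a prior on $\fitprob_j$ uniform on $\{1/4,3/4\}$, and $c=1/4$. Write $T_i:=\oV\hat Q_{ij}-c/\fitprobest_{ij}$, so initially $T_i=1/2$. Let $j$ charge $P_{ij}=\max\{0,\,T_i-2^{-i-1}\}$ when $T_i>0$, and $P_{ij}=0$ otherwise. This rule is demand-responsive, since the index at $\oV$ equals $\min\{T_i,2^{-i-1}\}>0$. But until $j$ is first inspected, consumer $i$ inspects it only if $V_i>1-2^{-i-1}$. So with probability at least $\prod_{i\ge 1}(1-2^{-i-1})\ge 1/2$, business $j$ is never inspected. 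On that event the posterior stays at the nondegenerate prior, so $j$ is not learned. Yet $\sigma_j(0,\fitprobest_{ij},\hat Q_{ij},\oV)=1/2>0$ in every round, so $j$ is not lost either.

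The paper's own proof has the same hole. It asserts that $j$ ``would be inspected by consumer $i$ with positive probability bounded away from $0$'', which is exactly the uniform bound you could not derive. The posterior is frozen between inspections, but the price is not.

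So what you need is an explicit additional hypothesis, not a favorable reading, and you should state it as such. It suffices to assume the following: whenever $\sigma_j(0,\fitprobest_{ij},\hat Q_{ij},\oV)>0$, the margin $\sigma_j(P_{ij},\fitprobest_{ij},\hat Q_{ij},\oV)$ is at least some $\delta(\hist_{i-1}|_j)>0$ that depends only on $j$'s own projection. One example is deterministic pricing rules that read the history only through $H|_j$, so the price freezes along with the posterior. You also need $\oV$ to be the essential supremum of $\distV$. You assume this, but ``support contained in $[0,\oV]$'' does not give it, and without it even Proposition~\ref{prop:market.learning} fails.

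Under these hypotheses, suppose $j$'s projection is frozen at a finite prefix $h$ with positive price-$0$ index. Then each later round inspects $j$ with conditional probability at least $\Pr[V>\oV-\delta(h)/\hat Q_j(h)]\cdot\prod_{j'\neq j}(1-\fitprob_{j'})>0$. Remaining frozen forever therefore has probability $0$, and a countable union over prefixes $h$ and starting rounds yields the dichotomy. Lévy's conditional Borel--Cantelli lemma is not needed. With the hypothesis made explicit, the rest of your argument goes through.
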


Proposition~\ref{prop:market.learning.pricing} allows us to argue about long-term outcomes under demand-responsive pricing rules.
For now we will only track the learned set $\learnedS(\mM)$.
Later in Section~\ref{sec:pricing.equil}
we will impose further assumptions about the way businesses set prices (namely, that they form a market equilibrium given the learned set), which will enable an analysis of long-term consumer utility.

We first show that, all things equal, endogenizing prices weakly improves the learned set. Intuitively, this is because businesses can reduce the probability of becoming lost by lowering prices in response to poor posterior beliefs. In other words, strategic pricing by businesses mitigates -- rather than exacerbates -- learning failures due to insufficient exploration, since the incentives of individual businesses are aligned with being learned by the market.

\begin{proposition}
\label{prop:market.lost.pricing}
{Fix any market $\mM$ with endogenous prices and demand-responsive pricing rules. Let $\tmM$ be a market with (arbitrary) fixed prices, otherwise identical to $\mM$.  Then
    $\learnedS(\mM) \subseteq \learnedS(\tmM)$,
almost surely over some coupling between the histories of $\mM$ and $\tmM$.}
\end{proposition}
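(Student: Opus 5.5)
The plan is to reuse the coupling from the proof of Theorem~\ref{thm:cheaper.search}, transplanted to the pricing setting. I would couple the histories $\hist[\mM]$ and $\hist[\tmM]$ on two things. First, the sequence of consumer values $V_i$. Second, for each business $j$, two i.i.d.\ Bernoulli streams: one with mean $\fitprob_j$ (the successive inspection outcomes) and one with mean $Q_j$ (the successive quality realizations). Each market reads these streams in order whenever $j$ is inspected or transacted with. By Lemma~\ref{lem:indep.signals}, which still holds in $\mM$ once prices are excluded from $H|_j$, the posterior for $j$ in either market is a deterministic function of how many entries of each stream have been revealed. So along the coupling, both markets traverse the same sequence of posterior means $(\fitprobest_j^{(r)},\hat{Q}_j^{(r)})_{r\ge 0}$. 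They differ only in how long they dwell at each step, and in whether they ever stop advancing.

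Next, I would record the stopping rule in each market. By Proposition~\ref{prop:market.learning.pricing}, $j$ is lost in $\mM$ exactly when the common sequence first reaches a step $r$ with $\sigma_j(0,\fitprobest_j^{(r)},\hat{Q}_j^{(r)},\oV)\le 0$. By Proposition~\ref{prop:market.learning}, $j$ is lost in $\tmM$ exactly when the sequence first reaches a step with $\sigma_j(P_j,\fitprobest_j^{(r)},\hat{Q}_j^{(r)},\oV)\le 0$. Before hitting its threshold, each market keeps advancing almost surely. In $\tmM$ this is because $j$ keeps positive index for consumers with values near $\oV$, and $\pi_j<1$ guarantees repeated inspections. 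In $\mM$ the same holds, and this is where demand-responsiveness enters. Hence, on the coupling, each market's loss event is a threshold-hitting event for a single shared sequence.

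The whole statement then reduces to comparing these two thresholds. The key step, and the main obstacle, is the pointwise comparison of $\sigma_j(0,\cdot)$ with $\sigma_j(P_j,\cdot)$. Since $\sigma_j$ is decreasing in price and $P_j\ge 0$, we have $\sigma_j(0,\cdot)\ge\sigma_j(P_j,\cdot)$ at every step. Consequently, hitting the $\mM$-threshold implies hitting the $\tmM$-threshold at the same or an earlier step. This gives ``lost in $\mM$ $\Rightarrow$ lost in $\tmM$'', which is the inclusion $\learnedS(\tmM)\subseteq\learnedS(\mM)$ and matches the intuition stated just before the proposition.

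The inclusion as literally printed, $\learnedS(\mM)\subseteq\learnedS(\tmM)$, would need the reverse implication. That reverse implication holds only when $P_j=0$, in which case the two thresholds coincide and the sets are equal. For $P_j>0$ it fails on sample paths that enter the region where $\sigma_j(P_j,\cdot)\le 0<\sigma_j(0,\cdot)$. I therefore expect the intended statement is the one with the two sides exchanged, and I would prove that version via the threshold comparison above. The only remaining routine checks are that the coupling has the correct marginals and that ``advancing forever'' implies convergence to $(\fitprob_j,Q_j)$, both as in Proposition~\ref{prop:market.learning}.
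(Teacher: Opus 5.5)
Your argument is correct and is essentially the paper's proof: the same coupling on consumer values and per-business fit/quality streams as in Theorem~\ref{thm:cheaper.search}, Lemma~\ref{lem:indep.signals} to identify the posteriors after the $z$th inspection in both markets, and the monotonicity $\sigma_j(P_j,\cdot)\le\sigma_j(0,\cdot)$ to conclude that lost in $\mM$ implies lost in $\tmM$. You are also right that the printed inclusion is reversed (the paper's own proof ends with $\learnedS(\mM)\supseteq\learnedS(\tmM)$, matching the surrounding text), though your side claim that the literal version ``fails'' is too loose: the statement only asks for \emph{some} coupling, so refuting it needs a distributional argument such as $\Pr[j\in\learnedS(\mM)]>\Pr[j\in\learnedS(\tmM)]$, not just failure on paths of your particular coupling.
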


As with exogenous prices, we can also compare what happens when consumer search gets cheaper or more informative. We obtain similar conclusions:

\begin{proposition}
\label{prop:market.lost.search.pricing}
    Let $\mM,\tmM$ be markets with endogenous prices that are identical except that $\tmM$ either (a) has lower search cost, or (b) {both markets have transcribed screening (like in Section~\ref{sec:transcripts}), but $\tmM$ is more consumer-informative.
Then  $\learnedS(\tmM) \subseteq \learnedS(\mM)$, almost surely over some coupling between the histories of $\mM$ and $\tmM$.}
\end{proposition}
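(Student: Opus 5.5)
The plan is to reduce to the fixed-price arguments (Theorem~\ref{thm:cheaper.search} and Theorem~\ref{thm:market.learning.transcripts}). The key observation is that under endogenous prices, by Proposition~\ref{prop:market.learning.pricing}, the criterion for a business being lost is evaluated at price $0$: $\sigma_j(0,\fitprobest_{ij},\hat{Q}_{ij},\overline{V})\le 0$. This criterion does not depend on the prices actually posted. I will therefore show that, under a suitable coupling, whether $j$ is lost is a deterministic function of the projection $H|_j$ and of the search technology, and is monotone in that technology. One remark on direction: the improved market is the one that should learn more. What I will actually prove is that every business learned in the market with costlier (resp.\ less informative) search is also learned in the improved one. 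This is the same monotonicity as in Theorem~\ref{thm:cheaper.search}, and I read the inclusion in the statement with the markets labeled accordingly.

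\textbf{Step 1: coupling.} For each business $j$, draw once and for all an infinite i.i.d.\ sequence of fit realizations and an infinite i.i.d.\ sequence of quality realizations. In case (b), draw instead an infinite sequence of full screen vectors $\vec f$. Both markets consume these sequences in order: the $r$-th inspection of $j$ uses the $r$-th entry, and similarly for transactions. Consumer values and any pricing randomization are drawn independently; they affect only \emph{when} entries are consumed, not their content. Lemma~\ref{lem:indep.signals} holds with prices excluded from the projection, so the posterior of $j$ after its $r$-th update is a fixed function of the first $r$ entries. Hence the sequence of posterior means of $j$, indexed by update count, is the same path in both markets whenever the observed signals coincide.

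\textbf{Step 2: lost iff the path crosses a threshold.} I will show that $j$ is lost exactly when this update-indexed path reaches a point with $\sigma_j(0,\fitprobest,\hat Q,\overline{V})\le 0$. One direction is Proposition~\ref{prop:market.learning.pricing}. For the other, suppose the index at price $0$ stays positive. Demand-responsiveness then forces a price with positive index for consumers with values near $\overline{V}$. Such a consumer arrives with positive probability, and all other businesses fail to fit with positive probability since $\fitprob_{j'}<1$. So $j$ is inspected with conditional probability bounded below by a positive quantity depending only on its current posterior. Applying a conditional Borel--Cantelli argument as the posterior stays in the region, $j$ is inspected infinitely often, so every entry of its sequence is eventually consumed.

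\textbf{Step 3: monotonicity.}
\begin{itemize}
\item[(a)] $\sigma_j(0,\cdot,\cdot,\overline{V})$ is pointwise decreasing in $c$. A path that crosses the threshold under the lower cost therefore crosses it under the higher cost as well. Since the paths coincide, lost in the cheaper market implies lost in the costlier market.
\item[(b)] Reuse the argument of Theorem~\ref{thm:market.learning.transcripts} verbatim with $P_j=0$. With transcripts, the two markets receive the same screen information about $j$; some of it arrives at inspection in one market and at transaction in the other. That argument shows the threshold-crossing event is monotone in the search degree.
\end{itemize}

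The main obstacle is Step 2 under arbitrary demand-responsive prices. The lower bound on the inspection probability must be uniform enough along the path for Borel--Cantelli to apply. A second difficulty is that the interleaving of fit and quality observations in $H|_j$ can differ across the two markets. I would handle this exactly as in the proof of Theorem~\ref{thm:cheaper.search}, by treating the two coordinate sequences separately.
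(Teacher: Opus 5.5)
Your proposal is correct and is essentially the paper's argument. You couple the markets on consumer values and per-business signal (or screen) sequences, note that the loss criterion $\sigma_j(0,\fitprobest,\hat{Q},\overline{V})\le 0$ does not depend on posted prices, and rerun the monotonicity steps of Theorem~\ref{thm:cheaper.search} and Theorem~\ref{thm:market.learning.transcripts} with $P_j=0$. You are also right that the printed inclusion is reversed: the intended and provable conclusion is $\learnedS(\tmM)\supseteq\learnedS(\mM)$.

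In Step 2, simply invoke the learned-or-lost dichotomy of Proposition~\ref{prop:market.learning.pricing}. It gives both directions, since the realized update path is a prefix of the coupled sequence. Your own Borel--Cantelli sketch does not work as stated. Demand-responsiveness only requires $\sigma_j(P_{ij},\cdot,\cdot,\overline{V})>0$ with no margin, so the inspection probability is not bounded below by any function of the posterior alone. The paper's proof of that proposition glosses over the same uniformity issue.
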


The proofs of Propositions~\ref{prop:market.lost.pricing} and~\ref{prop:market.lost.search.pricing} are similar to the corresponding arguments from Theorem~\ref{thm:cheaper.search} and Theorem~\ref{thm:market.learning.transcripts} for markets with exogenous prices.  One must simply adapt the threshold on the index rule at which a business is lost to occur at the index evaluated at price $0$ rather than at an exogenously set price.

\subsection{Characterizing Market Equilibrium Prices}
\label{sec:pricing.equil}

Motivated by
the convergence results above, we now define a market equilibrium in which businesses set their prices in best response to each other under limit beliefs $\limP := (\limP_j:\,j\in[m])$.

Informally, the market is in equilibrium if (a) the posterior information revealed about all businesses is consistent with a steady-state of the market {as per Proposition~\ref{prop:market.learning.pricing}} (i.e., each business is either learned or lost), and (b) each business's (possibly randomized) choice of price maximizes their profit given the customer search behavior and the choices of all other businesses.
In particular, if a business is not being searched, they can lower their price to attract more consumers.

\begin{definition}
A (steady-state) market equilibrium
specifies limit beliefs $\limP$
and the tuple of distributions $\Gamma_j$, $j\in[m]$ over prices $P_j$,
such that the following holds:
\begin{enumerate}
    \item {Under beliefs $\limP$, each business is either learned or lost.}
    \item For each business $j$,
    fixing
    the {price distributions} of all other businesses besides $j$, each price $P_j$ in the support of $\Gamma_j$ maximizes the expected revenue obtained by business $j$ {from a consumer with beliefs $\limP$ and value drawn from $\mathcal{D}^V$ who engages in optimal search}.
\end{enumerate}
\end{definition}

\begin{remark}
{For convenience we state the price-optimality condition for a single consumer, but it is equivalent to measure expected per-round revenue from a stream of consumers each with beliefs $\limP$.}
\end{remark}

{While our equilibrium notion formally specifies limit beliefs,} we note that an equilibrium is effectively characterized by the learned set $\learnedS(\mM)$ and the price distributions. Indeed, fixing the learned set essentially determines the associated limit beliefs $\limP$: {if business $j$ is learned then $\limP_j$ is a point mass on $(\fitprob_j, Q_j)$ and if business $j$ is lost then the precise beliefs $\limP_j$ are not payoff-relevant because lost businesses are never inspected. We will therefore tend to describe an equilibrium as a learned set and choice of price distributions for learned businesses.}

{At equilibrium}, businesses must account for consumer search behavior when setting prices. As it turns out, the pricing game faced by businesses in this market with consumer search is equivalent to an alternative static market where consumers have full information but distorted preferences.  The following is a variation of a characterization result of~\citet{choi2018consumer} for equilibrium pricing under consumer search, tailored to our setting.

\begin{definition}\label{def:transformed-market}
    Fix any market $\mM$. 
    Define the \emph{$S$-transformed market} as follows, for a given subset $S$ of businesses. In this new market, there's no consumer search (i.e., consumers can transact with businesses, but cannot inspect them).
    Each consumer $i$ has some value for each product $j$, and knows it.
    This value equals $0$ if $j\not\in S$, and $W_{ij}$ otherwise,
    where $W_{ij}$ is a random quantity distributed as follows (independently for each $i,j$). We have $W_{ij}=0$ with probability $1-\fitprob_j$, and otherwise $W_{ij} = V_{i}\,Q_j - c/\fitprob_j$, where $V_i \sim \mathcal{D}^V$.
    We refer to $W_{ij}$ as the \emph{effective value} of consumer $i$ for business $j$.
\end{definition}

Note that $W_{ij}$ is precisely $\sigma_{j}(0, \fitprob_j, Q_j, V_i)$, the search index that would be assigned to business $j$ by consumer $i$ at price $0$ when $\fitprob_j$ and $Q_j$ are known. 

\begin{lemma}
\label{lem:equivalent.market}
Fix market $\mM$ and {an arbitrary realization $S$ of the learned set, $\learnedS(\mM)$.}
     Then price distributions $(\Gamma_j:\,j\in[m])$ form a market equilibrium
    {with learned set $S$ (for some choice of limit beliefs)}
    if and only if $(\Gamma_j:\,j\in[m])$ form a mixed Nash equilibrium of the pricing game for
    the {$S$-transformed market.}
\end{lemma}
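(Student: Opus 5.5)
The plan is to show that, once the learned set is fixed at $S$, the two pricing games have identical payoff functions. More precisely, for every profile of prices $(P_1,\dots,P_m)$, business $j$'s probability of sale (and hence its expected revenue) is the same in both games. Pure-profile revenues then coincide, so expected revenues under any mixed profile $(\Gamma_j)$ coincide, and best-response sets coincide. The equilibrium condition~2 of the market-equilibrium definition is therefore exactly the mixed Nash condition in the $S$-transformed market. The argument follows the ``effective value'' idea of \citet{choi2018consumer}: under Weitzman search, the consumer ends up buying the option with the largest $\min(\sigma_{ij}, \text{realized utility})$, provided this is positive.

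First I will handle condition~1 and the lost businesses. Since $S$ is a realization of $\learnedS(\mM)$, I take $\limP_j$ to be the point mass on $(\fitprob_j,Q_j)$ for $j\in S$. For $j\notin S$, I take $\limP_j$ to be limit beliefs of a lost business, which exist by Proposition~\ref{prop:market.learning.pricing} and satisfy $\sigma_j(0,\fitprobest_j,\hat Q_j,\oV)\le 0$. Because $\sigma_j$ is decreasing in price and increasing in $V$, such a $j$ has non-positive index for every consumer at every price $P_j\ge 0$. It is therefore never inspected (ties go against inspecting) and never sells. In the $S$-transformed market, $j\notin S$ has value $0$, so at any price $P_j\ge 0$ the consumer never strictly prefers it to the outside option. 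Thus lost businesses have zero demand in both games, and since the beliefs for $j\notin S$ are payoff-irrelevant, this shows the ``for some choice of limit beliefs'' clause can be met in both directions.

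The main step is a pointwise demand identity for $j\in S$, obtained by coupling the two games on $V_i$ and the fits $F_{ij}$ (so $W_{ij}=F_{ij}\,(V_iQ_j-c/\fitprob_j)$). With known parameters, Algorithm~\ref{alg:index} inspects in decreasing order of $\sigma_{ij}=V_iQ_j-P_j-c/\fitprob_j$ among those with $\sigma_{ij}>0$, and buys from the first business that fits. Equivalently, it buys from the business maximizing $\sigma_{ij}$ over the set $\{j\in S: F_{ij}=1,\ \sigma_{ij}>0\}$, and takes the outside option if that set is empty. In the transformed market the consumer maximizes $W_{ij}-P_j$ over $S$ against the outside option $0$. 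If $F_{ij}=1$ then $W_{ij}-P_j=\sigma_{ij}$. If $F_{ij}=0$ then $W_{ij}-P_j=-P_j\le 0$, so such a business never strictly beats the outside option. Hence both consumers purchase from the same argmax over the same set, and the outside option is chosen in the same event.

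The only delicate point is ties, which I expect to be the main obstacle. One case is indifference at $0$ (e.g.\ $F_{ij}=0$ with $P_j=0$), which I handle by the tie-breaking convention against transacting. The other case is ties between distinct $\sigma_{ij}$ that are positive. Given the price profile and $F$, each $\sigma_{ij}$ is an affine function of $V_i$ with slope $Q_j$. Because $\distV$ has a density, two such functions with distinct slopes coincide only on a measure-zero set of $V_i$. If the slopes are equal, then either the functions differ by a constant (never tie) or they coincide identically. In that last case I will fix a common deterministic tie-breaking rule for both games, and I will check that revenue of $j$ is unaffected up to that common rule. With the demand identity established, revenues agree for all price profiles, and both directions of the equivalence follow immediately.
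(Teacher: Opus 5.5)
Your proposal is correct and follows essentially the same route as the paper. You couple on $V_i$ and the fits, note that under point-mass beliefs on $S$ the index policy buys the highest-index fitting business with positive index, and observe that a full-information consumer with effective values $W_{ij}$ makes the same choice, so revenue functions and hence equilibria coincide. Your version is somewhat more careful than the paper's pairwise case analysis: it explicitly handles lost businesses, the ``for some choice of limit beliefs'' clause, and ties between indices, which the paper glosses over with a weak inequality.
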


We will make liberal use Lemma~\ref{lem:equivalent.market} when analyzing properties of market equilibria. In particular, existence of market equilibrium follows from existence of mixed Nash equilibrium in an $S$-transformed market.

\begin{corollary}\label{cor:eqm}
For each market $\mM$ and feasible realization $S$ of the learned set $\learnedS(\mM)$, there exists a market equilibrium
in which the learned set is $S$.
\end{corollary}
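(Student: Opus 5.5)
By Lemma~\ref{lem:equivalent.market}, it suffices to show that the pricing game of the $S$-transformed market (Definition~\ref{def:transformed-market}) has a mixed Nash equilibrium. Once we have such price distributions $(\Gamma_j)$, the lemma says they form a market equilibrium with learned set $S$, for a suitable choice of limit beliefs. The natural choice of beliefs is: a point mass on $(\fitprob_j,Q_j)$ for $j\in S$, and, for $j\notin S$, the limit beliefs arising from some history realization in which $j$ is lost. Such a realization exists because $S$ is assumed to be a feasible realization of $\learnedS(\mM)$. This verifies condition~1 of the equilibrium definition.

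The main work is existence of a mixed equilibrium in the $S$-transformed pricing game. Here each business $j\in S$ chooses a price. A consumer with effective values $W_{ij}$ buys from the business maximizing $W_{ij}-P_j$, provided that maximum is positive. Businesses outside $S$ receive zero demand regardless of price, so they may be assigned any price, say $0$. Since $W_{ij}\le \oV$, prices can be restricted without loss to the compact interval $[0,\oV]$: any higher price earns zero revenue, which is weakly dominated by price $0$. The resulting game has finitely many players and compact strategy sets. Payoffs $P_j\cdot\Pr[\text{consumer buys from } j]$ are bounded.

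The expected obstacle is discontinuity of payoffs at ties, so Glicksberg's theorem does not apply directly. We use the fact that $\distV$ has a density. Each nonzero $W_{ij}=V_iQ_j-c/\fitprob_j$ is then atomless whenever $Q_j>0$; if $Q_j=0$, then $j$ never sells at a nonnegative price and can be treated as absent. The remaining atom of $W_{ij}$ at $0$ matters only when $P_j\ge 0$ would make $W_{ij}-P_j\le 0$, which yields no sale. Hence, for any pure profile, ties between distinct businesses $j,j'$ ($W_{ij}-P_j=W_{ij'}-P_{j'}>0$) occur with probability zero. This holds because $W_{ij},W_{ij'}$ are independent and the positive part of each is atomless. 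Similarly, the event $W_{ij}-P_j=0$ with $W_{ij}>0$ has probability zero.

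From this, demand $\Pr[W_{ij}-P_j>\max(0,\max_{j'\ne j}W_{ij'}-P_{j'})]$ is continuous in the full price vector by dominated convergence, so payoffs are continuous on $[0,\oV]^{|S|}$. Glicksberg's theorem then gives a mixed (indeed possibly pure) Nash equilibrium. If any residual discontinuity remains that I have overlooked, the fallback is Reny's better-reply security, or the Dasgupta--Maskin theorem for games with discontinuities confined to tie sets.
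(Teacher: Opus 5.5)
\textbf{The existence step fails for markets with two learned businesses of equal quality, including every symmetric market.} Your top-level reduction matches the paper's. The paper's entire argument is that, by Lemma~\ref{lem:equivalent.market}, existence of a market equilibrium with learned set $S$ follows from existence of a mixed Nash equilibrium in the $S$-transformed pricing game, and it takes that existence for granted.

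The gap is in the justification you add. Your continuity claim rests on $W_{ij}$ and $W_{ij'}$ being independent, but they are not. For Lemma~\ref{lem:equivalent.market} to hold, both effective values must be built from the \emph{same} consumer value $V_i$, with only the fit events independent across businesses. The lemma's proof uses this when it compares $V_iQ_j-P_j-c/\pi_j$ with $V_iQ_{j'}-P_{j'}-c/\pi_{j'}$. It is also what makes the revenue formula $(1-\pi G(p))^{n-1}pD(p)$ in Theorem~\ref{thm:symmetric.equil} correct. If you instead read ``independently'' in Definition~\ref{def:transformed-market} literally, the lemma you invoke fails, so the reduction breaks.

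With a common $V_i$, a tie $W_{ij}-P_j=W_{ij'}-P_{j'}>0$ with both businesses fitting means $V_i(Q_j-Q_{j'})=P_j+c/\pi_j-P_{j'}-c/\pi_{j'}$. This has probability zero only when $Q_j\neq Q_{j'}$. When $Q_j=Q_{j'}$, every consumer for whom both businesses fit and the common net value is positive ties, whenever prices lie on the hyperplane $P_j+c/\pi_j=P_{j'}+c/\pi_{j'}$. For two symmetric businesses both charging $p$, this event has probability $\pi D(p)$, which is positive whenever $D(p)>0$. Business $j$'s demand jumps by that amount as $P_j$ crosses the hyperplane. This is the Bertrand undercutting discontinuity that the no-atoms step in the proof of Theorem~\ref{thm:symmetric.equil} exploits, and it is why the symmetric equilibrium is genuinely mixed. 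So payoffs are not continuous, Glicksberg's theorem does not apply, and your argument only covers markets whose learned businesses have pairwise distinct qualities.

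Your closing fallback does not fill this gap, because checking the hypotheses of those theorems is exactly the work required. To use Dasgupta--Maskin, for example, you would need to verify three things:
\begin{itemize}
\item The revenue discontinuities lie on the sets $\{P_j-P_{j'}=c/\pi_{j'}-c/\pi_j\}$ for pairs with $Q_j=Q_{j'}$, which is an admissible form.
\item Each business's revenue is weakly lower semicontinuous in its own price, since undercutting wins ties.
\item Total revenue is upper semicontinuous. When $\pi_j\neq\pi_{j'}$, tied businesses charge different prices, and this condition holds only if an indifferent consumer inspects the higher-priced (higher-$\pi$) business first. The model leaves this tie-breaking open, since the search order among equal-index businesses is unspecified.
\end{itemize}
In the symmetric case you could instead invoke the explicit construction of Theorem~\ref{thm:symmetric.equil}. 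Without some such verification, the existence step remains unproved.
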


\subsection{Equilibria of Symmetric Markets}
\label{sec:strategic.firms.symmetric}

{We will now focus our attention on markets that satisfy a certain symmetry assumption. This will allow us to show uniqueness of symmetric market equilibrium, which will be helpful for exploring the equilibrium impact of improved search technology.
More specifically, the uniqueness property we require is that for any realization $S$ of the learned set of businesses, there is almost surely a unique symmetric market equilibrium with learned set $S$.
We will prove that this occurs if, almost surely over the realization of the set $S$ of learned businesses, all businesses $j \in S$ are symmetric.
}

\begin{definition}
    A market $\mM$ is \emph{learned-symmetric} if, almost surely over the realization $S$ of learned set $\learnedS(\mM)$,
    {all businesses in $S$ have the same type:}
    $Q_j = Q_{j'}$ and $\fitprob_j = \fitprob_{j'}$ for all $j, j' \in S$.
\end{definition}

\begin{remark}
A sufficient condition for learned-symmetry is for $\mP$, the prior distribution over business types $(\fitprob_j, Q_j)$, to be a point mass distribution. In this case all businesses in $\mM$ have the same type. But a market can be learned-symmetric even if $\mP$ is supported on multiple types, as long as all but a single type of business is lost {almost surely}. For example, if there is only one type $(\pi, Q)$ in the support of $\mP$ with $\sigma_j(0, \pi, Q, \overline{V}) > 0$, then by Proposition~\ref{prop:market.learning.pricing} all businesses with other parameter realizations \emph{must} be lost.  Such an example satisfies learned-symmetry while retaining independence across businesses and non-triviality of the learning dynamics.
\end{remark}

We now show that, for any learned-symmetric market and the realization of its set of learned businesses, there is almost surely a unique symmetric market equilibrium that can be expressed in closed form.
To define this equilibrium, it will be convenient to consider {a monopolist seller facing a consumer with a (random) value $W_{ij}$, as defined in Definition~\ref{def:transformed-market}.}
Write $D(p) = \Pr[ W_{ij} > p ]$ for the probability of sale at a given price $p$. We emphasize that
$D(p) \leq \pi$ for all $p \geq 0$.  Let $R^* = \sup_{p} p\cdot D(p)$ be the optimal revenue obtainable by a monopolist seller. Let $p^*$ be a price that achieves that revenue.

\begin{theorem}
\label{thm:symmetric.equil}
    {Given a learned-symmetric market $\mM$ and realization $S$ of the set of learned businesses, 
    there is almost surely a unique symmetric market equilibrium with learned set $S$.}
    In this equilibrium, if we take $n = |S|$, each business obtains expected revenue $(1-\pi)^{n-1}R^*$ and randomizes prices on range $[\underbar{p}, p^*]$ according to the cumulative distribution function
    \[   G(p) = \frac{1}{\pi}\left[ 1 - (1-\pi) \left(\frac{p^* D(p^*)}{p D(p)}\right)^{\frac{1}{n-1}} \right], \]
    where $\underbar{p}$ is the maximum value such that $G(\underbar{p}) = 0$.
\end{theorem}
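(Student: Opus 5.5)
By Lemma~\ref{lem:equivalent.market}, it suffices to analyze symmetric mixed Nash equilibria of the pricing game in the $S$-transformed market. Learned-symmetry makes this game symmetric almost surely: there are $n=|S|$ businesses with common $(\pi,Q)$, and the businesses outside $S$ are irrelevant. In this game each consumer has i.i.d.\ effective values $W_{ij}$ for the $n$ businesses. Each value is $0$ with probability $1-\pi$ and otherwise continuously distributed, since $\distV$ has a density. The consumer buys from the business maximizing $W_{ij}-P_j$, provided this surplus is positive.

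The first step is a structural reduction modeled on~\citet{choi2018consumer}. If business $j$ charges $p$ and every rival draws from $G$, then $j$ sells only if $W_{ij}>p$. Because $W_{ij}$ is either $0$ or an atomless draw, ties occur with probability zero. The price competition therefore has a Bertrand-with-captive-consumers structure: with probability $(1-\pi)^{n-1}$ no rival fits and $j$ is effectively a monopolist. The key claim is that in a symmetric equilibrium the consumer's choice among fitting businesses is decided by price alone. The interpretation is that the consumer ranks fitting businesses by index $V_iQ-P_j-c/\pi$, and this value term is common across fitting businesses, since it depends only on $V_i$. So the consumer compares $V_iQ-c/\pi-P_j$, and the lowest price among fitting businesses wins whenever $W_{ij}>p$.

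One caveat: $W_{ij}$ as written in Definition~\ref{def:transformed-market} draws $V_i$ independently per business. I would re-read it as a common $V_i$ across businesses, since that matches the original search market. With a common value, business $j$ charging $p$ sells with probability
\[
\Pr[W>p]\cdot\Pr[\text{every rival either fails to fit or prices above }p]
= D(p)\,\bigl(1-\pi G(p)\bigr)^{n-1},
\]
assuming $G$ has no atoms. The condition that the common value exceeds $p$ is absorbed into $D(p)$, and fit of the rivals is independent of it. Expected revenue is then $p\,D(p)\,(1-\pi G(p))^{n-1}$.

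Next I derive the equilibrium. I first rule out atoms in $G$ by the standard undercutting argument: with positive probability of a tie between fitting rivals, a business gains by shading its price down by $\epsilon$. I then argue that the support is an interval whose top point earns $\bar p D(\bar p)(1-\pi)^{n-1}$. Optimality at the top forces $\bar p=p^*$, giving equilibrium revenue $(1-\pi)^{n-1}R^*$. Indifference across the support, $pD(p)(1-\pi G(p))^{n-1}=(1-\pi)^{n-1}R^*$, solves to exactly the stated $G$. The lower endpoint $\underbar{p}$ is where $G=0$, i.e.\ where $pD(p)=(1-\pi)^{n-1}R^*$. To show there are no profitable deviations: prices above $p^*$ yield at most $(1-\pi)^{n-1}pD(p)\le(1-\pi)^{n-1}R^*$, and prices below $\underbar{p}$ yield $pD(p)$. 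This is where regularity of $\distV$ enters, since it makes $pD(p)$ quasi-concave. It ensures $pD(p)$ is increasing on $[0,p^*]$, so that $G$ is a valid increasing CDF and prices below $\underbar{p}$ give less than the equilibrium revenue. Uniqueness among symmetric equilibria then follows because every step of the reduction (no atoms, interval support, top at $p^*$, indifference) was forced.

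The main obstacle is the claim that ``price alone decides'' among fitting rivals. The tricky part is handling the common-$V$ structure and the event that $W>p$ holds for $j$ but not for a lower-priced rival. With a common $V$, a rival fits with $W>p'$ whenever $W>p>p'$, so that event cannot occur and the formula goes through. I would check this carefully, along with the boundary case where $V_iQ-c/\pi<0$, which is included in $D$.
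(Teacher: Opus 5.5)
Your proposal is correct and follows essentially the same route as the paper. You reduce to the $S$-transformed market and write a business's revenue at price $p$ as $pD(p)(1-\pi G(p))^{n-1}$. You then force no atoms, no gaps and the top of the support at $p^*$, which gives revenue $(1-\pi)^{n-1}R^*$ and the stated $G$ by indifference, and you rule out deviations using regularity below $\underline{p}$ and $(1-\pi)^{n-1}pD(p)\le(1-\pi)^{n-1}R^*$ above $p^*$. Your reading of $V_i$ as common across businesses is exactly what the paper's proof implicitly uses when it says a business at price $p$ sells only if every lower-priced rival has effective value $0$; beyond this, the paper only adds a remark that the construction is unchanged when the monopoly price $p^*$ is not unique.
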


To prove Theorem~\ref{thm:symmetric.equil}, we first note that setting a price above $p^*$ is a weakly dominated strategy for each business (since competition advantages lower prices). So suppose that all businesses choose prices in the range $[0,p^*]$, with no atom at $p^*$.  Then the revenue obtained by a business $j$ who sets price $p^*$ has an especially simple form: since they necessarily have the highest price, they will sell only if all other effective values are $0$, which occurs with probability $(1-\pi)^{n-1}$.  This lets us determine the revenue of each business at this supposed equilibrium: $(1-\pi)^{n-1} R^*$. The price distribution $G(p)$ is then constructed so that each price $p$ in an interval $[\underbar{p},p^*]$ achieves this revenue. There is no profitable deviation to a price higher than $p^*$ since competition only favors lower prices relative to a monopolist. Likewise, there is no profitable deviation to a price lower than $\underbar{p}$ since, at any price below $\underbar{p}$, expected revenue matches that of a monopolist (since all other businesses set higher prices) and hence any further reduction of price would decrease revenue by regularity. Uniqueness follows from establishing that any equilibrium must have no atoms and no gap in the support of its price distribution, plus the observation that equilibrium revenue must be at least $(1-\pi)^{n-1} R^*$ (since otherwise there would be a profitable deviation to price $p^*$).

Motivated by the uniqueness of symmetric equilibrium given the set $S$ of learned businesses,
we'll be interested in the following hypothetical process: the market dynamics converge to limit beliefs with learned set $S=\learnedS(\mM)$, and then the prices converge to the corresponding unique symmetric market equilibrium. We will write $\Phi(\mM,S)$ for the symmetric market equilibrium guaranteed by Theorem~\ref{thm:symmetric.equil} for market $\mM$ and learned set $S$.

\begin{definition}\label{def:canon}
Equilibrium $\CanEqm(\mM) := \Phi(\mM,\, \learnedS(\mM))$ is called the \emph{canonical equilibrium} for market $\mM$.
\end{definition}

\begin{remark}
While our market dynamics could plausibly converge to $\CanEqm(\mM)$, we emphasize that Proposition~\ref{prop:market.learning.pricing} only establishes convergence of the learned set and not prices. Rather, we posit $\Phi(\mM,S)$ as a long-run solution concept given the learned set $S$ to which the market converges. Note that $\CanEqm(\mM)$ is a random object, with randomness coming from the history (via the learned set).
\end{remark}

\subsection{The Impact of Improving Search in Symmetric Markets}

We first prove that reduced search costs improve expected consumer utility at equilibrium {(in some formal sense),}
as was the case with fixed prices.  Moreover, {we prove that the total welfare}
(sum of consumer utilities and business revenues) increases as well.  The intuition is that a reduction in search cost increases the effective value of every consumer for every business by an additive amount, which results in increased gains from trade that are shared between the two sides of the market. A technical challenge in proving this result is that businesses will naturally respond by pricing more aggressively at equilibrium; we must show that we retain a net gain in total welfare even after these price responses.

Stating this result precisely requires some additional notation. Let $\conW(\mM,\Phi)$ be the expected consumer welfare for market $\mM$ at equilibrium $\Phi$. Next, let
    $\conW(\mM) := \conW(\mM,\,\CanEqm(\mM))$
be this quantity evaluated at the canonical equilibrium; call it the \emph{canonical consumer welfare}.
Note that this is a random object, randomness coming from the history of $\mM$. This is the quantity that we'll compare across two markets. Similarly, define
$\totW(\mM,\Phi)$ for expected \emph{total} welfare at equilibrium $\Phi$, and
    $\totW(\mM) := \totW(\mM,\,\CanEqm(\mM))$
for the \emph{canonical total welfare.}

\begin{theorem}
\label{thm:cheaper.search.pricing}
Suppose symmetric markets $\mM,\tmM$ are identical,
except for a reduced search cost $\tilde{c} < c$. Then $\tmM$ has weakly higher canonical consumer welfare $\conW(\cdot)$ and canonical total welfare $\totW(\cdot)$. That is,
    $\conW(\mM)\leq\conW(\tmM)$
and
    $\totW(\mM)\leq\totW(\tmM)$,
almost surely over some coupling between the histories of the two markets.
\end{theorem}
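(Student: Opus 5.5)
The plan is to first couple the learned sets, then compare the two closed-form equilibria of Theorem~\ref{thm:symmetric.equil} directly. For the first step I would invoke Proposition~\ref{prop:market.lost.search.pricing}(a), read in its intended direction for lower search cost. It gives a coupling under which $\learnedS(\mM)\subseteq\learnedS(\tmM)$ almost surely. So with $n=|\learnedS(\mM)|$ and $\tilde n=|\learnedS(\tmM)|$ we have $\tilde n\ge n$. Learned-symmetry gives common parameters $(\pi,Q)$ for learned businesses in both markets. The reason is that the coupling fixes business parameters and the lost/learned status depends only on projections, so both learned sets consist of businesses of the same type.

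Second, I would rewrite both welfare notions through Lemma~\ref{lem:equivalent.market} and the $S$-transformed market. The key observation is that, for a given consumer, every fitting learned business has the \emph{same} effective value $w=V_iQ-c/\pi$. The consumer therefore buys from the cheapest fitting business whenever $w$ exceeds its price. Let $p_{\min}$ be the minimum price among fitting learned businesses. Then:
\begin{itemize}
\item canonical total welfare equals $\E\bigl[w\cdot 1\{w>p_{\min}\}\bigr]$;
\item canonical consumer welfare equals $\E\bigl[(w-p_{\min})^+\bigr]=\int_0^\infty H_n(p)\,\tfrac{D(p)}{\pi}\,dp$, where $H_n(p)=\Pr[p_{\min}\le p]$.
\end{itemize}
Each business independently fits and prices below $p$ with probability $\pi G(p)$. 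Using the formula for $G$ from Theorem~\ref{thm:symmetric.equil}, this gives
\[ 1-H_n(p)=(1-\pi G(p))^n=(1-\pi)^n\Bigl(\tfrac{R^*}{pD(p)}\Bigr)^{n/(n-1)} \]
on $[\underbar{p},p^*]$. Similarly, total welfare equals consumer welfare plus total revenue $n(1-\pi)^{n-1}R^*$.

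Third, I would split the comparison into two monotonicity steps: raising $n$ to $\tilde n$ at cost $c$, then lowering $c$ to $\tilde c$ at fixed $\tilde n$.
\begin{itemize}
\item \emph{Effect of $n$.} I would show that $H_n$ is pointwise increasing in $n$, i.e.\ $p_{\min}$ decreases in FOSD. Since $R^*/(pD(p))\ge 1$, this reduces to comparing $(1-\pi)^n x^{n/(n-1)}$ across $n$, and it gives the consumer-welfare claim immediately. For total welfare, note that a sale occurs iff $w>p_{\min}$. A stochastically smaller $p_{\min}$, independent of $w$, weakly raises $\E[w\,1\{w>p_{\min}\}]$ because $w\ge 0$ on the relevant event.
\item \emph{Effect of $c$.} Lowering $c$ shifts $w$ up by $\delta=(c-\tilde c)/\pi$, so $\tilde D(p)=D(p-\delta)$. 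Total welfare is then easier: it equals the surplus of the best sale, and each consumer's gains from trade rise by $\delta$ on every trade. I would compare $\E[\tilde w\,1\{\tilde w>\tilde p_{\min}\}]$ with the shifted old market's $\E[(w+\delta)1\{w+\delta>p_{\min}+\delta\}]$. The aim is to show that $\tilde p_{\min}\le_{\FOSD} p_{\min}+\delta$, i.e.\ that equilibrium prices rise by less than the shift. Consumer welfare would follow from the same inequality, since $(\tilde w-\tilde p_{\min})^+$ then dominates $(w-p_{\min})^+$.
\end{itemize}

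I expect the main obstacle to be exactly this inequality $\tilde p_{\min}\le_{\FOSD}p_{\min}+\delta$ under the cost change. The monopoly price $\tilde p^*$ and revenue $\tilde R^*$ both move with $\delta$, and $G$ depends on them nonlinearly. My first attempt would use regularity: the monopoly price of a value shifted by $\delta$ rises by at most $\delta$, and more generally $p\mapsto (p+\delta)D(p)/R^*_\delta$ compares favorably with $pD(p)/R^*$. I would combine this with the explicit formula for $1-H_n$, checking pointwise that $(\tilde R^*/(\tilde pD(\tilde p-\delta)))$ at $\tilde p=p+\delta$ is at most $R^*/(pD(p))$. If that pointwise comparison fails, the fallback is to argue total welfare directly through the revenue identity. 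Consumer welfare would then be obtained as total welfare minus $n(1-\pi)^{n-1}R^*$, bounding the revenue increase $(1-\pi)^{n-1}(\tilde R^*-R^*)\le(1-\pi)^{n-1}\delta D(\cdot)$ by the welfare gain.
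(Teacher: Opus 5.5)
Your architecture matches the paper's. You couple so that $\learnedS(\mM)\subseteq\learnedS(\tmM)$, and you are right that Proposition~\ref{prop:market.lost.search.pricing} must be read with the inclusion reversed. You handle the change from $n$ to $\tilde n$ by monotonicity. At fixed $n$ you reduce both welfare claims to $\tilde p_{\min}\le_{\mathrm{FOSD}}p_{\min}+\delta$, which by independence across businesses is equivalent to the paper's key claim $G(p)\le\tilde G(p+\delta)$ for all $p$. The surrounding pieces are sound and in places tidier than the paper's:
\begin{itemize}
\item the identities $\E[(w-p_{\min})^+]$ and $\E[w\,1\{w>p_{\min}\}]$ hold. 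For the first, note that each search at price $P$ contributes $\pi(VQ-P)-c=\pi(w-P)$ in expectation; Lemma~\ref{lem:equivalent.market} by itself only covers seller payoffs;
\item $1-H_n=\min\{1,(1-\pi)^n x^{n/(n-1)}\}$ is indeed monotone in $n$, with $n=1$ treated separately;
\item you correctly take the shift to be $\delta=(c-\tilde c)/\pi$.
\end{itemize}

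The gap is the step you flag as the main obstacle. It is never proved, and the pointwise check you propose fails on part of the support of $G$. At $p=p^*$ the right-hand side $R^*/(p^*D(p^*))$ equals $1$. The left-hand side $\tilde R^*/((p^*+\delta)D(p^*))$ is at least $1$, and strictly larger whenever $\tilde p^*<p^*+\delta$. For example, with $w$ uniform on $[0,1]$ and $\delta<1$, it equals $(1+\delta)^2/(1+2\delta)$. The failure comes from evaluating the closed form of $\tilde G$ at $p+\delta>\tilde p^*$, outside its support; there in fact $\tilde G(p+\delta)=1$. Your fallback does not close this either. Bounding the revenue increase by ``the welfare gain'' requires a lower bound on the welfare gain, and that needs the same comparison of price distributions.

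The missing idea is to split at $\hat p:=\tilde p^*-\delta$.
\begin{itemize}
\item Your ``rises by at most $\delta$'' fact says exactly that $\hat p\le p^*$: adding the decreasing term $\delta D(p)$ to $pD(p)$ cannot move the maximizer up.
\item For $p\ge\hat p$, we have $\tilde G(p+\delta)=1\ge G(p)$.
\item For $0<p\le\hat p$, using $\tilde D(p+\delta)=D(p)$,
\[ \tilde R^*\,p=(\hat p+\delta)D(\hat p)\,p\le R^*p+\delta\,pD(\hat p)\le R^*p+\delta\,\hat pD(\hat p)\le R^*(p+\delta). \]
This is exactly $\tilde R^*/((p+\delta)D(p))\le R^*/(pD(p))$. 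Since $x\mapsto\frac1\pi[1-(1-\pi)x^{1/(n-1)}]$ is decreasing and truncation at $0$ preserves order, $G(p)\le\tilde G(p+\delta)$. For $n=1$ the first case already suffices, since $G(p)=0$ for $p<p^*$.
\end{itemize}
The paper proves the same thing in quantile form. It writes $\tilde R(q)=R(q)+\delta q$ and uses concavity to get $\tilde q^*\ge q^*$. For $q\ge\tilde q^*$ it shows $\tilde R^*/\tilde R(q)\le R(\tilde q^*)/R(q)\le R^*/R(q)$; the remaining quantiles are trivial because there $\tilde G=1$. With this step inserted, your argument is complete.
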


\begin{proof}[Proof Sketch]
Write $D(p) = \Pr[W_{ij} \geq p]$ where $W_{ij}$ is the effective value from Definition~\ref{def:transformed-market}.
and let $p^*$ be the maximizer of $p D(p)$.
For a given quantile $q \in [0,1]$, let $p(q)$ be the price such that $D(p(q)) = q$.  Then the quantile revenue curve $R(\cdot)$ is defined as $R(q) = q p(q)$.  Note that smaller quantiles correspond to higher prices in this formulation.  Moreover, since regularity of $\mathcal{D}^V$ implies regularity of {$W_{ij}$},
$R(\cdot)$ is concave.

\begin{figure}[t]
    \centering
    \begin{tabular}{cc}
    \includegraphics[width=0.45\textwidth]{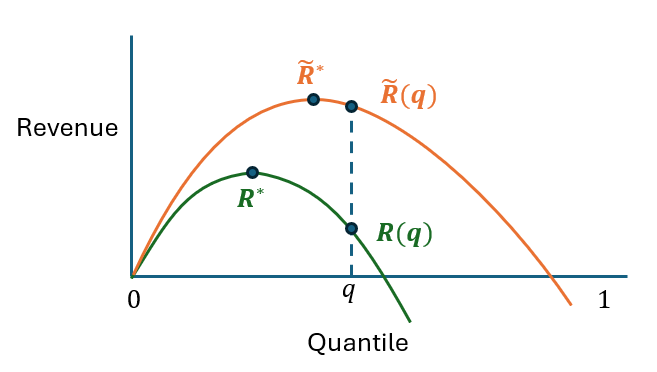}
    &
    \includegraphics[width=0.45\textwidth]{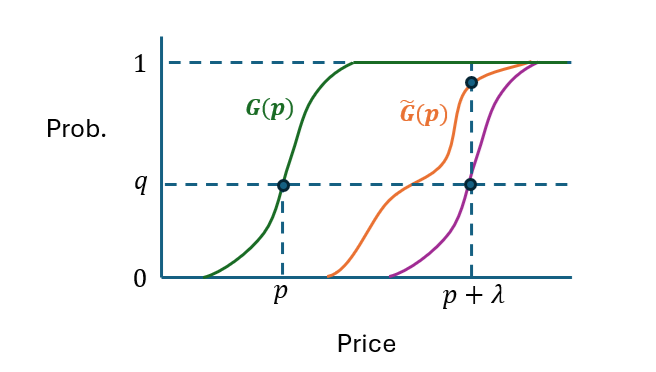}\\
    \quad (a) & \qquad (b)
    \end{tabular}

    \caption{A visualization of the proof of Theorem~\ref{thm:cheaper.search.pricing}. Left: the monopolist quantile revenue curves for markets $\mM$ (green) and $\tmM$ (orange). The ratio $\tilde{R}^*/\tilde{R}(q)$ is at most $R^*/R(q)$. Right: equilibrium price CDFs for markets $\mM$ (green) and $\tmM$ (orange). Purple is an additive shift of $G(p)$ by $\lambda$. Equilibrium conditions imply that the orange curve lies to the left of purple, so $\tilde{G}(p+\lambda) \geq G(p)$, which implies average prices rise by at most $\lambda$.}
    \label{fig:cheaper.search.pricing}
\end{figure}

The proof
proceeds in three steps. First, we explore how reduced search costs influence the quantile revenue curve. In the equivalent market without search, reducing search costs translates to increasing each consumer's effective value for each business (conditional on being positive) by a constant additive amount, say $\lambda \geq 0$.  This adjustment to the value distribution changes the quantile revenue curve in a particularly structured way: the revenue curve changes from $R(q)$ to $R(q)+\lambda q$, increasing both the monopolist optimal price and the derivative of the revenue curve at every quantile.\footnote{Note that the probability $\fitprob$ that $W_{ij}$ is positive is accounted for in the quantile representation of revenue curve $R(q)$. In particular, $R(q)$ is non-positive for any $q > \fitprob$, and we need not scale the revenue increase $\lambda q$ by an additional factor of $\fitprob$.} See Figure~\ref{fig:cheaper.search.pricing}(a) for an illustration.

Second, we consider how this change to the revenue curve influences the equilibrium price distribution $G$ from Theorem~\ref{thm:symmetric.equil}. Note that in the expression for $G$, a relevant quantity is $p^*D(p^*) / p D(p)$, which is $R^*/R(q)$ for some quantile $q$.
If we write $\tilde{R}(q) = R(q) + \lambda\pi q$ for the modified revenue curve and $\tilde{R}^*$ for its maximum, then
the increased optimal price and revenue curve derivatives imply that
$\tilde{R}^*/\tilde{R}(q)$ is greater than $R^*/R(q)$ at every quantile $q$ greater than the maximizer of $R^*$ (which corresponds to prices below $p^*$). See Figure~\ref{fig:cheaper.search.pricing}(a).
In terms of the price distribution $G$, this inequality translates into a weakly higher likelihood, in market $\tmM$, of sellers choosing a price below any fixed quantile of the value distribution.  Translating from quantiles back to prices, this means that the CDF curve $\tilde{G}$ lies above an additive shift of CDF curve $G$ by $\lambda$; see Figure~\ref{fig:cheaper.search.pricing}(b) for an illustration.

Third and finally, this relationship between the price distributions implies that each consumer type is only more likely to transact in market $\tmM$ and also implies that average prices at equilibrium increase by no more than $\lambda$. Since all consumer values increase by $\lambda$, we conclude that average consumer utility must weakly increase, completing the proof.
\end{proof}

Theorem~\ref{thm:cheaper.search.pricing} describes the impact on consumer utility and total welfare, but what is the impact on business revenue?  There are certainly cases where average business revenue increases when search costs decrease, such as when a monopolist seller faces consumers of known value.~\footnote{Suppose all consumers have value $1$ and there is a single business with quality $Q=1$ and fit probability $\pi=0.5$.  If search cost is $c=0.2$ then the buyer's effective value is $1-c = 0.8$ with probability $\pi = 0.5$, so the seller's optimal price is $0.8$ for an expected revenue of $0.4$.  But if search costs are $c=0$ then the optimal price is $1$ for an expected revenue of $0.5$.}  In general, however, average seller revenue could decrease when search costs are reduced.  The reason is that high search costs can cause more businesses to be lost, which reduces competition between sellers and increases the revenue of those that remain.  In some instances this effect is so stark that each would seller prefer to take the higher risk of being lost.  This is the only driver of lower seller revenue, however: conditional on the same number of sellers being lost, lower search costs can only improve seller revenue.  See Appendix~\ref{sec:informative.search.revenue.impact} for a formal statement and proof.

We next examine the impact of making search more informative for consumers.  Unlike reducing search costs, and unlike Theorem~\ref{thm:market.learning.transcripts} from the case of exogenously fixed prices,
the impact of more informative search on consumer surplus is ambiguous when businesses can price strategically. On one hand, more informative search leads to higher expected consumer value conditional on fit. On the other hand, more selective screening during consumer search means that each business faces less effective competition, which can allow them price more aggressively, extracting more of the consumer's surplus as revenue and distorting the allocation. The following proposition shows that  there are instances where the latter effect dominates the former.

\begin{proposition}
\label{prop:inform.search.pricing}
There exists symmetric markets $\mM,\tmM$ with transcribed sequential screening, that are identical except that $\tmM$ is more consumer-informative,
such that no businesses are ever lost in either market and
$\tmM$ has strictly lower canonical consumer welfare: $\conW(\tmM)<\conW(\mM)$.
\end{proposition}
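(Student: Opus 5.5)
The plan is an explicit construction in which the market $\mM$ is nearly Bertrand-competitive and the more consumer-informative market $\tmM$ is nearly a set of local monopolies. Welfare in both markets will be computed with the closed form of Theorem~\ref{thm:symmetric.equil} and the equivalence of Lemma~\ref{lem:equivalent.market}.

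\textbf{Construction.} Take $m=n=2$ businesses and search cost $c=0$. Let $\distV$ be uniform on $[0,1]$, which is regular with a density. Each business has $r=2$ independent screens, with $\Pr[f_{ij1}=1]=1-\eta$ for a small parameter $\eta>0$ and $\Pr[f_{ij2}=1]=1/2$. The prior $\mP$ is a point mass on these parameters, and both markets have transcribed screening. Market $\mM$ has search degree $k=1$, so $\fitprob=1-\eta$ and $Q=1/2$. Market $\tmM$ has search degree $\tilde k=2$, so $\tilde\fitprob=(1-\eta)/2$ and $\tilde Q=1$.

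\textbf{No business is lost.} Because the prior is a point mass, beliefs are correct from round one. With $c=0$, the index at price $0$ is $\overline{V}Q>0$ in both markets. By Proposition~\ref{prop:market.learning.pricing}, both businesses are learned almost surely. Hence the learned set is $S=[2]$ in both markets, both are learned-symmetric, and the canonical equilibrium is the deterministic equilibrium given by Theorem~\ref{thm:symmetric.equil}. In the $S$-transformed market, the effective value is $W=VQ$ with probability $\fitprob$ and $0$ otherwise.

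\textbf{Upper bound for $\tmM$.} Total welfare in $\tmM$ is at most $\E[V]\cdot\Pr[\text{some business fits}] = \tfrac12\bigl(1-(1-\tilde\fitprob)^2\bigr)$, which tends to $3/8$ as $\eta\to0$. The revenue of each business is $(1-\tilde\fitprob)R^*$, where $R^*=\tilde\fitprob\cdot\max_p p(1-p)=\tilde\fitprob/4$. Total revenue therefore tends to $2\cdot\tfrac12\cdot\tfrac18=1/8$. So $\conW(\tmM)\le 3/8-1/8 = 1/4$ in the limit. The bound is in fact strict by a fixed margin, because equilibrium prices are bounded away from $0$. The lower end $\underbar{p}$ of the support solves $pD(p)=\tfrac12 R^*$, so $\underbar{p}$ is a positive constant. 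Consumers who have a fit but $V<\underbar{p}$ never buy, which removes a constant amount $\Theta(\underbar{p}^2)$ of welfare.

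\textbf{Lower bound for $\mM$.} Consumer welfare in $\mM$ tends to $1/4$ as $\eta\to0$. First, each business earns revenue $\eta R^*\to 0$. Second, the transaction probability approaches $1$ and $\E[VQ]=1/4$, so total welfare is close to $1/4$ once the trade lost to positive prices vanishes. The lost trade is at most $Q\,\E[P_{\min}^2]/2$, where $P_{\min}$ is the lower of the two equilibrium prices. To bound this, I will use the formula for $G$ in Theorem~\ref{thm:symmetric.equil} with $n-1=1$: $G(p)=\frac{1}{\fitprob}\bigl[1-\eta\, R^*/(pD(p))\bigr]$. This shows that the support $[\underbar{p},p^*]$ carries mass at most $O(\eta/\varepsilon)$ above any fixed $\varepsilon$. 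Hence $\E[P_{\min}^2]\to0$, and $\conW(\mM)\to 1/4$.

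\textbf{Conclusion and main obstacle.} For $\eta$ small enough, $\conW(\tmM)<\conW(\mM)$. The main obstacle is making the strict gap in $\tmM$ quantitative. This means pinning down $\underbar{p}$ explicitly and showing that the deadweight loss it induces exceeds the $O(\eta)$ slack in $\mM$ uniformly as $\eta\to0$. I would try this first by direct computation with the uniform distribution. If that becomes messy, I would instead set $\eta$ to a specific small value and evaluate both welfare integrals numerically.
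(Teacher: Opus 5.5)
Your construction is correct, and its skeleton matches the paper's proof. Both use two businesses and $c=0$, move a screen of probability $1/2$ from post-transaction feedback into the inspection, put fit probability near $1$ in $\mM$, and compute $\conW$ as total surplus minus the equilibrium revenue $n(1-\fitprob)^{n-1}R^*$ from Theorem~\ref{thm:symmetric.equil}.

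The one real difference is the value distribution, and it changes where the strict inequality comes from.
\begin{itemize}
\item The paper takes $\distV$ to be a point mass at $V=1$, so no consumer with a fit is ever priced out in either market. The comparison is then a one-line computation, $\conW(\mM)=\tfrac12-O(\epsilon)$ versus $\conW(\tmM)=\tfrac14-O(\epsilon^2)$, and the gap comes entirely from rent extraction.
\item Your uniform $\distV$ respects the model's standing assumption that $\distV$ has a density, which the paper's point mass does not. But it makes the two markets tie at $\tfrac14$ to first order, so strictness rests on deadweight loss and the margin is thin.
\end{itemize}

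The ``main obstacle'' you flag is already handled by your own observation. In $\tmM$, $\underline{p}$ solves $p(1-p)=(1+\eta)/8$, so $\underline{p}\ge(1-1/\sqrt2)/2\approx 0.146$ for every $\eta\ge 0$. Hence
\[ \conW(\tmM)\le \frac{1-\underline{p}^2}{2}\Bigl(1-\frac{(1+\eta)^2}{4}\Bigr)-\frac{1-\eta^2}{8}=\frac14-\frac38\,\underline{p}^2+O(\eta). \]
In $\mM$, the lower endpoint of the price support solves $8p(1-2p)=\eta$, so it is $O(\eta)$. The explicit $G$ then gives $\E[P^2]\le O(\eta^2)+\frac{\eta}{1-\eta}\cdot\frac{\ln 2}{8}=O(\eta)$. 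Therefore
\[ \conW(\mM)\ge (1-\eta)^2\bigl(\tfrac14-\E[P_{\min}^2]\bigr)-\tfrac{\eta(1-\eta)}{4}=\tfrac14-O(\eta). \]
No numerics are needed, and the gap is about $0.008$ for small $\eta$.

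One harmless slip: when both businesses fit, the surplus lost to prices is $\E[P_{\min}^2]/(2Q)$, not $Q\,\E[P_{\min}^2]/2$.

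If you want both a density and a first-order gap, take $V$ uniform on $[1-\delta,1]$. Then every equilibrium price is at most $(1-\delta)Q\le VQ$ in each market, so nobody is priced out. The computation becomes essentially the paper's, with gap $\tfrac14-O(\delta)-O(\eta)$.
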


\section{Discussion}
\label{sec:discussion}

Motivated by the emergence of AI agents that can search and transact on behalf of consumers, we introduced a tractable model of sequential consumer search with market-level learning. 
Our central analysis of this model focuses on two key ways agentic technology can improve search: by lowering its cost and making it more informative.
We show that changes along these dimensions can have quite different aggregate effects. 
Cheaper search unambiguously improves market-level learning and consumer welfare by encouraging exploration. 
More informative search, in contrast, can degrade both learning and welfare when the platform does not observe features that explain why businesses pass or fail inspection. 
When prices are endogenous, cheaper search continues to benefit consumers by intensifying competition, while more informative search can reduce consumer surplus by weakening competitive pressure {even if the market observes explanations for failed inspections}. 
The core insight is that long-run outcomes hinge upon what the platform observes and aggregates.

\xhdr{Limitations and extensions.}
In the interest of tractability, our model makes several simplifying assumptions, many of which point to natural extensions.

For example, we employ a satisficing utility model with binary fit signals: consumers receive their full value if satisfied and zero otherwise, with products treated as perfect substitutes conditional on satisfaction.
Richer models could allow partial fit or multi-dimensional preferences, where consumers might vary inspection depth across businesses {at a cost} or backtrack to previously explored options.
We also assume consumers always provide truthful feedback after transactions, but in practice feedback may be strategic, noisy, or absent.
Understanding how robust our results are to these violations is an important direction for future work.

Our model holds business quality fixed throughout but in practice businesses can invest to improve quality in response to market conditions, raising the question of whether improved search technology increases or decreases investment incentives.
Better search helps consumers identify high-quality businesses, increasing returns to quality investment, but may also intensify price competition and shrink margins.
This interaction operates on a slower timescale than the pricing dynamics we study but could fundamentally alter long-run market structure, {especially since investment by businesses can cause previously-learned quality estimates to become stale}.

We also assume homogeneous search costs across consumers, with sequential arrival and each consumer visiting the market only once.
Adoption of agentic technology is likely to be uneven, however, with different populations experiencing different effects.
This raises distributional questions: who benefits from agentic search technology?
If early adopters differ systematically from later adopters, the information they generate may be more or less relevant for other consumer types.
Understanding these mixed-population dynamics is important for assessing broader welfare implications.\looseness=-1

While our analysis highlights the importance of platforms observing rich signals from agent interactions, an important question is how to do so in a privacy-preserving manner.
Our results suggest that platforms need only share aggregate information (such as means and counts) rather than individual transcripts to enable effective market-level learning.
However, the broader design space of privacy-preserving aggregation mechanisms---including techniques from differential privacy and secure multi-party computation---remains an interesting area for future research.

Finally, we assume agents faithfully represent the interests of the parties they serve, with no manipulation of signals or feedback.
Relaxing this assumption to allow for strategic behavior by agents, businesses, or platforms would add realism but also complexity.

\xhdr{Implications for platform design.}
Our results highlight several considerations for platforms deploying agentic search tools.
Platforms should invest in observation infrastructure that captures not just transaction outcomes but the signals leading to them---for example, recording which requirements cause businesses to fail inspection.
Such transcript-level observation can resolve tensions between individual and collective benefits.
Platforms also face tradeoffs when balancing innovation velocity against information quality: deploying more sophisticated agents quickly may come at the cost of degraded market-level learning.
When businesses set prices endogenously, design choices that improve consumer matching may inadvertently shift surplus away from consumers by reducing competitive pressure.
Finally, platforms optimizing for objectives other than consumer welfare (such as revenue or engagement) may make choices that exacerbate these tensions.

\xhdr{Concluding remarks.}
Agentic technology represents an opportunity to redesign how information flows through markets and, along with it, the market mechanisms that govern outcomes and welfare.
Theoretical models like ours can help platform designers understand the implications of different design choices before deployment.
As AI agents become more capable and widely adopted, the questions raised by our analysis---how platforms should observe and aggregate agent interactions, how competition evolves when agents mediate transactions, and how benefits are distributed across consumer populations---will grow in practical significance.
We hope this model provides a tractable foundation for future work on agentic markets, including richer preference and utility models, strategic and adversarial behavior, empirical calibration and measurement, and platform design and policy.\looseness=-1

\bibliographystyle{plainnat}
\bibliography{main,bib-abbrv,bib-slivkins,bib-bandits,bib-AGT}

\newpage

\appendix

\addtocontents{toc}{\protect\setcounter{tocdepth}{2}}
\renewcommand{\contentsname}{APPENDICES}

\section{Technical Connection to Pandora's Box Problem and Multi-Armed Bandits}
\label{app:connection}

This appendix spells out the technical connection between our model and these well-studied problems from prior work. 

Each consumer faces an instance of Weitzman's \emph{Pandora's box problem} ~\citep{weitzman1978optimal}, where businesses $j$ correspond to ``boxes". The boxes are inspected via the same process as described above, then one of the inspected boxes can be ``opened" and its reward is realized as an independent random draw from some known distribution. For us, the reward distribution for box $j$ is given by the current Bayesian posterior on the quantity in \eqref{eq:consumer.utility}. Within Pandora's box problem, we focus on the special case when the inspection outcome is binary. Moreover, we posit \emph{obligatory inspection} (a box must be inspected before it is opened) and an \emph{outside option} (not opening any box is also a possibility). An optimal solution, for general inspection outcomes, is given by the celebrated Weitzman's algorithm \citep{weitzman1978optimal}; we leverage it in Section~\ref{sec:search}.

Collectively, the consumers face a learning problem related to multi-armed bandits, where businesses correspond to ``arms". In a bandit problem, one ``pulls" some arm in each round, observing a random reward drawn from some fixed but unknown distribution specific to this arm \citep{bandits-ucb1}. Consider a  variant of our model when all fits $F_{ij}$ are known (and hence there's no point in inspecting any business). Moreover, assume all prices are $P_j\equiv 0$ and all values are $V_i\equiv 1$. If all fits are $F_{ij}=1$, this model variant corresponds to the bandit problem described above, where reward distribution of each arm $j$ is Bernoulli with unknown mean $Q_j$. %
\footnote{Bernoulli rewards is a paradigmatic special case in bandits (along with Gaussian rewards).}
For general fits $F_{ij}$, we obtain the \emph{sleeping bandits} problem \citep{sleeping-colt08}, where  the set of available arms changes from one round to another.
For general values $V_i$ (and going back to $F_{ij}\equiv 1$) we have a special case of \emph{contextual bandits}: a bandit problem in which a payoff-relevant \emph{context} is revealed before each round.%
\footnote{Contextual bandit problems come in many variants \citep[e.g., see books][]{slivkins-MABbook,LS19bandit-book}. We have a rather simple special case, where the context is $V_i$ and mean rewards are given by \eqref{eq:consumer.utility}.}
While in bandit problems one typically designs an algorithm, consumer behavior in our model corresponds to the ``greedy" bandit algorithm: one that chooses the posterior-best arm in each round. Of course, the crucial difference of our model compared to bandits is that each consumer can ``inspect" arms before ``pulling" them.

\section{Missing Proofs from Section~\ref{sec:convergence}}

\subsection{Proof of Lemma~\ref{lem:indep.signals}}
\begin{proof}
    Fix any partial history realization $\hist_{i-1} = H$ and suppose consumer $i$ with value $V_i$ arrives in round $i$. The consumer's choice of which firms to inspect, and in which order, is fully determined by $c$, $V_i$, and by $(\priorF_{ij}(H_{i-1}), \priorQ_{ij}(H_{i-1}))_j$, so in particular does not reveal any further information about $\fitprob_j$ and $Q_j$ beyond what is in $H_{i-1}$.  The updated posterior beliefs $\fitprobest_j(H_i)$ and $\hat{Q}_j(H_i)$ are therefore determined entirely by the previous beliefs plus the inspection outcomes and post-transaction feedback.  As only the inspection outcomes and post-transaction feedback for business $j$ are correlated with $\fitprob_j$ and $Q_j$, the updated posteriors are determined entirely by information contained in $H|_j$.
\end{proof}

\subsection{Proof of Proposition~\ref{prop:market.learning}}
\begin{proof}
    Consider any infinite history $H$.  For any business $j$, either that business is inspected infinitely often in history $H$ or it is not. Write $A$ for the set of businesses inspected infinitely often.

    If $j \not\in A$ then there is a time after which $j$ is no longer selected in the optimal consumer search process. Let $T$ be the maximum such time over all $j \not\in A$.  We claim that $\priorF_{ij}(H) = \priorF_{ij}(H_T)$ and $\priorQ_{ij}(H) = \priorQ_{ij}(H_T)$ for all $j \not\in A$ and all $i \geq T$.  This follows because each such $j$ is not searched at any time $t > T$, so Lemma~\ref{lem:indep.signals} implies that $\priorF_{ij}(H_{i-1}) = \priorF_{ij}(H_T)$ and $\priorQ_{ij}(H_{i-1}) = \priorQ_{ij}(H_T)$ for all $i > T$, and hence the limits exist and equal the claimed quantities.

    Next, for each $j \in A$, we claim that $(\priorF_{ij}, \priorQ_{ij})$ converge in probability to $(\fitprob_j, Q_j)$.  Indeed, since $j\in A$, it is inspected infinitely often and therefore $H|_j$ contains an infinite sequence of independent Bernoulli signals of mean $\fitprob_j$. Further, since $\fitprob_j > 0$, $H|_j$ also contains infinitely many independent Bernoulli signals of mean $Q_j$ from post-transaction feedback. Thus, by Lemma~\ref{lem:indep.signals}, the posterior beliefs converge to the means from which these signals are drawn.

    Finally, since $\fitprob_j < 1$ for all $j$, any business $j$ for which $\sigma_j(\fitprobest_{ij}, \hat{Q}_{ij}, \overline{V}) > 0$ would be inspected by consumer $i$ with positive probability bounded away from $0$.  As $\fitprobest_{ij}$ and $\hat{Q}_{ij}$ do not change until business $j$ is inspected, this implies that business $j$ will be inspected in some round at or after round $i$ with probability $1$.
    Thus, if $j \not\in A$, then there must exist some $i$ for which $\sigma_j(\fitprobest_{ij}, \hat{Q}_{ij}, \overline{V}) \leq 0$, as otherwise $j$ is inspected infinitely often.  This implies that every business that isn't learned is lost, as required.
\end{proof}

\subsection{Proof of Proposition~\ref{prop:lost}}
\begin{proof}
For any market $\mM$ and business $j$, let $H|_j^{\mM}$ denote the projection of $\hist$ onto business $j$.  Then the marginal distribution over $H|_j^{\mM}$, the projection of history $\hist$ onto business $j$, has an especially simple form.
Each inspection outcome in $H|_j^{\mM}$ is an independent draw from a Bernoulli distribution of mean $\pi_j$ and, when it is $1$, the corresponding feedback outcome is an independent draw from a Bernoulli distribution of mean $Q_j$. In particular, $H|_j^{\mM}$ and $H|_{j'}^{\mM}$ are independent for any two businesses $j$ and $j'$.

Let $(\fitprobest_{jt}, \hat{Q}_{jt})$ denote the posterior mean estimates held by a consumer given the first $t$ entries of $H|_j^{\mM}$. Note that this is not the same as the estimates held by consumer $t$.
Let $\omega_j$ be the probability that there exists any $t$ such that $\sigma_{ij}(\fitprobest_{jt}, \hat{Q}_{jt}, \overline{V})\leq 0$. Then, by Lemma~\ref{lem:indep.signals}, $\omega_j$ is precisely the probability that $j$ is lost in history $\hist$. And since $\omega_j$ depends only on the marginal distribution over $H|_j^{\mM}$, which are independent across businesses, we conclude that each business is lost independently with  probability $\omega_j$.

Finally, recall that $\sigma_{ij}(\fitprobest_{jt}, \hat{Q}_{jt}, \overline{V}) = \overline{V} \hat{Q}_{jt} - c / \fitprobest_{jt}$, which is weakly increasing in $\overline{V}$, $\hat{Q}_{jt}$, and $\fitprobest_{jt}$, and weakly decreasing in $c$.  Since this is the threshold at which business $j$ is lost, we therefore have that $\omega_j$ is likewise weakly increasing in $\overline{V}$, $\hat{Q}_{jt}$, and $\fitprobest_{jt}$, and weakly decreasing in $c$.  Since random variables $\hat{Q}_{jt}$ and $\fitprobest_{jt}$ are positively correlated with $Q_j$ and $\fitprob_j$, respectively, we have that $\omega_j$ is weakly increasing in $Q_j$ and $\fitprob_j$ as well. {Finally, as the negative term $-c/\fitprobest_{jt}$ tends to $0$ as $c \to 0$, the set of $(\fitprobest_{jt}, \hat{Q}_{jt}, \overline{V})$ for which $\sigma_{ij}(\fitprobest_{jt}, \hat{Q}_{jt}, \overline{V}) < 0$ vanishes in the limit as $c \to 0$.}
\end{proof}

\subsection{Proof of Lemma~\ref{lem:util.converge}}
\begin{proof}
First recall that for each lost business $j \not\in \learnedS$, there is some finite time at which the business is lost.  As there are only finitely many businesses, there is a fixed time $t_0$ after which no further businesses are lost.  We will restrict attention to consumers who arrive to the market following round $t_0$.

Let $U(V; \mM, H)$ denote the optimal utility obtained by a consumer with value $V$ in market $\mM$, following history $H$.
Define
\[ \tilde{U}(V, H) := \E[ U(V; \mM, H)\ |\ H ]\]
to be the expected utility obtained by a consumer with value $V$ given prior history $H$, evaluated with respect to the consumer's posterior beliefs over $\mM$ given $H$.  In other words, $\tilde{U}(V,H)$ is the consumer's expectation about their utility given their belief about $\mM$.

Recall that Algorithm~\ref{alg:index} optimizes expected utility over all search orders (i.e., permutations) of the businesses. Moreover, for any given search order, the expected utility obtained by consumer $i$ given a past history $H$ is continuous in $\fitprobest_{j}(H)$ and $\hat{Q}_{j}(H)$, for all $j$.  Since $\tilde{U}(V,H)$ is a maximum over the utilities obtained from a finite set of search orders, we conclude that $\tilde{U}(V,H)$ is continuous in $(\fitprobest_{j}(H),\hat{Q}_{j}(H))_j$ as well.
Thus, since $\lim_{i\to \infty}\fitprobest_{ij} = \fitprob_j$ and $\lim_{i\to \infty}\hat{Q}_{ij} = Q_j$ for all $j \in \learnedS$ by Proposition~\ref{prop:market.learning}, and since each consumer that arrives after round $t_0$ only inspects businesses in $\learnedS(\hist)$, we conclude that
\[ \lim_{t \to \infty}\tilde{U}(V, \hist_{i-1}) = U^*(\learnedS). \]
In other words, the consumer beliefs about their anticipated utility converge over time to the static comparator $U^*(\learnedS)$.

It remains to relate these consumer beliefs to the true expected utility obtained by the consumer.
Convergence of posterior mean estimates implies that $\lim_{i\to \infty}|\fitprobest_{ij} - \fitprob_j| = 0$ and $\lim_{i\to \infty}|\hat{Q}_{ij} - Q_j| = 0$.  Continuity of $\tilde{U}(V, H)$ in $\fitprobest_{j}(H)$ and $\hat{Q}_{j}(H)$ then implies
\[ \lim_{i \to \infty}| \tilde{U}(V, \hist_{i-1}) - U(V, \mM, \hist_{i-1})| = 0. \]
A standard sandwich argument therefore implies
\[ \lim_{i \to \infty} U(V, \mM, \hist_{i-1}) = \lim_{i \to \infty} \tilde{U}(V, \hist_{i-1}) = U^*(\learnedS) \]
as required.
\end{proof}

\section{Missing Proofs from Section~\ref{sec:compare}: Improved Consumer Search}

\subsection{Proof of Theorem~\ref{thm:cheaper.search}}
\begin{proof}
    We will couple the histories of $\mM$ and $\tmM$ as follows. First, we draw the infinite sequence of consumer values $V_1, V_2, \dotsc$, which will be used for both histories.  Second, for each business $j$, we independently draw an infinite sequence of fit realizations, each an independent Bernoulli random variable with mean $\pi_j$.  We also draw an infinite sequence of quality realizations, each an independent Bernoulli random variable with mean $Q_j$.  Write $\Omega_j$ for this pair of binary value sequences.

    To generate history $\hist$, each consumer $i$ will have value $V_i$ and will engage in their consumer optimal search process.  Whenever a business $j$ is inspected, say for the $r$'th time, the outcome is taken to be the $r$'th element of the sequence of fit variables from $\Omega_j$. If the inspection passes and a transaction with business $j$ occurs, say for the $t$'th time, the quality realization is likewise taken to be the $t$'th quality realization entry in $\Omega_j$. This fully determines the progression of history $\hist$.

    Write $H|^{\mM}_j$ for the projection of history $\hist$ onto business $j$. By construction, $H|^{\mM}_j$ is a prefix of $\Omega_j$. By Proposition~\ref{prop:market.learning}, almost surely each business is either learned or lost, and the projected history $H|^{\mM}_j$ is finite (and a strict prefix of $\Omega_j$) if business $j$ is lost, but infinite (and equal to $\Omega_j$) if business $j$ is learned.

    We generate history $H^{\tmM}$ in the same way, noting that the consumer optimal search process may resolve differently due to the modified search cost. Nevertheless, it is still the case that $H|^{\tmM}_j$ is a prefix of $\Omega_j$, and equal to $\Omega_j$ if business $j$ is learned.

    It will now be convenient to talk about posterior estimates for a business evaluated after a certain number of searches of that business.  To this end, write $\fitprobest_j(H, z)$ to denote $\fitprobest_j(H_t)$ where $t$ is the round in which business $j$ is searched for the $z$th time in history $H$, and similarly for $\hat{Q}_j(H, z)$.  Then,
    since $H|^{\mM}_j$ and $H|^{\tmM}_j$ agree up to the minimum of their lengths, by Lemma~\ref{lem:indep.signals} we have that $\fitprobest_j(\hist, z) = \fitprobest_j(H^{\tmM} z)$ and $\hat{Q}_j(\hist, z) = \hat{Q}_j(H^{\tmM}, z)$ for all $z$ such that business $j$ is searched at least $z$ times in each history.

    By Proposition~\ref{prop:market.learning}, business $j$ is lost in market $\tmM$ if and only if there exists $z$ such that
    \begin{align}\label{eq:lost.1}
    \overline{V} \hat{Q}_j(H^{\tmM}, z) - P_j - \frac{\tilde{c}}{\fitprobest_j(H^{\tmM},z)} \leq 0.
    \end{align}
    But now, considering whether $j$ is lost in history $\hist$ of market $\mM$, note that if business $j$ is searched at least $z$ times in $\hist$ then \eqref{eq:lost.1} would imply
    \begin{align*}
        \overline{V} \hat{Q}_j(\hist, z) - P_j - \frac{c}{\fitprobest_j(\hist,z)}
        & = \overline{V} \hat{Q}_j(H^{\tmM}, z) - P_j - \frac{c}{\fitprobest_j(H^{\tmM},z)}\\
        & \leq \overline{V} \hat{Q}_j(H^{\tmM}, z) - P_j - \frac{\tilde{c}}{\fitprobest_j(H^{\tmM},z)}\\
        &\leq 0
    \end{align*}
    where the first equality follows because
    since $\hat{Q}_j(H^{\tmM}, z) = \hat{Q}_j(H^{\mM}, z)$ and $\fitprobest_j(H^{\tmM},z) = \fitprobest_j(H^{\mM},z)$ from our coupling, the second inequality follows because $\tilde{c} \leq c$, and the third inequality is \eqref{eq:lost.1}.  Thus, if business $j$ is lost in history $H^{\tmM}$, then business $j$ is lost in history $\hist$ as well.  We conclude that $\learnedS(\tilde{M}) \supseteq \learnedS(\mM)$ under our coupling.

    Finally, since $\learnedS(\tmM) \supseteq \learnedS(\mM)$ and $\tilde{c} \leq c$, each consumer performing optimal search in the steady-state of $\tmM$ has only a larger set of learned businesses to inspect, and a reduced cost of each such search.  Thus, any chosen search order for learned businesses in $\mM$ could have been selected in $\tmM$ as well at a weakly improved expected utility.  Optimizing over all potential search orders, we conclude that the long-run expected consumer utility in $\tmM$ under history $H^{\tmM}$ is therefore only greater than that of $\mM$ under $H^{\mM}$, yielding $U(\tmM) \geq U(\mM)$ over our coupling.
\end{proof}

\subsection{Proof of Proposition~\ref{prop:cost.match.shift}}
\begin{proof}
We will use the same coupling between histories as in the proof of Theorem~\ref{thm:cheaper.search}.
Fix a pair of coupled histories $H^{\mM}$ and $H^{\tmM}$. We then have $\learnedS(\tmM) \supseteq \learnedS(\mM)$. Since $\lambda_{j'}(V; \mM) < \lambda_{j'}(V; \tmM)$ we know $\lambda_{j'}(V; \tmM) > 0$, so $j' \in \learnedS(\tmM)$. Similarly, $j \in \learnedS(\mM)$ and hence (since $\learnedS(\tmM) \supseteq \learnedS(\mM)$) we must have $j \in \learnedS(\tmM)$ as well.

If $j' \not\in \learnedS(\mM)$ then we are done, so assume $j' \in \learnedS(\mM)$.  We then have that neither $j$ nor $j'$ is lost in either history $H^{\mM}$ or $H^{\tmM}$.

In this case, from the definition of the optimal consumer search process, $\lambda_j(V; \mM) > \lambda_j(V; \tmM)$ and $\lambda_{j'}(V; \mM) < \lambda_{j'}(V; \tmM)$ imply that, for a consumer with value $V$, business $j$ is searched before $j'$ in market $\mM$ but business $j'$ is searched before $j$ in market $\tmM$.  Since businesses $j$ and $j'$ are both learned in both markets, the only difference between the search indices in these two markets is the search cost.  Our index rule then implies
    \[ V_i Q_j - P_j - \frac{c}{\pi_j} \geq V_i Q_{j'} - P_{j'} - \frac{c}{\pi_{j'}} \]
    and
    \[V_i Q_{j'} - P_{j'} - \frac{c'}{\pi_{j'}}  \geq V_i Q_j - P_j - \frac{c'}{\pi_j}. \]
This implies
\[ (c-c')/\pi_{j'} \geq (c-c')/\pi_{j} \]
and hence $\pi_j \geq \pi_{j'}$.  And since business $j'$ is searched first at cost $c'$ even though $\pi_{j'} \leq \pi_{j}$, we must also have $V Q_{j'} - P_{j'} \geq V Q_{j} - P_j$.
\end{proof}

\subsection{Proof of Proposition~\ref{prop:inform.bad}}
\begin{proof}
    We will construct an explicit example.  Market $\mM$ has a single business.  That business has the following prior distribution $\priorF$ over fit probability $\pi$: $\pi$ is either $\epsilon > 0$ or $1-\epsilon$ with equal probability, where $\epsilon$ is arbitrarily small.  The quality of the business is known to be $Q = 1/2$.  The business price is $P = 0$, the consumer value distribution is at point mass at $V=1$, and the cost of search is $c = 1/5$. In our market realization the fit probability of the business will be $\pi = 1-\epsilon$.

    In market $\mM$, the prior mean for $\pi$ is $\hat{\pi}=1/2$. The first consumer therefore has search index $VQ-c/(1/2) = (1/2)-2c$, which is positive. Upon searching the business, the posterior estimate on $\pi$ rises to $1 - O(\epsilon)$ after a successful search, or $O(\epsilon)$ after an unsuccessful search. In the latter case, the search index on the updated posterior estimate becomes negative and the business is lost.
    In the former case, by symmetry, the business subsequently becomes lost only if, at some point in the history, the total number of exploration failures observed exceeds the number of exploration successes.\footnote{Recall that the quality parameter is publicly known, so observed quality realizations do not impact beliefs.} This is precisely the probability that a biased random walk, starting at $1$ and incrementing with probability $1-\epsilon$ and decrementing with probability $\epsilon$, ever reaches $-1$. As $\epsilon \to 0$, the likelihood of this event is at most $O(\epsilon^2)$.
    We conclude that, if the business has type $\pi = 1-\epsilon$, then with probability at least $(1 - O(\epsilon))$ the business will be fully learned by the market, resulting in a long-run consumer surplus of $(1/2) - c/(1-\epsilon)$.

    Market $\tmM$ is identical, except for the following changes:
    \begin{itemize}
        \item $Q=1$,
        \item the business fit probability $\pi$ is now either $\epsilon/2$ or $(1-\epsilon)/2$ with equal probability, and
        \item in our market realization we have $\pi = (1-\epsilon)/2$.
    \end{itemize}
    Note that $\tmM$ has more informative search, since we can interpret this change as shifting a screen that passes with probability $1/2$ to the inspection outcome.

    In market $\tmM$, the prior mean for $\pi$ is $\hat{\pi}=1/4$. The first consumer has search index $VQ-c/(1/4) = 1-4c$, which is positive. Upon searching the business and observing a negative outcome, the posterior distribution on $\pi$ is that $\pi = \epsilon/2$ with probability $2/3$ and $\pi = (1-\epsilon)/2$ with probability $1/3$, for a posterior mean of $1/6 + O(\epsilon)$.  The search index for subsequent consumers then becomes $1 - 6c + O(\epsilon)$ which is negative for sufficiently small $\epsilon$, so the business is lost.
    Note further that, conditional on the business having type $\pi = (1-\epsilon)/2$, the probability of the first search being negative is at least $1/2$.
    We conclude that, since the business has fit probability $\pi = (1-\epsilon)/2$, then with probability at least $1/2$ the business will be lost.

    We conclude that the (unique) business is more likely to be lost in market $\tmM$, and long-run consumer surplus is therefore reduced.
\end{proof}

\subsection{Proof of Theorem~\ref{thm:market.learning.transcripts}}
\begin{proof}
    We note first that, since individual screen outcomes are revealed during search and post-transaction feedback, a history $H$ will now contain this additional information.  Given a market, we will write $\fitprobest_{j\ell}(H)$ for the posterior mean estimate of $\fitprob_{j\ell}$ given history $H$.  We then have, in market $\mM$, $\fitprobest_{j}(H) = \prod_{\ell \leq k}\fitprobest_{j\ell}(H)$ and $\hat{Q}_{j}(H) = \prod_{\ell > k}\fitprobest_{j\ell}(H)$ at each history, and similarly for market $\tmM$ with $\tilde{k}$ instead of $k$.

    The remainder of the argument will closely follow the proof of Theorem~\ref{thm:cheaper.search}.
    We will couple the histories $H^{\mM}$ and $H^{\tmM}$ as before, except that now the projection of each history onto business $j$ includes, each time business $j$ is searched by some consumer $i$, the sequence of screens up to the first screen $\ell$ such that $f_{ij\ell}=0$. Thus, to independently draw a history $H|_j$ projected onto each business $j$, we draw $r$ infinite sequences of binary outcomes, one for each screen in the screen sequence.  Each each outcome in sequence $\ell$ is an independent binary variable that is $1$ with probability $\pi_{j\ell}$.  We write $\Omega_j$ for this collection of infinite sequences.

    To generate history $H^{\mM}$, each consumer $i$ will have value $V_i$, and will engage in their consumer optimal search process.  Whenever a business $j$ is searched, the outcome is generated using our pre-determined sequences $\Omega_j$ taking the next outcome in the first sequence for $f_{ij1}$, then the next outcome in the second for $f_{ij2}$, etc., until the first $0$ is found or until $k$ have been drawn. If all $k$ are drawn and equal to $1$, then the inspection passes and a transaction with business $j$ occurs. The quality realization is likewise constructed using $\Omega_j$, by using the next elements in the outcome sequences to determine $f_{ij(k+1)}$, $f_{ij(k+2)}$, etc., again until the first $0$ or until all have been checked.

    By construction, $H|^{\mM}_j$ is a prefix of $\Omega_j$ for all $j$. The projected history $H|^{\mM}_j$ is finite (and a strict prefix of $\Omega_j$) if business $j$ is lost, but infinite (and equal to $\Omega_j$) if business $j$ is learned.

    We generate history $H^{\tmM}$ in the same way, noting that the consumer optimal search process may resolve differently. Nevertheless, it is still the case that $H|^{\tmM}_j$ is a prefix of $\Omega_j$, and equal to $\Omega_j$ if business $j$ is learned.

    It will now be convenient to talk about posterior estimates for a business evaluated after a certain number of searches of that business.  To this end, for each screen $\ell$, write $\fitprobest_{j\ell}(H, z)$ to denote the market's posterior estimate for $\fitprob_{j\ell}$ after the round in which business $j$ is searched for the $z$th time in history $H$.
    since $H|^{\mM}_j$ and $H|^{\tmM}_j$ agree up to the minimum of their lengths, by Lemma~\ref{lem:indep.signals} we have that $\fitprobest_{j\ell}(H^{\mM}, z) = \fitprobest_{j\ell}(H^{\tmM} z)$ for all $z$ such that business $j$ is searched at least $z$ times in each history.

    We then have, by Proposition~\ref{prop:market.learning}, that business $j$ is lost in history $H^{\tmM}$ for if and only if there exists $z$ such that
    \begin{align*}
    \overline{V} \hat{Q}_j(H^{\tmM}, z) - P_j - c/\fitprobest_j(H^{\tmM},z) \leq 0
    \end{align*}
    which, from the definition of the screening process, can be rewritten as
    \begin{align}
    \label{eq:lost.2}
    \overline{V} \left( \prod_{\ell > \tilde{k}}\fitprobest_{j\ell}(H^{\tmM}, z)\right) - \frac{c}{\left(\prod_{\ell \leq \tilde{k}}\fitprobest_{j\ell}(H^{\tmM},z)\right)} \leq P_j.
    \end{align}
    But now, considering whether business $j$ is lost in history $H^{\mM}$, note that if business $j$ is searched at least $z$ times then
    \begin{align*}
        &\overline{V} \hat{Q}_j(H^{\mM}, z) - \frac{c}{\fitprobest_j(H^{\mM},z)} \\
        =\ &\overline{V} \left( \prod_{\ell > k}\fitprobest_{j\ell}(H^{\mM}, z)\right) - \frac{c}{\left(\prod_{\ell \leq k}\fitprobest_{j\ell}(H^{\mM},z)\right)} \\
        =\ &\left( \prod_{k < \ell \leq \tilde{k}}\fitprobest_{j\ell}(H^{\mM}, z) \right) \left( \overline{V} \left( \prod_{\ell > \tilde{k}}\fitprobest_{j\ell}(H^{\mM}, z)\right) - \frac{c}{\left(\prod_{\ell \leq \tilde{k}}\fitprobest_{j\ell}(H^{\mM},z)\right)} \right) \\
        =\ &\left( \prod_{k < \ell \leq \tilde{k}}\fitprobest_{j\ell}(H^{\mM}, z) \right) \left( \overline{V} \left( \prod_{\ell > \tilde{k}}\fitprobest_{j\ell}(H^{\tmM}, z)\right) - \frac{c}{\left(\prod_{\ell \leq \tilde{k}}\fitprobest_{j\ell}(H^{\tmM},z)\right)} \right) \\
        \leq\ &\left( \prod_{k < \ell \leq \tilde{k}}\fitprobest_{j\ell}(H^{\mM}, z) \right) P_j\\
        \leq\ &P_j
    \end{align*}
    where the first equality follows from the definition of markets with transcripts, the third equality uses the fact that $\fitprobest_{j\ell}(H^{\mM},z) = \fitprobest_{j\ell}(H^{\tmM},z)$ from our coupling, the subsequent inequality is \eqref{eq:lost.2}, and the final inequality follows because $\prod_{k < \ell \leq  \tilde{k}}\fitprobest_{j\ell}(H^{\mM}, z) \leq 1$.

    We conclude that if business $j$ is lost in history $H^{\tmM}$, then business $j$ is lost in history $H^{\mM}$ as well.  That is, $\learnedS(\tmM) \supseteq \learnedS(\mM)$ under our coupling of the histories.

    Finally, since $\learnedS(\tmM) \supseteq \learnedS(\mM)$, each consumer performing optimal search in the steady-state of market $\tmM$  has only a larger pool of learned businesses from which to choose an order of inspection, relative to market $\mM$.  So any search order chosen for the optimal consumer search policy in the steady state of market $\mM$ could be implemented in the steady state of $\tmM$ as well.  Suppose that, in this optimal policy for market $\mM$, the probability that a consumer with value $V$ searches learned business $j$ is $\mu_j(V)$.  Then the long-run expected consumer utility in the steady-state for market $\mM$, for a consumer with value $V$, is
    \begin{align*}
    U(\mM\ |\ V) &=
    \sum_j \lambda_j(V) \left( V Q_j - \frac{c}{\pi_j} \right)\\
    &= \sum_j \lambda_j(V) \left( V \prod_{\ell \leq k}\fitprob_{j\ell} - \frac{c}{\prod_{\ell > k}\fitprob_{j\ell}} \right)
    \end{align*}
    whereas, if the consumer employed the same strategy in the steady-state for market $\tmM$, they would obtain value
    \begin{align*}
    \sum_j \lambda_j(V) \left( V \prod_{\ell \leq \tilde{k}}\fitprob_{j\ell} - \frac{c}{\prod_{\ell > \tilde{k}}\fitprob_{j\ell}} \right)
    & = \sum_j \lambda_j(V) \left( \prod_{k < \ell \leq \tilde{k}}\fitprob_{j\ell} \right)^{-1} \left( V \prod_{\ell \leq k}\fitprob_{j\ell} - \frac{c}{\prod_{\ell > k}\fitprob_{j\ell}} \right)\\
    & \geq \sum_j \lambda_j(V) \left( V \prod_{\ell \leq k}\fitprob_{j\ell} - \frac{c}{\prod_{\ell > k}\fitprob_{j\ell}} \right)\\
    & = U(\mM\ |\ V).
    \end{align*}
    Since the consumer optimal search policy in market $\tmM$ can only be better than this policy, we conclude that a consumer with value $V$ obtains weakly higher utility in market $\tmM$ under our coupling of the histories.  Taking expectations over consumer value yields $U(\tmM) \geq U(\mM)$ under our coupling of the histories.
\end{proof}

\section{Missing Proofs from Section~\ref{sec:strategic.firms}: Endogenizing Price}

\subsection{Proof of Proposition~\ref{prop:market.learning.pricing}}
\begin{proof}
    The proof closely follows the proof of Proposition~\ref{prop:market.learning}.
    The only change is in the final step of the proof, which shows that every business that is inspected only finitely many times in history $\hist$ must be lost.

    Given history $\hist$, consider any business $j$ and consumer $i$ for which $\sigma_j(0, \fitprobest_{ij}, \hat{Q}_{ij}, \overline{V}) > 0$.  Since $P_j$ is a demand-responsive pricing rule, it must be that $\sigma_j(P_j(\hist_{i-1}), \fitprobest_{ij}, \hat{Q}_{ij}, \overline{V}) > 0$ as well.  Since the probability of fit is bounded away from $1$ for all businesses, this means that business $j$ would be inspected by consumer $i$ with positive probability bounded away from $0$.  As $\fitprobest_{ij}$ and $\hat{Q}_{ij}$ do not change until business $j$ is inspected, this implies that business $j$ will be inspected in some round at or after round $i$ with probability $1$.
    Thus, if $j$ is inspected only finitely many times, then there must exist some $i$ for which $\sigma_j(0, \fitprobest_{ij}, \hat{Q}_{ij}, \overline{V}) \leq 0$, as otherwise $j$ is inspected infinitely often.  This implies that every business that isn't learned is lost, as required.
\end{proof}

\subsection{Proof of Proposition~\ref{prop:market.lost.pricing}}

\begin{proof}
We will couple histories $H^{\mM}$ and $H^{\tmM}$ in the same manner as in the proof of Theorem~\ref{thm:cheaper.search}, by independently drawing the sequence of consumer values and the sequences of fit and quality realizations for each business, $\Omega_j$.
These values and projected histories determine $H^{\tmM}$ and (as in the proof of Theorem~\ref{thm:cheaper.search}) we have that $H|^{\tmM}_j$ is a prefix of $\Omega_j$ for each $j$.  The values and projected histories also determine $\hist$, given the pricing rules assigned to the businesses.  We again have that $H|^{\mM}_j$ is a prefix of $\Omega_j$, finite only if business $j$ is lost.

Under this coupling, suppose business $j$ is lost in a history $\hist$ for market $\mM$.  Then there exists some $z \geq 1$ such that, after the business has been inspected $z$ times (say in round $t$), $\sigma_j(0, \fitprobest_j(\hist_t), \hat{Q}_j(\hist_t), \overline{V}) \leq 0$.

Now suppose that business $j$ is inspected at least $z$ times in the coupled history $H^{\tmM}$ of market $\tmM$, say with the $z$th inspection occurring in round $t'$.  Then since $H|^{\tmM}_j$ and $H|^{\mM}_j$ agree on the first $z$ elements, Lemma~\ref{lem:indep.signals} implies that  $\fitprobest_j(H^{\tmM}_{t'}) = \fitprobest_j(H^{\mM}_t)$ and $\hat{Q}_j(H^{\tmM}_{t'}) = \hat{Q}_j(H^{\mM}_t)$. But now, if $P_j$ is the fixed price assigned to business $j$ in $\tmM$, we must have
\begin{align*}
\sigma_j(P_j, \fitprobest_j(H^{\tmM}_{t'}), \hat{Q}_j(H^{\tmM}_{t'}), \overline{V})
&= \sigma_j(P_j, \fitprobest_j(H^{\mM}_{t}), \hat{Q}_j(H^{\mM}_{t}), \overline{V})\\
&\leq \sigma_j(0, \fitprobest_j(H^{\mM}_{t}), \hat{Q}_j(H^{\mM}_{t}), \overline{V})\\
&\leq 0
\end{align*}
and hence business $j$ is lost in history $H^{\tmM}$ as well.  We conclude that, under this coupling of histories, $\learnedS(\mM) \supseteq \learnedS(\tmM)$.
\end{proof}

\subsection{Proof of Lemma~\ref{lem:equivalent.market}}
\begin{proof}
        This result follows from analysis of~\cite{choi2018consumer} specialized to our setting.  We include a proof here for completeness.

    Consider two learned businesses $j$ and $j'$ in market $\mM$.  Given a choice of prices $P_j$ and $P_{j'}$ by the two businesses, under what conditions would the consumer transact with business $j$, and not business $j'$, in our market with search?  We consider separate cases:
    \begin{enumerate}
        \item If $F_{ij} = F_{ij'} = 1$, then since the consumer always transacts with an inspected business conditional on fit, the consumer transacts with business $j$ over business $j'$ if and only if they inspect business $j$ before business $j'$.  This occurs when $V_i Q_j - P_j - c/\pi_j \geq V_i Q_{j'} - P_{j'} - c/\pi_{j'}$.
        \item If $F_{ij} = 0$, the consumer will never transact with business $j$.
        \item If $F_{ij} = 1$ but $F_{ij'} = 0$, the consumer will never transact with business $j'$, so will transact with business $j$ as long as inspecting business $j$ is preferable to the outside option value of $0$.  This occurs if $V_i Q_j - P_j - c/\pi_j \geq 0$.
    \end{enumerate}
    In all cases, the consumer behavior is identical to one in which the consumer's (known, without inspection) value for business $j$ is $W_{ij} = V_i Q_{j} - c/\pi_j$ when $F_{ij} = 1$, or $W_{ij} = 0$ when $F_{ij} = 0$, and the value for business $j'$ is $W_{ij'} = V_i Q_{j'} - c/\pi_{j'}$ when $F_{ij'} = 1$ or $W_{ij'} = 0$ otherwise.

    Applying this argument to each pair of businesses separately, we conclude that expected seller payoffs from any profile of prices in the market with consumer search is equal to the expected seller payoffs from the same profile of prices in the equivalent market without search. The set of mixed equilibria for the two markets must therefore coincide.
\end{proof}

\subsection{Proof of Theorem~\ref{thm:symmetric.equil}}

\begin{proof}
    Suppose first that the revenue-maximizing price $p^*$ is unique.  We will explicitly solve for equilibrium.  In this equilibrium, each business will have price distribution supported on $[0,p^*]$ with no atom at $p^*$.  Under this assumption, a business that sets price $p^*$ will have price strictly higher than all other businesses, which means that they will sell only if the consumer's realized effective value $W_{ij}$ is $0$ for all other businesses, which occurs with probability $(1-\pi)^{n-1}$.  Conditional on this event, they act as a monopolist and achieve revenue $R^* = \max_p p D(p)$.

    Each firm will randomize their choice of price in a range $[\underbar{p}, p^*]$, where $\underbar{p}$ will be specified later.
    Write $G(p)$ for the CDF of this distribution over prices.  Then, if business $j$ sets price $p$, they obtain revenue $p D(p)$ from an incoming consumer if, for each business that chooses a price lower than $p$, the consumer's effective value resolves to $0$.  In other words, there is no business with both price lower than $p$ and a non-zero effective consumer value.  This results in an expected revenue of
    \[ (1 - \pi G(p))^{n-1} p D(p). \]
    Our equilibrium condition therefore requires that this quantity be equal to the equilibrium revenue of $(1-\pi)^{n-1} p^* D(p^*)$. Solving for $G(p)$ yields
    \[ G(p) = \frac{1}{\pi}\left[ 1 - (1-\pi) \left(\frac{p^* D(p^*)}{p D(p)}\right)^{\frac{1}{n-1}} \right]. \]
    The floor price $\underbar{p}$ is then defined to be the maximum value such that $G(\underbar{p}) = 0$.

    Each business is indifferent between prices in $[\underbar{p}, p^*]$, by construction.  Regularity of the value distribution further implies that any price lower than $\underbar{p}$ leads to only lower revenue, as the monopolist revenue decreases and the business is guaranteed to have the lowest price at any price below $\underbar{p}$.  Furthermore, any price $p > p^*$ leads to a revenue of $(1-\pi)^{n-1} p D(p)$, which is at most $(1-\pi)^{n-1}p^* D(p^*)$ by optimality of $p^*$.  We conclude that no seller has an improving deviation, and hence price distribution $G$ is indeed an equilibrium.

    To show uniqueness, we first note that pricing above $p^*$ is dominated for any business $j$ regardless of the pricing decisions of others, since setting a price higher than $p^*$ can only reduce revenue for a monopolist, and the presence of other businesses can only advantage lower prices relative to the case where business $j$ is a monopolist.  The price distribution must also satisfy a ``no-atoms'' property, since otherwise (if there were an atom at some price $p$) it would be strictly improving for a seller to deviate from price $p$ to a price $p-\epsilon$ for sufficiently small $\epsilon > 0$, as doing so would strictly increase the probability of transaction with an arbitrarily small reduction in revenue conditional on transaction.  Similarly, the price distribution must satisfy a ``no-gaps'' property where the distribution is supported on a contiguous interval of prices, since otherwise (if the distribution is supported on two or more subintervals) it would be strictly improving for a business to deviate from posting the highest price in one subinterval to the lowest price in the next-highest subinterval, as doing so would increase the monopolist revenue without influencing the distorting effect of competition.

    We conclude that the equilibrium must have a CDF supported on some range $[p_1, p_2]$, of the form
    \[ G(p) = \frac{1}{\pi}\left[ 1 - (1-\pi) \left(\frac{p_2 D(p_2)}{p D(p)}\right)^{\frac{1}{n-1}} \right], \]
    with $p_2 \leq p^*$.  But if $p_2 < p^*$ then it would be strictly beneficial to deviate from price $p_2$ to price $p^*$, as doing so increases the monopolist revenue without influencing the distorting effect of competition.  We must therefore have $p_2 = p^*$.  The no-atoms property then implies that we must have $p_1 = \underline{p}$.  We conclude that our constructed equilibria above is the only equilibrium.

    Finally, we claim that even if the revenue-maximizing price $p^*$ is not unique, then the construction above is still well-defined and unique.  Indeed, suppose all prices in some range $(p_1, p_2]$ are revenue-maximizing. Then if we choose $p^* = p_2$ in our construction above then we will have $G(p) = 1$ for all $p \in [p_1, p_2]$.  In other words, any choice of revenue-maximizing price $p^*$ will result in the same distribution over prices in our construction, and hence the same equilibrium.
\end{proof}

\subsection{Proof of Theorem~\ref{thm:cheaper.search.pricing}}
\begin{proof}
We begin with some preliminary notation. Let $\lambda = (c-c')$.  Then since $\tilde{W}_{ij} = V_i Q - \tilde{c}/\pi$ with probability $\pi$ and $W_{ij} = V_i Q - c/\pi$ with probability $\pi$, there is a natural coupling in which $W_{ij} = \tilde{W}_{ij} = 0$ with probability $1-\pi$, and otherwise $W_{ij} = \tilde{W}_{ij} - \lambda$.  In other words, $\tilde{W}_{ij}$ is distributed like $W_{ij}$ but with all non-zero values increased by a constant additive shift of $\lambda$.

Since $W_{ij}$ and $\tilde{W}_{ij}$ are drawn from the same distribution up to an additive shift, $D(p) = \tilde{D}(p+\lambda)$ for all $p$, and hence $\tilde{p}(q) = p(q) + \lambda$ for all $q \in [0,1]$.  This implies that, for all $q$,
\[ \tilde{R}(q) = q \cdot \tilde{p}(q) = q \cdot p(q) + \lambda q = R(q) + \lambda q. \]

Define $R^* = \max_q R(q)$ and $\tilde{R}^* = \max_q \tilde{R}(q)$ for the monopolist optimal revenue in markets $M$ and $\tilde{M}$, respectively.  Then since $\tilde{R}(q) = R(q) + \lambda q \geq R(q)$ for all $q$, we must have $R^* \leq \tilde{R}^*$.  By Theorem~\ref{thm:symmetric.equil}, if no businesses are lost, then total business revenue in the symmetric equilibria of markets $M$ and $\tilde{M}$ are $n (1-\pi)^{n-1} R^*$ and $n (1-\pi)^{n-1} \tilde{R}^*$, respectively. So we conclude that total expected business revenue is weakly higher in $\tilde{M}$ than in $M$, conditional on no businesses being lost.

We now consider total welfare and consumer surplus.
Recall that we can couple the histories for markets $\mM$ and $\tmM$ so that $\learnedS(\tmM) \supseteq \learnedS(\mM)$, meaning that fewer businesses are lost in market $\tmM$.  We note that as more businesses are learned in the steady state for any given market, each consumer has more potential businesses with which to transact, and (from our characterization of the equilibrium price distribution) each business sets lower prices in the sense of first-order stochastic dominance.  This implies that both consumer surplus and total welfare increase as more businesses are learned in a market.  Thus, since $\learnedS(\tmM) \supseteq \learnedS(\mM)$, it will suffices to prove that total welfare and consumer surplus are higher in market $\tmM$ in a steady-state where $\learnedS(\tmM) = \learnedS(\mM)$; the general result then follows by adding any additional learned businesses to market $\tmM$, which can only further increase total welfare and consumer surplus.

We will therefore assume for the remainder that $\learnedS(\mM) = \learnedS(\tmM)$.  Further, since lost businesses do not impact the equilibrium or consumer outcomes, it is without loss of generality to drop lost businesses and assume that \emph{all} businesses are learned in $\mM$ and $\tmM$ (which also means, by symmetry, that all businesses are symmetric). We will therefore assume for the remainder, for notational simplicity, that all businesses are symmetric and learned.

Now let $G(p)$ and $\tilde{G}(p)$ denote the CDFs of the equilibrium distribution of prices in markets $\mM$ and $\tmM$.
To prove that total welfare and expected consumer surplus are both higher in $\tmM$ than in $\mM$, we claim that for every price $p$, $G(p) \leq \tilde{G}(p+\lambda)$.  That is, the probability that a business chooses a price less than $p$ in $\mM$ is at most the probability that a business chooses a price less than $p+\lambda$ in $\tmM$. This claim implies the desired weak increase in total welfare, since a consumer at any given quantile of its value distribution is at least as likely to make a transaction in market $\tmM$ as in market $\mM$, and at a higher value.  Moreover, the average consumer effective value for a transaction increases by exactly $\lambda$, whereas average prices increase by at most $\lambda$, so the total consumer utility must be weakly greater in $\tmM$ than in $\mM$ as well.

It remains to prove the claim that $G(p) \leq \tilde{G}(p+\lambda)$ for all $p$.
Let $q^*$ be the quantile at which $R(q)$ achieves its maximum, so that $R^* = R(q^*)$.  Similarly define $\tilde{q}^*$ to be the maximizer of $\tilde{R}(q)$. Since $R$ is concave and $\tilde{R}(q) = R(q) + \lambda q$, we must have $\frac{d}{dq}R(q^*) = 0$ and $\frac{d}{dq}\tilde{R}(q^*) = \lambda > 0$ and hence $\tilde{q}^* \geq q^*$.   See Figure~\ref{fig:cheaper.search.pricing}(a) for a visualization.

Fix any $q \in [0,1]$.  We then have
\begin{align*}
G(p(q)) &= \frac{1}{\pi}\left[ 1 - (1-\pi)\left( \frac{R(q^*)}{R(q)} \right)^{\frac{1}{n-1}} \right]
\leq \frac{1}{\pi}\left[ 1 - (1-\pi)\left( \frac{R(\tilde{q}^*)}{R(q)} \right)^{\frac{1}{n-1}} \right]
\end{align*}
where the inequality follows because $q^*$ is the maximizer of $R(q)$.  Next note that since $\tilde{R}(\tilde{q}^*) \geq R(\tilde{q}^*) \geq 0$ and $0 \geq \frac{d}{dq}\tilde{R}(q) \geq \frac{d}{dq}R(q)$ for all $q \geq \tilde{q}^*$, we must have
\begin{align*}
\frac{\tilde{R}(\tilde{q}^*)}{\tilde{R}(q)}
& = \frac{\tilde{R}(\tilde{q}^*)}{\tilde{R}(\tilde{q}^*) + \int_{\tilde{q}^*}^{q}\frac{d}{dz}\tilde{R}(z)dz} \\
& \leq \frac{\tilde{R}(\tilde{q}^*)}{\tilde{R}(\tilde{q}^*) + \int_{\tilde{q}^*}^{q}\frac{d}{dz}R(z)dz}\\
& \leq \frac{R(\tilde{q}^*)}{R(\tilde{q}^*) + \int_{\tilde{q}^*}^{q}\frac{d}{dz}R(z)dz}\\
& = \frac{{R}(\tilde{q}^*)}{{R}(q)}
\end{align*}
for all $q \geq \tilde{q}^*$, where the first inequality follows because $0 \geq \frac{d}{dz}\tilde{R}(z) \geq \frac{d}{dz}R(z)$ for all $z \in [\tilde{q}^*, q]$ and the second inequality follows because we subtract a positive amount (i.e., $\tilde{R}(\tilde{q}^*) - R(\tilde{q}^*)$) from both the numerator and the denominator, which will only increase the ratio.  See Figure~\ref{fig:cheaper.search.pricing}(b). This means in particular that
\begin{align*}
G(p(q)) & \leq \frac{1}{\pi}\left[ 1 - (1-\pi)\left( \frac{R(\tilde{q}^*)}{R(q)} \right)^{\frac{1}{n-1}} \right] \\
& \leq \frac{1}{\pi}\left[ 1 - (1-\pi)\left( \frac{\tilde{R}(\tilde{q}^*)}{\tilde{R}(q)} \right)^{\frac{1}{n-1}} \right] \\
& = \tilde{G}(\tilde{p}(q))\\
& = \tilde{G}(p(q) + \lambda)
\end{align*}
as required.
\end{proof}

\subsection{More Informative Search can Reduce Business Revenue}
\label{sec:informative.search.revenue.impact}
In this section we prove that, under endogenous pricing, more informative search can reduce business revenue but only by modifying the number of businesses that are lost. Conditional on the number of businesses lost, expected business revenue can only increase.

We need some additional notation. Let $\Rev(\mM,\Phi)$ be the expected total business revenue for market $\mM$ at equilibrium $\Phi$. Focusing on the canonical equilibrium, let
    $\Rev(\mM,\Phi) := \Rev(\mM,\,\CanEqm(\mM))$
be the \emph{canonical business revenue}.

\begin{proposition}
\label{prop:cheaper.search.pricing.revenue}
{Suppose symmetric markets $\mM,\tmM$ are identical, except for search cost $\tilde{c} < c$.
\begin{itemize}
    \item[(a)] There exist $\mM,\tmM$ such that $\tmM$ achieves lower canonical business revenue, in the sense that
    \[\E[\;\Rev(\tmM)\;] < \E\sbr{\Rev(\mM)}.\]
    \item[(b)] Let $S,\tilde{S}$ be any feasible realizations of the learned sets of $\mM,\tmM$, resp., such that $|S|=|\tilde{S}|$. Let $\Phi = \Phi(\mM,S)$, $\widetilde{\Phi} = \Phi(\tmM,\tilde{S})$ be the corresponding market equilibria. Then
           \[\Rev(\tmM, \widetilde{\Phi}) \geq  \Rev(\mM,\Phi). \]
\end{itemize}
}
\end{proposition}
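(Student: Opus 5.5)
Part (b) follows from the closed form in Theorem~\ref{thm:symmetric.equil}; part (a) needs an explicit example. For (b), take realizations $S,\tilde S$ with $|S|=|\tilde S|=n$. In a learned-symmetric market the learned businesses share a common type $(\pi,Q)$, and since $\mM,\tmM$ differ only in search cost, both markets face this same type. By Theorem~\ref{thm:symmetric.equil}, total canonical revenue is $n(1-\pi)^{n-1}R^*$ in $\mM$ and $n(1-\pi)^{n-1}\tilde R^*$ in $\tmM$. As in the proof of Theorem~\ref{thm:cheaper.search.pricing}, the effective value $\tilde W_{ij}$ equals $W_{ij}$ shifted up by $\lambda=c-\tilde c$ on the event of fit. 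This gives $\tilde R(q)=R(q)+\lambda q\ge R(q)$ for every quantile $q$, and hence $\tilde R^*\ge R^*$, which proves (b).

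For (a), I would build a market where lowering $c$ raises the number of learned businesses from $1$ to $2$ with substantial probability, and the revenue lost to competition outweighs the gain in $R^*$. Concretely, take $m=2$ businesses and $V\equiv 1$, so that $\mD^V$ is (approximately) a point mass; if needed, use a narrow regular density near $1$ and argue by continuity. Let the prior be two-point, as in the proof of Proposition~\ref{prop:inform.bad}: a ``good'' type $(\pi,Q)$ and a ``bad'' type with $\pi_{\text{bad}}=\epsilon$. Fix the realization so both businesses are good, with $\pi$ close to $1$ and $Q=1$. Then $W=1-c/\pi$ with probability $\pi$, so $R^*\approx \pi(1-c/\pi)=\pi-c$.

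\begin{itemize}
\item With one learned business, revenue is $\pi-c$.
\item With two learned businesses, total revenue is $2(1-\pi)(\pi-c)$, which is close to $0$ when $\pi\approx 1$.
\end{itemize}

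Next, tune $c$ and the prior using Proposition~\ref{prop:lost}. Under the higher cost $c$, a business should be lost with probability near $1/2$, for instance by having the first failed inspection drive the index to at most $0$. Under $\tilde c=0$, neither business is ever lost, since $\omega_j\to 0$ as $c\to 0$. Because losses are independent across businesses, $\mM$ has exactly one learned business with probability about $1/2$, which contributes expected revenue about $\tfrac12(\pi-c)$. Meanwhile $\tmM$ has two learned businesses almost surely, giving revenue about $2(1-\pi)\pi$, which is small. Choosing $\pi$ near $1$ with $c$ a fixed constant then yields $\E[\Rev(\tmM)]<\E[\Rev(\mM)]$.

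The main obstacle is choosing prior parameters so that under cost $c$ the good business is lost with probability bounded away from $0$ (and not $1$), while the prior index still starts positive. I would engineer this as in Proposition~\ref{prop:inform.bad}. Let the prior put mass on $\pi_{\text{good}}$ (the true value, near $1$) and $\pi_{\text{bad}}=\epsilon$. Choose $c$ so that the index is positive at the prior mean but nonpositive after a single failed inspection. Then the loss probability is at least $1-\pi_{\text{good}}$. To keep it bounded away from $0$, I would instead take $\pi_{\text{good}}$ moderate, e.g.\ $\pi=1/2$; the two-business revenue $2\cdot\tfrac12(\pi-c)$ then equals the one-business revenue, so the comparison needs care. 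If the constants do not separate, I would use $m=3$ or more good businesses, where competition shrinks revenue by a factor $(1-\pi)^{n-1}$ that beats the linear gain in $n$. I would compare $\E[n(1-\pi)^{n-1}](\pi-c)$ under a Binomial number of learned businesses against $m(1-\pi)^{m-1}(\pi-0)$ under $\tilde c=0$, and pick $m$ large so that the latter is tiny.
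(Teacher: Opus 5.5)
Your part (b) is the paper's argument. For $|S|=|\tilde S|=n$, Theorem~\ref{thm:symmetric.equil} gives total revenues $n(1-\pi)^{n-1}R^*$ and $n(1-\pi)^{n-1}\tilde R^*$. The upward shift of effective values on fit gives $\tilde R\ge R$ pointwise, hence $\tilde R^*\ge R^*$. (Since $W=VQ-c/\pi$ on fit, the shift is actually $(c-\tilde c)/\pi$, not $c-\tilde c$, but any nonnegative shift suffices.)

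Your fallback for (a) is also essentially the paper's example. You use many identical businesses with moderate fit probability, $\tilde c=0$ so nothing is lost in $\tmM$, and a cost $c$ under which each business is independently learned with some probability $\rho\in(0,1)$. The paper takes $\pi=1/2$, $Q=V=1$, a Beta$(1,1)$ prior and $c=1/3$. Your two-point prior $\{1/2,\epsilon\}$ with, say, $c=1/5$ serves equally well, as in Proposition~\ref{prop:inform.bad}.

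The gap is in the one step that actually proves (a). You argue that $\E[\Rev(\tmM)]=m(1-\pi)^{m-1}\pi$ becomes tiny as $m$ grows, because competition ``beats the linear gain in $n$''. But in $\mM$ the number of learned businesses is $N\sim\mathrm{Bin}(m,\rho)$, which also grows linearly in $m$. So $\E[\Rev(\mM)]$ is exponentially small too, and the smallness of one side decides nothing. Your $m=2$, $\pi\approx 1$ attempt has the same issue: the loss probability there is only about $1-\pi$, so both revenues are $\Theta(1-\pi)$ and only the constants matter.

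What you need is the exact decay rate, $\E[N(1-\pi)^{N-1}]=m\rho(1-\rho\pi)^{m-1}$. Equivalently, as the paper does, evaluate each business's revenue at $p^*$, which lies in the equilibrium support for every learned set, and note that each rival is independently learned and fitting with probability $\rho\pi$. This yields
\[
\frac{\E[\Rev(\mM)]}{\E[\Rev(\tmM)]}=\frac{\rho\,(\pi-c)}{\pi}\left(\frac{1-\rho\pi}{1-\pi}\right)^{m-1}\longrightarrow\infty \quad (m\to\infty),
\]
precisely because $\rho<1$. Lost businesses thin out competition, so revenue in $\mM$ decays at the slower rate $1-\rho\pi$. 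That comparison of rates, not the smallness of $\E[\Rev(\tmM)]$, is the mechanism behind (a). Once you add it, using $0<\rho<1$ and $\pi>c$ (which your parameters deliver), the argument is complete.

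Incidentally, the same formula with $m=2$ and $\rho\approx\pi\to 1$ gives a ratio near $(1+\pi)(\pi-c)\approx 2(1-c)$. So your first example also works whenever $c<1/2$, but for this reason, not the one you gave.
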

\begin{proof}
    That average revenue is higher in $\tmM$ than $\mM$, conditional on the number of businesses being lost, was shown as part of the proof of Theorem~\ref{thm:cheaper.search.pricing}.

    We now construct an example where average revenue is lower in $\tmM$ than in $\mM$.  In both markets there are $n$ businesses, each with $\pi_j=\pi=1/2$ and $Q_j=Q=1$. 
    Each consumer has value $V=1$ drawn from a point mass distribution. Each consumer knows that $Q=1$ for each business, but their belief about each $\pi_j$ is that it is drawn from a beta distribution with parameters $(1,1)$.

    In market $\tmM$ the search cost is $0$, so no business is ever lost and all are learned. Since the optimal monopolist revenue is $\pi$ (achieved at price $1$), the revenue obtained by each business is $\pi(1-\pi)^{n-1} = 2^{-n}$.

    In market $\mM$ the search cost is $1/3$.  For each business $j$, if the first investigation attempt is unsuccessful, the posterior belief about $\pi_j$ is a beta$(1,2)$ distribution which has mean $1/3$.  Under these beliefs the consumer would not search even at price $0$, so the business is lost.  Thus, if we write $\rho$ for the probability that a business is learned, we have $\rho \leq \pi$ (since a necessary condition for the business to be learned is that the first investigation of that business reveals a fit).

    The monopolist revenue in this market is $\pi(1-c) = 2\pi/3$, obtained by posting price $2\pi/3$.  Regardless of which businesses are learned or lost, we know that one best-response price for each business is the monopolist price.  Thus the expected equilibrium revenue for each businesses, evaluated in expectation over both the set of businesses lost and the resulting equilibrium of prices, is simply the expected revenue of posting the monopolist price. Since each business obtains positive revenue only if it is learned, and must compete only with other businesses that are learned, we conclude that the average revenue obtained by each business at equilibrium is
    \[ \frac{2}{3}\rho \pi (1 - \rho \pi)^{n-1}. \]
    The ratio between the average revenue generated in $\mM$ and the average revenue generated in $\tmM$ is therefore
    \[ \frac{2}{3} \times (2 \rho \pi) \times (2(1-\rho\pi))^{n-1}. \]
    Since $\pi = 1/2$ and $\rho$ depends only on the feedback generated for a single business, $\rho$ is a constant that does not depend on $n$.  Thus, since $\rho\pi < \pi = 1/2$, we have that $2(1-\rho\pi)) > 1$, so this ratio increases with $n$.  Thus, for sufficiently large $n$, the average expected revenue will be strictly higher in market $\mM$.
\end{proof}

\subsection{Proof of Proposition~\ref{prop:inform.search.pricing}}
\begin{proof}
We construct an explicit example.  In market $\mM$ there are two businesses, each with quality $Q = 1/2$ and fit probability $\pi = 1-\epsilon$ for some sufficiently small $\epsilon > 0$.  The consumer value distribution is a point mass at $1$.  The cost of search is $0$.

Market $\tmM$ is identical except that each business has quality $\tilde{Q} = 1$ and $\tilde{\pi}=(1-\epsilon)/2$. We note that $\tmM$ has more informative search, as we can interpret this difference as a screen with associated probability $\pi_k = 1/2$ shifting from the post-transaction category in market $\mM$ to the inspection category in market $\tmM$.

Since the cost of search is $0$ in both markets, no businesses are ever lost: both businesses will be learned in every steady-state of either market.

In market $\mM$, the monopolist optimal revenue is $R^* = \pi Q = (1/2)(1-\epsilon)$.  The revenue achieved by each business at equilibrium is $(1-\pi)R^* = (1/2)\epsilon(1-\epsilon)$.  Since the consumer always transacts conditional on there being at least one fit, and at least one fit occurs with probability $1 - \epsilon^2$, we have that the total surplus generated is $Q(1-\epsilon^2) = (1/2)(1-\epsilon^2)$.  The consumer surplus is therefore
\[ \frac{1}{2}(1-\epsilon^2) - 2 \times \frac{1}{2} \times (\epsilon)(1-\epsilon) = \frac{1}{2} - O(\epsilon). \]

In market $\tmM$, the monopolist optimal revenue is $\tilde{R}^* = \tilde{\pi} \tilde{Q} = (1/2)(1-\epsilon)$.  The revenue achieved by each business is $(1-\pi)R^* = (1/2)(1+\epsilon)(1/2)(1-\epsilon) = (1/4)(1-\epsilon^2)$. Since the consumer again always transacts conditional on there being at least one fit, and at least one fit occurs with probability $1 - (1-\pi)^2 = 1 - (1/4)(1+\epsilon)^2 = (3/4) - O(\epsilon^2)$, the total surplus generated is $(3/4) - O(\epsilon^2)$.  The consumer surplus is therefore
\[ \frac{3}{4} - O(\epsilon^2) - 2 \times \frac{1}{4} \times (1-\epsilon^2) = \frac{1}{4} - O(\epsilon^2). \]

Thus, for sufficiently small $\epsilon > 0$, consumer surplus is strictly higher at the symmetric equilibrium of market $\mM$ than that of market $\tmM$.
\end{proof}

\end{document}